\numberwithin{equation}{section}
\newcommand{\bigpare}[1]{\bigl(#1\bigr)}
\newcommand{\biggpare}[1]{\biggl(#1\biggr)}
\newcommand{\Bigpare}[1]{\Bigl(#1\Bigr)}
\newcommand{\bigbrac}[1]{\bigl[#1\bigr]}
\newcommand{\biggbrac}[1]{\biggl[#1\biggr]}
\newcommand{\normalset}[2]{\{#1\mid#2\}}
\newcommand{\bigset}[2]{\bigl\{#1\bigm|#2\bigr\}}
\newcommand{\norm}[1]{\| #1 \|}
\newcommand{\bignorm}[1]{\bigl\| #1 \bigr\|}
\newcommand{\bigabs}[1]{\bigl| #1 \bigr|}
\newcommand{\jap}[1]{\langle #1 \rangle}
\def\a{\alpha}
\def\b{\beta}
\def\d{\delta}
\def\e{\varepsilon}
\def\f{\varphi}
\def\g{\psi}
\def\i{\mbox{\raisebox{.5ex}{$\chi$}}}
\def\k{\kappa}
\def\l{\lambda}
\def\m{\mu}
\def\n{\nu}
\def\o{\omega}
\def\s{\sigma}
\def\t{\tau}
\def\x{\xi}
\def\y{\eta}
\def\z{\zeta}
\def\th{\theta}
\newcommand{\C}{\Gamma}
\newcommand{\G}{\Psi}
\renewcommand{\O}{\Omega}
\newcommand{\Op}{\mathrm{Op}}
\def\re{\mathbb{R}}
\def\co{\mathbb{C}}
\def\ze{\mathbb{Z}}
\def\pa{\partial}
\newcommand{\supp}{\mathrm{{supp}}}
\DeclareMathOperator*{\slim}{s-lim}
\newtheorem{thm}{Theorem}[section]
\newtheorem{lem}[thm]{Lemma}
\theoremstyle{definition}
\newtheorem{ass}{Assumption}
\theoremstyle{remark}
\newtheorem{rem}{Remark}[section]
\numberwithin{equation}{section}
\title[Scattering matrices for critically long-range perturbations]%
{Remarks on scattering matrices for Schr\"odinger operators with critically 
long-range perturbations}
\author{Shu Nakamura}
\address{
Department of Mathematics, Gakushuin University, 
1-5-1, Mejiro, Toshima, Tokyo, Japan 171-8588}
\email{shu.nakamura@gakushuin.ac.jp}
\thanks{The work is partially supported by JSPS Grant Kiban-B 15H03622. 
The work is inspired by discussions with Dimitri Yafaev during the author's staying at Isaac Newton Institute for Mathematical Sciences for the program: Periodic and Ergodic Spectral Problems, supported by EPSRC Grant Number EP/K032208/1. 
The author thanks Professor Yafaev for the valuable discussion, and the institute and the Simons Foundation for the financial support and its hospitality.
He also thanks Koichi Taira for finding errors in the first version of the paper. } 
\subjclass[2010]{58J50, 35P25, 81U05}
\date{\today}
\begin{document}
\maketitle

\begin{abstract}
We consider scattering matrix for Schr\"odinger-type operators on $\re^d$ with perturbation 
$V(x)=O(\jap{x}^{-1})$ as $|x|\to\infty$. We show that the scattering matrix (with time-independent 
modifiers) is a pseudodifferential operator, and analyze its spectrum. 
We present examples of which the spectrum of the scattering matrices have dense point spectrum, 
and absolutely continuous spectrum, respectively. 
\end{abstract}

\section{Introduction}
\label{section-introduction}

In this note, we consider the scattering matrices for Schr\"odinger-type operators
\[
H= H_0 + V\quad\text{on } \mathcal{H}=L^2(\re^d),
\]
where $H_0=p_0(D_x)$ is a Fourier multiplier, and $V=V^W(x,D_x)$ is a long-range 
perturbation of $H_0$. We will explain the general setup in the next section, and here 
we present our main results for the standard Schr\"odinger operators with potential perturbations, i.e., 
$H_0=-\frac12\triangle$, and $V=V(x)$. 
We say the potential $V(x)$ is a long-range perturbation, if $V(x)$ is a real-valued 
smooth function, and there is $\m\in (0,1]$ such that for any multi-index $\a\in \ze_+^d$, 
\[
\bigabs{\pa_x^\a V(x)}\leq C_\a \jap{x}^{-\m-|\a|}, \quad x\in\re^d,
\]
with some $C_\a>0$, where $\jap{x}=(1+|x|^2)^{1/2}$. 
We consider the case $\m\in (0,1)$ in another paper \cite{N2018}, 
and we concentrate on the case $\m=1$ in this paper. Namely, we suppose 

\begin{ass}\label{ass-potential}
$V(x)\in C^\infty(\re^d;\re)$, and for any $\a\in\ze^d$, there is $C_\a>0$ such that 
\[
\bigabs{\pa_x^\s V(x)}\leq C_\a\jap{x}^{-1-|\a|}, \quad x\in\re^d.
\]
\end{ass}

At first, we show the scattering matrix is a pseudodifferential operator, and compute the 
principal symbol. 

\begin{thm}\label{thm-Smatrix-representation}
Under Assumption~A, for any $\l>0$, the scattering matrix $S(\l)\in \mathcal{B}(L^2(S^{d-1}))$ 
is a pseudodifferential operator on $S^{d-1}$, and the principal symbol is given by 
\[
s_0(\l,x,\x) =\exp\biggpare{-i\int_{-\infty}^\infty (V(x+t\sqrt{2\l} \x)-V(t\sqrt{2\l}\x))dt}, 
\]
for $\x\in S^{d-1}$, $x\in T^*_\x S^{d-1}\simeq \x^\perp$. 
More precisely, if we write the symbol of $S(\l)$ by $s(\l,x,\x)$, then 
$s(\l,\cdot,\cdot)\in S^{\d}_{1,0}(T^*S^{d-1})$, and 
$s(\l,\cdot,\cdot)-s_0(\l,\cdot,\cdot)\in S^{-1+\d}_{1,0}(T^*S^{d-1})$ 
with any $\d>0$. 
\end{thm}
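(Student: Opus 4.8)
The plan is to reduce the statement to a stationary-phase analysis of the scattering matrix written in terms of time-independent modifiers (Isozaki–Kitada-type constructions). First I would recall the representation of $S(\l)$ via the Fourier integral operators $J_\pm$ that intertwine $H$ and $H_0$ modulo smoothing on the relevant energy shell: one has $S(\l) = \Gamma_0(\l) (J_+^* J_- - \text{(error)}) \Gamma_0(\l)^*$ up to lower-order terms, where $\Gamma_0(\l)$ is the trace operator onto the energy surface $\{p_0(\x)=\l\}$, which for $H_0=-\tfrac12\triangle$ is the sphere of radius $\sqrt{2\l}$. Because $\m=1$ is the critical exponent, the Isozaki–Kitada modifier is built not from a solution of the eikonal equation but from the ``time-dependent'' phase; the key object is the regularized classical action integral $\int^{\pm\infty}(V(x+t\sqrt{2\l}\x)-V(t\sqrt{2\l}\x))\,dt$, which converges precisely because the difference kills the non-integrable $\jap{x}^{-1}$ tail. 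I would first verify this convergence and the symbol estimates $\pa_x^\a\pa_\x^\b$ of this action: differentiating once in $x$ gains a factor $\jap{x+t\sqrt{2\l}\x}^{-1}$ worth of decay in $t$, so after one derivative the integral is genuinely absolutely convergent with the stated $S^0_{1,0}$-type bounds on $T^*S^{d-1}$ (the $\jap{\x}^\d$ loss coming from the logarithmic growth when $x$ is large but $\x$ stays bounded away from the radial direction).

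Next I would establish that $S(\l)$ is a pseudodifferential operator by composing the relevant FIOs. The operator $J_+^* J_-$ on the energy shell has a Schwartz kernel that is an oscillatory integral whose phase, after stationary phase in the ``radial'' and ``time'' variables, localizes to the diagonal $\x=\y$ on $S^{d-1}$ up to a change of variables (the classical scattering map), and for $\m=1$ this map is the identity on the sphere — the long-range effect is entirely in the amplitude, not in a reparametrization of $S^{d-1}$. Thus $S(\l)$ is a $\Psi$DO of order $\d$ for every $\d>0$ (the order is not genuinely negative because of the borderline logarithmic behavior), and I would package the composition using the standard $\Psi$DO calculus on the compact manifold $S^{d-1}$: write $S(\l)=\Op(s(\l))$ with $s(\l,\cdot,\cdot)\in S^\d_{1,0}(T^*S^{d-1})$.

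Then I would compute the principal symbol. The leading contribution of $J_+^*J_-$ on the energy shell is the exponential of the classical action along the free trajectory through the point $\sqrt{2\l}\,\x\in S^{d-1}_{\sqrt{2\l}}$ in the direction $\x$, with the impact-parameter shift $x\in\x^\perp$ accounted for by translating the straight-line trajectory; this produces exactly $s_0(\l,x,\x)=\exp\bigpare{-i\int_{-\infty}^\infty (V(x+t\sqrt{2\l}\x)-V(t\sqrt{2\l}\x))\,dt}$. The subleading terms come from (i) the amplitude corrections in $J_\pm$ beyond the leading WKB term, which are $O(\jap{x}^{-1})$ better, (ii) the half-density/Maslov factors, which for the free flow are trivial, and (iii) the error $R(\l)$ in the intertwining relation, which is smoothing and hence contributes to $S^{-\infty}$; collecting these gives $s(\l)-s_0(\l)\in S^{-1+\d}_{1,0}$.

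The main obstacle will be controlling the oscillatory-integral composition \emph{uniformly} in the impact parameter $x$ when $\m=1$: unlike the short-range case, the amplitude does not decay in $x$, so one cannot simply invoke $L^2$-boundedness of a nice FIO — one must track the logarithmic growth carefully through the stationary-phase expansion and show it is absorbed by the $\jap{\x}^\d$ factor in the symbol class. A secondary technical point is justifying that the non-leading parts of the Isozaki–Kitada parametrix, together with the cutoffs separating incoming/outgoing regions, contribute only to $S^{-1+\d}_{1,0}$ and not to the principal symbol; this requires a careful bookkeeping of how each $x$-derivative improves the $t$-decay of the action integral, which I would do by induction on $|\a|$ using Assumption~A.
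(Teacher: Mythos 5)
Your skeleton is broadly the right one---time-independent modifiers built from the regularized action $\int(V(x+t\sqrt{2\lambda}\,\xi)-V(t\sqrt{2\lambda}\,\xi))\,dt$, the convergence and derivative estimates for that phase, and the identification of the principal symbol---but there is a genuine gap at the step where you dispose of the error terms. The stationary representation of the scattering matrix is not $\Gamma_0(J_+^*J_-)\Gamma_0^*$ plus something smoothing: it contains a term of the form $\Gamma_0(\lambda)\,T_+^*\,(H-\lambda-i0)^{-1}\,T_-\,\Gamma_0(\lambda)^*$ with $T_\pm=HJ_\pm-J_\pm H_0$, and the $T_\pm$ are \emph{not} smoothing. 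The construction only makes $HJ_\pm-J_\pm H_0$ rapidly decaying inside the outgoing/incoming regions $\pm\cos(x,v(\xi))>-1+\e$; the derivatives of the angular, radial and energy cutoffs contribute bounded, non-decaying symbols supported in the transition regions. Showing that the resolvent term is a pseudodifferential operator of lower order requires a microlocal resolvent estimate (propagation bounds for $(H-\lambda-i0)^{-1}$ at fixed energy between incoming and outgoing phase-space regions); this is the analytic crux of the proof---the paper imports it from \cite{N2017}---and it is entirely absent from your proposal. Your sentence asserting that ``the error in the intertwining relation is smoothing and hence contributes to $S^{-\infty}$'' is exactly where the argument breaks.

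A second, less serious point is one of framing. You cast $J_\pm$ as Fourier integral operators and propose a stationary-phase composition whose canonical relation happens to be the identity. The paper's route for $\mu=1$ deliberately avoids FIO machinery: since $\gamma_\pm$ grows only like $\log\jap{x}$ while $\partial_x^\alpha\gamma_\pm=O(\jap{x}^{-|\alpha|})$ for $\alpha\neq0$, the amplitude $e^{i\gamma_\pm}(1+a_1^\pm+\cdots)$ lies in the exotic class $S(1,\tilde g)\subset S(\jap{x}^{\delta},g)$ for every $\delta>0$, so $J_\pm$, the composition $J_+^*J_-$, and the restriction to the energy shell are handled entirely within pseudodifferential calculus, exactly as in the short-range argument of \cite{N2016}. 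This is what makes $\mu=1$ tractable without the FIO formulation of \cite{N2018}, and it is the reason the conclusion lands in $S^{\delta}_{1,0}$ for all $\delta>0$ rather than $S^{0}_{1,0}$. The uniform control ``in the impact parameter'' that you single out as the main obstacle is precisely what the $S(1,\tilde g)$ calculus packages for free; if you repair the resolvent step as above, the rest of your outline can be carried out without any stationary-phase or Maslov bookkeeping.
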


\begin{rem}
This is essentially a refined version of a result by Yafaev \cite{Yafaev-1998} for the case $\m=1$, and our proof 
for generalized model follows the argument of  Nakamura \cite{N2016} for short range perturbations. 
This argument works for $\m>1/2$, as in the paper \cite{Yafaev-1998} , though we have more 
precise results if we employ Fourier integral operator formulation as in \cite{N2018}, unless 
$\m=1$. Thus one of the purpose of this note is to fill a gap left in \cite{N2018}. 
\end{rem}

\begin{rem}
By a simple change of integration variable, we have 
\[
s_0(\l,x,\x) =\exp\biggpare{-i(2\l)^{-1/2}\int_{-\infty}^\infty (V(x+t\x)-V(t\x))dt}, 
\]
though the expression in Theorem~\ref{thm-Smatrix-representation} might be more natural since $\sqrt{2\l}\x$ 
is the velocity corresponding to $\x\in S^{d-1}$ at the energy $\l$. 
If we write 
\[
\g(x,\x)=\int_{-\infty}^\infty (V(x+t\x)-V(t\x))dt, \quad \x\in S^{d-1}, x\in T^*_\x S^{d-1}\simeq \x^\perp, 
\]
then it is easy to see that $\g$ satisfies
\[
\bigabs{\pa_x^\a\pa_\x^\b \g(x,\x)}\leq 
\begin{cases} C_{\a\b}\jap{\log\jap{x}},\quad &\text{if }\a=0,\\
C_{\a\b}\jap{x}^{-|\a|}, \quad &\text{if }\a\neq 0, \end{cases}
\]
 for any $\a,\b\in \ze_+^{d-1}$ in a local coordinate. 
 Thus we learn 
\[
 s_0(\l,x,\x) =\exp(-i(2\l)^{-1/2}\g(x,\x))\in S^\d_{1,0}(T^*S^{d-1})
\]
with any $\d>0$. 
\end{rem}

Next, we consider the spectral properties of $S(\l)$ using the above representation. 

\begin{thm}\label{thm-discrete-spectrum}
Suppose Assumption~A, and suppose $V$ is rotation symmetric and
\[
|x\cdot\pa_x V(x)|\geq c|x|^{-1}\quad \text{for }|x|\geq R,
\]
with some $c,R>0$. Then for any $\l>0$ the scattering matrix has dense pure point 
spectrum on the whole unit circle.
\end{thm}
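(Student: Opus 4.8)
The plan is to exploit the explicit principal symbol from Theorem~\ref{thm-Smatrix-representation}. When $V$ is rotation symmetric, so is $H$, hence $S(\l)$ commutes with the natural $SO(d)$-action on $L^2(S^{d-1})$, and one can diagonalize it block-by-block using spherical harmonics; the spectrum of $S(\l)$ is the closure of the union of the eigenvalues appearing in these blocks. The key point is that, by rotation symmetry, the function $\g(x,\x)=\int_{-\infty}^\infty(V(x+t\x)-V(t\x))\,dt$ depends only on the geodesic distance from the point (with footpoint $\x$) to the ``equator'', or more usefully, the principal symbol $s_0(\l,\cdot,\cdot)$ is itself $SO(d-1)$-invariant in the fiber variable. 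Thus one should first reduce to understanding the range of the scalar phase $(2\l)^{-1/2}\g$.

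Next I would show that the range of the phase is unbounded, using the hypothesis $|x\cdot\pa_x V(x)|\geq c|x|^{-1}$ for $|x|\geq R$. Concretely, evaluate $\g$ along a trajectory $t\mapsto y+t\x$ that passes at distance $\rho$ from the origin; differentiating $\g$ with respect to $\rho$ and using the lower bound on the radial derivative of $V$ shows that $\g$ grows like $\log\rho$ as $\rho\to\infty$ (this matches the upper bound $\jap{\log\jap{x}}$ in the second Remark, so the growth is exactly logarithmic and, crucially, \emph{genuinely unbounded}). Therefore $e^{-i(2\l)^{-1/2}\g(x,\x)}$ takes values densely on the whole unit circle as $x$ ranges over $T^*_\x S^{d-1}$, i.e.\ the essential range of the principal symbol $s_0(\l,\cdot,\cdot)$ is the entire circle $\{z:|z|=1\}$.

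Then I would transfer this information from the symbol to the spectrum of the operator. Since $S(\l)-\Op(s_0(\l))$ is a pseudodifferential operator of order $-1+\d$, hence compact on $L^2(S^{d-1})$, the operator $S(\l)$ is a compact perturbation of $\Op(s_0(\l))$; and $\Op(s_0(\l))$ differs from the bounded multiplication-type operator with symbol $s_0$ only by lower-order (compact) terms once one symmetrizes. Because $s_0$ is unimodular, $\Op(s_0(\l))$ is unitary modulo compacts, so $S(\l)$ is unitary modulo compacts and its essential spectrum is contained in the unit circle; combined with the fact that the essential range of $s_0$ fills the circle, the essential spectrum of $S(\l)$ is exactly the full unit circle. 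To upgrade ``essential spectrum = circle'' to ``dense \emph{point} spectrum on the circle'', I would use the rotation symmetry: in each spherical-harmonics block of degree $n$ the operator $S(\l)$ acts as a finite-dimensional unitary (each block is finite-dimensional), hence consists purely of eigenvalues, and as $n\to\infty$ these eigenvalues — which must accumulate on the essential spectrum, the whole circle — are dense in the circle. Hence $S(\l)$ has dense pure point spectrum on the whole unit circle.

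The main obstacle I expect is the asymptotic analysis of $\g$ under the sign-definite lower bound $|x\cdot\pa_x V(x)|\geq c|x|^{-1}$: one must be careful that the cancellation $V(x+t\x)-V(t\x)$ in the definition of $\g$ does not kill the logarithmic growth, and one must choose the right family of lines along which to evaluate $\g$ (varying the impact parameter $\rho$) so that the integral of the radial derivative indeed produces $\pm c\log\rho + O(1)$ rather than a bounded quantity; handling the non-radial part of $\pa_x V$ and the transition region $|x|\sim R$ requires some care with the $SO(d-1)$-symmetrization of the fiber variable.
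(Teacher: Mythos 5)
Your proposal is correct in outline and assembles the same two ingredients as the paper --- pure point spectrum from rotation invariance, and $\sigma(S(\l))=S^1$ from the logarithmic divergence of $\g$ --- but the first half is handled by a genuinely different route. The paper's Lemma~\ref{lem-dspec-1} stays inside the pseudodifferential calculus: rotation invariance forces the full symbol in geodesic coordinates to be a function of $|x|^2$ alone, so $S(\l)=g(-\triangle_{S^{d-1}})$ and inherits the pure point spectrum of the Laplace--Beltrami operator. You instead decompose $L^2(S^{d-1})$ into spherical-harmonic isotypic components, which are finite dimensional and preserved by any bounded operator commuting with the $SO(d)$-action; this is cleaner, and it uses only the rotation invariance of $S(\l)$ (which both you and the paper take for granted from the symmetry of $V$ and of the modifiers), not its pseudodifferential structure. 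For the second half your route essentially coincides with the paper's Lemma~\ref{lem-sym-spec}: writing $\g(x,\x)=\int_{-\infty}^{\infty}\int_0^1 x\cdot\pa_xV(sx+t\x)\,ds\,dt$ and using the radial lower bound gives $\g\geq 2c_0\log\jap{x}$, exactly the logarithmic growth you predict (and for radial $V$ the gradient is purely radial, so the ``non-radial part'' you flag as a difficulty is vacuous here; it only matters for the generalization in Theorem~\ref{thm-no-ac-spectrum}). The one step you assert without proof --- that $\sigma_{\mathrm{ess}}(S(\l))$ contains every limit value of the unimodular principal symbol as the fiber variable tends to infinity --- is precisely where the paper does its work: it constructs explicit coherent states $\f_N$ localized near points $(x_N,\x_N)$ with $|x_N|\to\infty$ at which the phase attains a prescribed value modulo $2\pi$, and invokes Weyl's criterion; your appeal to compactness of $S(\l)-\Op(s_0)$ reduces to $\Op(s_0)$ but does not by itself yield the inclusion you need, so that (standard) Weyl-sequence construction would still have to be supplied.
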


For the moment, we need the rotation symmetry to show the pure pint spectrum, but we can 
show the absence of absolutely continuous spectrum under weaker assumptions. 
We discuss these in Section~\ref{section-dspec}. 

\begin{thm}\label{thm-ac-spectrum}
Suppose $d=2$, and let
\[
V(x)= a \frac{x_1}{\jap{x}^2}, \quad x=(x_1,x_2)\in\re^2,
\]
with $a\neq 0$. Then, $\s_{\mathrm{ess}}(S(\l))=\{e^{i\th}\,|\,|\th|\leq |a|\pi(2\l)^{-1/2}\}$, and 
$S(\l)$ has absolutely continuous spectrum
on $S^1\setminus\{e^{\pm ia\pi(2\l)^{-1/2}}\}$, except for possibly discrete eigenvalues. The eigenvalues 
may accumulate only at $e^{\pm i\pi a(2\l)^{-1/2}}$. 
\end{thm}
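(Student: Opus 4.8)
The proof has three steps: computing the principal symbol, identifying the essential spectrum, and establishing a Mourre estimate. Throughout put $b:=\pi a(2\l)^{-1/2}$ and $f(s):=s(1+s^2)^{-1/2}$.

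\emph{Symbol and essential spectrum.} That $V(x)=ax_1\jap{x}^{-2}$ satisfies Assumption~A is immediate. To evaluate the phase of Theorem~\ref{thm-Smatrix-representation}, parametrize $\x=(\cos\th,\sin\th)\in S^1$ and identify $T^*_\x S^1\simeq\x^\perp$ by $s\mapsto s\x^\perp$ with $\x^\perp=(-\sin\th,\cos\th)$. Since $|s\x^\perp+t\x|^2=s^2+t^2$ and the first coordinate of $s\x^\perp+t\x$ equals $-s\sin\th+t\cos\th$, the two terms proportional to $\cos\th$ combine into an absolutely convergent odd integral which vanishes, while $\int_{\re}(1+s^2+t^2)^{-1}dt=\pi(1+s^2)^{-1/2}$. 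Hence
\[
s_0(\l,\th,s)=\exp\bigpare{ib\sin\th\,f(s)}\in S^0_{1,0}(T^*S^1),
\]
and by Theorem~\ref{thm-Smatrix-representation} $S(\l)=\Op(s)$ with $s-s_0\in S^{-1+\d}_{1,0}$, so $S(\l)-\Op(s_0)$ is compact. As $|s|\to\infty$ one has $f(s)\to\pm1$, so the principal symbol of $S(\l)$ on the cosphere bundle $S^*S^1\simeq S^1\times\{+,-\}$ equals $e^{\pm ib\sin\th}$, whose range is $\{e^{i\th}:|\th|\le|b|\}$ (all of $S^1$ if $|b|\ge\pi$). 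Since the essential spectrum of a zeroth-order pseudodifferential operator on a compact manifold is the closure of the range of its principal symbol at fibre infinity --- the inclusion ``$\subseteq$'' by a parametrix where that symbol is elliptic, ``$\supseteq$'' by coherent states concentrated at fibre infinity, neither sensitive to the compact remainder --- this yields $\s_{\mathrm{ess}}(S(\l))=\{e^{i\th}:|\th|\le|a|\pi(2\l)^{-1/2}\}$.

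\emph{Mourre estimate.} Because $S(\l)$ is a unitary zeroth-order pseudodifferential operator with real-phase principal symbol $e^{i\psi}$, $\psi:=b\sin\th\,f(s)$, one may write $S(\l)=e^{iB}+R$ with $B=B^*$ of order $0$ and principal symbol $\psi$, and $R\in\Psi^{-1+\d}$ compact. Choose a self-adjoint first-order pseudodifferential operator $\mathcal{A}$ on $S^1$ with principal symbol $\mathrm{sgn}(b)\,\cos\th\,g(s)$, where $g\in C^\infty(\re)$ is even and $g(s)=|s|$ for $|s|\ge1$; a zeroth-order choice would produce a compact, hence useless, commutator. Then $S(\l)^*[\mathcal{A},S(\l)]$ is self-adjoint, and modulo a compact operator it has principal symbol the Poisson bracket $\{\mathrm{sgn}(b)\cos\th\,g(s),\,\psi\}$ --- the leading symbol of $i[\mathcal{A},B]$, unchanged by the averaging $C\mapsto\int_0^1 e^{-itB}Ce^{itB}dt$ because it is scalar-valued. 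At fibre infinity, where $g'(s)f(s)\to1$ and $g(s)\,\partial_s\psi\to0$, this bracket equals $|b|\cos^2\th$. Now fix $e^{i\f_0}\ne e^{\pm ib}$ in the interior of $\s_{\mathrm{ess}}(S(\l))$. By the pseudodifferential functional calculus, a smooth cutoff $\chi(S(\l))$ with $\chi\equiv1$ near $e^{i\f_0}$ and support in a small arc $\Th$ is microsupported, at fibre infinity, on $\{b\sin\th\,\mathrm{sgn}(s)\equiv\f_0\ (\mathrm{mod}\ 2\pi)\}$, on which $\cos^2\th$ is bounded below by a positive constant precisely because $e^{i\f_0}\ne e^{\pm ib}$ (the finite-frequency part is finite rank). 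Combining this with a Gårding inequality and absorbing the lower-order and $\Psi^{-1+\d}$ terms into a compact error gives, for $\Th$ small,
\[
E_{S(\l)}(\Th)\,S(\l)^*[\mathcal{A},S(\l)]\,E_{S(\l)}(\Th)\ \ge\ c\,E_{S(\l)}(\Th)+K,\qquad c>0,
\]
with $K$ compact, while $S(\l)\in C^2(\mathcal{A})$ because the iterated commutators of $\mathcal{A}$ with $S(\l)$ are pseudodifferential operators of order $\le0$.

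\emph{Conclusion and main obstacle.} By the limiting absorption principle furnished by this Mourre estimate (commutator methods for unitary operators), $S(\l)$ restricted to $\Th$ has purely absolutely continuous spectrum except for finitely many eigenvalues of finite multiplicity. Covering $\s_{\mathrm{ess}}(S(\l))\setminus\{e^{\pm ib}\}$ by such arcs, on it $S(\l)$ has absolutely continuous spectrum, no singular continuous spectrum, and only isolated eigenvalues of finite multiplicity, which hence accumulate only at $e^{\pm ib}$. On the complementary open set $S^1\setminus\s_{\mathrm{ess}}(S(\l))$, normality of the unitary $S(\l)$ forces the spectrum to consist of isolated eigenvalues of finite multiplicity, again accumulating only at $\{e^{\pm ib}\}$; since the two sets together exhaust $S^1\setminus\{e^{\pm ib}\}$, this gives the theorem. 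The essential difficulty, and the reason Kato--Rosenblum theory cannot be invoked directly, is that on the one-dimensional manifold $S^1$ the remainder $S(\l)-\Op(s_0)\in\Psi^{-1+\d}$ is only compact, not trace class, so absolute continuity need not be preserved under it; the commutator method is used precisely because only compactness of the remainder enters the Mourre estimate. The technical heart is then the unitary Mourre machinery together with the non-elliptic conjugate operator $\mathcal{A}$ (whose symbol $\cos\th\,g(s)$ vanishes at $\th=\pm\pi/2$, points which nevertheless never meet the relevant level sets when $e^{i\f_0}\ne e^{\pm ib}$) and the verification of its regularity hypotheses.
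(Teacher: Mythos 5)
Your proposal is correct and follows essentially the same route as the paper: an explicit computation of the principal symbol $e^{ib\sin\th\,\o/\jap{\o}}$, identification of the essential spectrum from its values at fibre infinity, and a Mourre estimate with the conjugate operator of symbol $\mathrm{sgn}(a)\cos\th\,\jap{\o}$ (your $g(s)$ differs from $\jap{\o}$ only by an irrelevant modification near $0$), concluded by commutator methods for unitary operators. The only cosmetic differences are that you evaluate $\g$ by direct integration rather than via the identity \eqref{eq-psi}, and you phrase the symbol of the commutator through an approximate logarithm rather than directly through the functional calculus of Appendix~\ref{appendix-funccal}.
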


The absolutely continuous spectrum is relatively stable under small perturbations, and we have 
the same properties if we add lower order perturbations. 

There is extensive literature concerning the two-body long-range scattering. 
We refer textbooks, Reed-Simon Volume~3 \cite{Reed-Simon} \S X1-9, Yafaev \cite{Yafaev-SLNM1735} Part~2, 
\cite{Yafaev-AMS2009} Chapter~10, Derezi\'nski-G\'erard \cite{Derezinski-Gerard}, and references therein. 
About the scattering matrix for long-range scattering, there are detailed analysis by 
Yafaev, especially \cite{Yafaev-1998}. Our approach is closely related to his result, though 
our formulation is more general and the proof is substantially different. Actually it is 
a direct extension of a previous paper by the author \cite{N2016}. 
In particular, this argument is easily generalized to discrete Schr\"odinger operators 
with long-range perturbations (\cite{N2014}, \cite{Tadano}). 
Our example of scattering matrix with pure point spectrum is discussed in \S 9.7 in 
Yafaev \cite{Yafaev-SLNM1735}, though in a different manner, and we also discuss generalizations. 
Thus the author feel it would be useful to include an independent proof. 
 
Theorems~\ref{thm-discrete-spectrum} and \ref{thm-ac-spectrum} are proved in 
Section~\ref{section-dspec}, and Section~\ref{section-abspec}, respectively. 
In Sections~\ref{section-abspec} we use functional calculus of unitary pseudodifferential operators, 
and for the completeness we give a proof the functional calculus in Appendix~\ref{appendix-funccal}. 
A construction of approximate logarithm of unitary pseudodifferential operators is discussed 
in Appendix~\ref{appendix-log}, and a simple result of trace-class scattering theory for unitary operators  
is discussed in Appendix~\ref{appendix-KB}.

In the following, we use the Weyl quantization of a symbol $a\in C^\infty(\re^{2d})$: 
\[
\Op(a)\f(x)= (2\pi)^{-d}\iint e^{i(x-y)\cdot\x} a(\tfrac{x+y}{2},\x) \f(y)dyd\x, 
\quad \f\in\mathcal{S}(\re^d).
\]
We denote the Kohn-Nirenberg symbol class {\em in $\x$-space}\/ by $S^m_{\rho,\d}$, i.e., 
$a\in S^m_{\rho,\d}$ if $a\in C^\infty(\re^{2d})$ and for any $\a,\b\in \ze_+^d$ there is 
$C_{\a\b}$ such that 
\[
\bigabs{\pa_x^\a\pa_\x^\b a(x,\x)}\leq C_{\a\b} \jap{x}^{m-\rho|\a|+\d|\b|}, 
\quad x,\x\in\re^d. 
\]
We also use the H\"ormander $S(m,g)$ symbol class notation \cite{Hormander}, but we will 
use it for specific metrics $g$ and $\tilde g$, and we explain later. For a symbol class $\Sigma$, 
we denote the corresponding operator set by $\Op \Sigma= \bigset{\Op(a)}{a\in \Sigma}$. 
We refer H\"ormander \cite{Hormander}, Dimassi-Sj\"ostrand \cite{Dimassi-Sjostrand} and 
Zworski \cite{Zworski} for the pseudodifferential operator 
calculus.

%%%%%%

\section{Representation formula of the scattering matrix}\label{section-Srep}

Here we define long-range wave operators and scattering operators using time-independent 
modifiers originally dues to Isozaki and Kitada \cite{Isozaki-Kitada1, Isozaki-Kitada2}. We follow the formulation of Nakamura \cite{N2016}, 
and sketch the proof of Theorem~\ref{thm-Smatrix-representation} in a generalized setting. 

\begin{ass}\label{ass-B}
Let $p_0(\x)\in C^\infty(\re^d;\re)$ and elliptic in the following sense: There is $\n>0$ such that 
$p_0\in S^{\n}$, i.e., $\pa_\x^\a p_0(\x) =O(\jap{\x}^{\n-|\a|})$ for any $\a\in\ze_+^d$, and 
\[
p_0(\x)\geq c_0\jap{\x}^\n -c_1, \quad \x\in\re^d, 
\]
with some $c_0,c_1>0$. 
Let $I\Subset\re$ be a compact interval. We suppose there is $c_0>0$ such that 
\[
\bigabs{\pa_\x p_0(\x)}\geq c_0\quad \text{for } \x\in p_0^{-1}(I).
\]
\end{ass}

We set
\[
H_0=p_0(D_x) =\mathcal{F}^* p_0(\cdot)\mathcal{F},
\]
where $\mathcal{F}$ is the Fourier transform, and we also write the free velocity by 
\[
v(\x)= \pa_\x p_0(\x), \quad \x\in\re^d. 
\]

We suppose the perturbation $V$ is a symmetric pseudodifferential operator 
with the real-valued Weyl symbol $V(x,\x)$, i.e., 
\[
V\f (x) = (2\pi)^{-d} \iint e^{i(x-y)\cdot\x} V(\tfrac{x+y}{2},\x) f(y)dyd\x, 
\quad \f\in \mathcal{S}(\re^d). 
\]
We denote the metric $g=dx^2/\jap{x}^2+d\x^2$, and the symbol class 
$S(m,g)$ is defined as follows: $a\in S(m,g)$ if and only if $a\in C^\infty(\re^{2d})$ and 
\[
\bigabs{\pa_x^\a\pa_\x^\b a(x,\x)}\leq C_{\a\b}m(x,\x) \jap{x}^{-|\a|}, 
\quad x,\x\in\re^d
\]
for any $\a,\b\in\ze_+^d$, with some $C_{\a\b}>0$. 

\begin{ass} \label{ass-C}
$V(x,\x)$ is real valued and $V\in S(\jap{x}^{-1}\jap{\x}^\n,g)$. 
\end{ass}

We write
\[
H= H_0 +V =p_0(D_x)+V^W(x,D_x)
\]
be our Hamiltonian, and we suppose:

\begin{ass}\label{ass-D}
$H$ is essentially self-adjoint on $H^\n(\re^d)$. 
\end{ass}

We write the symbol of $H$ by 
\[
p(x,\x)= p_0(\x)+V(x,\x).
\]

\begin{rem}
It might be natural to assume the ellipticity: 
\[
|p(x,\x)|\geq c_0\jap{\x}^\n -c_1, \quad \text{for } x,\x\in\re^d. 
\]
It implies the self-adjointness on $H^\n(\re^d)$, but it is not essential in the following argument. 
\end{rem}

For $\e>0$, we denote 
\[
\O_\pm^\e= \bigset{(x,\x)\in\re^{2d}}{\pm\cos(x,v(\x))>-1+\e, |x|\geq 1, p_0(\x)\in I}.
\]
As well as in \cite{N2016} Section~3, we can construct symbols $a^\pm\in S(1,g)$ such that 
\[
H\Op(a^\pm) -\Op(a^\pm)H_0\sim 0
\]
in the formal symbol sense as $|x|\to\infty$ in $\O_\pm^\e$. 
$a_\pm$ have the form: 
\[
a^\pm(x,\x)\sim e^{i\g_\pm(x,\x)}\bigpare{1+a_1^\pm(x,\x)+a_2^\pm(x,\x) + \cdots}
\]
where 
\[
\g_\pm(x,\x)=\int_0^{\pm\infty} (V(x+tv(\x),\x)-V(tv(\x),\x)) dt.
\]
We note $\g_\pm(x,\x)\notin S(1,g)$ (on $\O_\pm^\e$) in general, but 
for any $\a,\b\in\ze_+^d$, 
\[
\bigabs{\pa_\x^\b \g_\pm(x,\x)}\leq C_\b\jap{\log\jap{x}},
\]
and if $\a\neq 0$, 
\[
\bigabs{\pa_x^\a \pa_\x^\b \g_\pm(x,\x)}\leq C_{\a\b}\jap{x}^{-|\a|}
\]
on $\O_\pm^\e$. We note $\g_\pm$ satisfies 
\[
v(\x)\cdot\pa_x\g_\pm(x,\x) +V(x,\x)=0
\]
as well as in the short-range case (see \cite{N2016} Section~3). 

We introduce a new metric $\tilde g$ by 
\[
\tilde g = \jap{x}^{-2}dx^2 + \jap{\log\jap{x}}^2 d\x^2
\quad\text{on } \re^{2d}.
\]
Then the corresponding symbol class $S(m,\tilde g)$ is defined as follows: 
$a\in S(m,\tilde g)$ if and only if, for any $\a,\b\in\ze_+^d$, 
\[
\bigabs{\pa_x^\a \pa_\x^\b a(x,\x)}\leq C_{\a\b}m(x,\x) \jap{x}^{-|\a|}\jap{\log\jap{x}}^{|\b|}
\]
with some $C_{\a\b}>0$.  We note, hence, for any $\d>0$, $S(m,\tilde g)\subset S(m\jap{x}^\d, g)$. 

By the same construction of $a_j^\pm$ as in \cite{N2016}, Section~3, and direct computations, 
we can easily show $a_j^\pm \in S(\jap{x}^{-j}\jap{\log\jap{x}}^j,\tilde g)$
on $\O_\pm^\e$. 
Hence, $a^\pm$, which is an asymptotic sum of $\{a_j^\pm\}$, is an element of 
$S(1,\tilde g)\subset S(\jap{x}^\d,g)$, with any $\d>0$ on $\O_\pm^\e$. We also note 
$a_\pm-1\in S(\jap{x}^{-1}\jap{\log\jap{x}},\tilde g)\subset S(\jap{x}^{-1+\d},g)$ on $\O_\pm^\e$. 

We choose smooth cut-off functions $\i$, $\z$ and $\y$ such that:  $\i\in C_0^\infty(I)$ with 
$\i(\l)=1$ on $I'\Subset I$; $\z(x)=0$ in a neighborhood of $0$ and $\supp[1-\z]\subset\{|x|\leq 2\}$; 
and $\y(\s)=1$ if $\s>-1+2\e$ and $\y(\s)=0$ if $\s\leq -1+\e$ with sufficiently small $\e>0$. 
With these  cut-off functions, we set
\[
\tilde a^\pm(x,\x)=\i(p_0(\x))\z(|x|)\y(\pm\cos(x,v(\x))) a^\pm(x,\x). 
\]
Then we have symbols $\tilde a^\pm\in S(1,\tilde g)$. 
We set 
\[
J_\pm =\Op(\tilde a^\pm).
\]
We note the principal symbols of $J_\pm^* J_\pm$ are 
$\bigabs{\i(p_0(\x))\z(|x|)\y(\pm\cos(x,v(\x))}^2$, and the remainder terms are in $S(\jap{x}^{-1+\d},g)$. 
Hence $J_\pm$ are bounded in $L^2$, and we can utilize standard pseudodifferential operator 
calculus as if they are in $S(1,g)$. We call $J_\pm$ the {\em time-independent modifiers}, 
or the {\em Isozaki-Kitada modifiers} \cite{Isozaki-Kitada1, Isozaki-Kitada2}. By the construction, 
\[
\mathrm{EssSupp}[a^\pm]\subset
\{p_0(\x)\in I\setminus I'\}\cup\{\pm\cos(x,v(\x))\in [-1+\e,-1+2\e]\}
\cup\{|x|\leq 2\},
\]
where $\mathrm{EssSupp}[\cdot]$ denotes the essential support of the symbol. 
Using this fact and the standard non-stationary phase argument, we can show the existence of modified wave operators: 
\[
W_\pm E_{I'}(H_0) = \slim_{t\to\pm\infty} e^{itH} J_\pm e^{-itH_0}E_{I'}(H_0)
\]
where $E_I(A)$ denotes the spectral projection. We recall $W_\pm$ has the intertwining property: 
\[
H W_\pm E_{I'}(H_0) = W_\pm E_{I'}(H_0) H_0.
\]
We set the (modified) scattering operator $S$ by 
\[
S E_{I'}(H_0) = (W_+E_{I'})^* W_- E_{I'}(H_0),
\]
and then $SE_{I'}(H_0)$ is a unitary operator on $E_{I'}(H_0)\mathcal{H}$. 
By the above intertwining property, $S$ commutes with $H_0$. 

We now define the scattering matrix $S(\l)$ for $\l\in I'$. 
We denote the energy surface with the energy $\l\in I$ by 
\[
\Sigma_\l = \bigset{\x\in\re^d}{p_0(\x)=\l}=p_0^{-1}(\{\l\}).
\]
We note $\Sigma_\l$ is a smooth hypersurface by the above assumption. 
Let 
\[
m_\l =|p_0(\x)|^{-1} dS(\x)
\]
be a measure on $\Sigma_\l$, where $dS(\x)$ is the surface measure on $\Sigma_\l$, so that 
\[
\int \f d\x = \int_I \biggpare{\int_{\Sigma_\l} \f\big|_{\Sigma_\l} dm_\l}d\l 
\]
for $\f\in C_0^\infty(p_0^{-1}(I))$. Hence we have the integral decomposition 
\[
L^2(p_0^{-1}(I),d\x) \simeq \int^\oplus_I L^2(\Sigma_\l,m_\l) d\l.
\]
Since $S$ commutes with $H_0$, the operator $\mathcal{F} SE_{I'}(H_0)\mathcal{F}^*$ 
commutes with $p_0(\x)\cdot$, and hence it is decomposed to operators on $L^2(\Sigma_\l,m_\l)$: 
\[
\mathcal{F} SE_{I'}(H_0)\mathcal{F}^* \simeq \int^\oplus_{I'} S(\l) d\l 
\quad\text{on } \int^\oplus_{I'} L^2(\Sigma_\l,m_\l) d\l .
\]
The family of operators $\{S(\l)\}_{\l\in I'}$ is called the scattering matrix. 

Given the above construction, we can prove the following theorem in exactly the same 
argument as in \cite{N2016} (see also \cite{N2018}). We note the microlocal resolvent estimate, which is crucial 
in the proof, is proved in \cite{N2017} under our setting. 

\begin{thm}\label{thm-Srep}
Let $\l\in I'\setminus\s_{\mathrm{p}}(H)$. Then $S(\l)$ is a pseudodifferential operator on $\Sigma_\l$. 
If we denote the symbol by $s(\l,x,\x)$, then it satisfies for any $\a,\b\in\ze_+^{d-1}$, 
\[
\bigabs{\pa_x^\a \pa_\x^\b s(\l,x,\x)}\leq C_{\a\b} \jap{x}^{-|\a|}\jap{\log\jap{x}}^{|\b|}
\]
for $\x\in\Sigma_\l$, $x\in T^*_\x\Sigma_\l$. Moreover, the principal symbol is given by 
\[
s_0(\l,x,\x) =\exp\biggpare{-i\int_{-\infty}^\infty (V(x+tv(\x),\x)-V(tv(\x),\x))dt},
\]
i.e., $s(\l,\cdot,\cdot)-s_0(\l,\cdot,\cdot)\in S(\jap{x}^{-1+\d},g)$ with any $\d>0$. 
\end{thm}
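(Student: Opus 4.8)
The plan is to run the stationary scheme of \cite{N2016}, carrying every symbolic computation in the logarithmically lossy metric $\tilde g$ rather than in $g$. Write $T_\pm = H J_\pm - J_\pm H_0$ for the effective perturbations attached to the modifiers. Because $a^\pm$ was constructed to solve the eikonal and transport equations to all orders on $\O_\pm^\e$, the relation $H\Op(a^\pm)-\Op(a^\pm)H_0\sim 0$ there forces the symbol of $T_\pm$ to be $O(\jap{x}^{-\infty})$ on $\O_\pm^\e$; hence the essential support of $T_\pm$ lies in the displayed set $\{p_0(\x)\in I\setminus I'\}\cup\{\pm\cos(x,v(\x))\in[-1+\e,-1+2\e]\}\cup\{|x|\le 2\}$, so that $T_-$ is microlocalized in an outgoing region and $T_+$ in an incoming region (together with harmless energy-boundary and $\{|x|\le2\}$ pieces). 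Introduce the trace operator $\C_0(\l)\colon\mathcal{S}(\re^d)\to L^2(\Sigma_\l,m_\l)$, $(\C_0(\l)u)(\x)=(\mathcal{F}u)(\x)|_{\Sigma_\l}$. Starting from the Duhamel identity $W_\pm\C_0(\l)^*=(J_\pm-R(\l\mp i0)T_\pm)\C_0(\l)^*$ on generalized eigenfunctions and the definition $S=W_+^*W_-$, the same computation as in \cite{N2016} --- legitimate since the limiting absorption principle holds at $\l\in I'\setminus\s_{\mathrm{p}}(H)$ --- produces a representation of the form
\[
S(\l)=\C_0(\l)\,J_+^*J_-\,\C_0(\l)^* + \C_0(\l)\,K(\l)\,\C_0(\l)^*,
\]
where, up to the spectral-density normalization,
\[
K(\l)=-J_+^*R(\l+i0)\,T_- - T_+^*R(\l+i0)\,J_- + T_+^*R(\l+i0)^2\,T_-.
\]

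First I would treat the main term. The $\tilde g$-calculus gives $J_+^*J_-=\Op(\overline{\tilde a^+}\,\tilde a^-)+\Op(r)$ with $r\in S(\jap{x}^{-1}\jap{\log\jap{x}},\tilde g)$, and the standard reduction --- valid because $v(\x)=\pa_\x p_0(\x)\neq0$ makes $\Sigma_\l$ a smooth nondegenerate hypersurface --- shows that for $b\in S(m,\tilde g)$ the operator $\C_0(\l)\Op(b)\C_0(\l)^*$ is a pseudodifferential operator on $\Sigma_\l$ whose symbol is the restriction of $b$ to $T^*\Sigma_\l$, modulo one lower order in $\jap{x}$. On $\Sigma_\l$ the cut-off $\i(p_0(\x))=1$ (as $\i\equiv1$ on $I'\ni\l$), and the leading factor of $\overline{\tilde a^+}\tilde a^-$ there is $e^{-i\g_+}e^{i\g_-}=e^{i(\g_--\g_+)}$; since
\[
\g_-(x,\x)-\g_+(x,\x)=-\int_{-\infty}^\infty\bigpare{V(x+tv(\x),\x)-V(tv(\x),\x)}\,dt,
\]
this yields exactly the stated principal symbol $s_0(\l,x,\x)$. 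The symbol bounds $|\pa_x^\a\pa_\x^\b s|\le C_{\a\b}\jap{x}^{-|\a|}\jap{\log\jap{x}}^{|\b|}$, i.e. $s\in S(1,\tilde g)$, are inherited directly from the definition of $S(m,\tilde g)$, and the correction $\Op(r)$ contributes only $S(\jap{x}^{-1}\jap{\log\jap{x}},\tilde g)\subset S(\jap{x}^{-1+\d},g)$.

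It then remains to show that $\C_0(\l)K(\l)\C_0(\l)^*$ has symbol in $S(\jap{x}^{-1+\d},g)$, i.e. that it is absorbed into the lower-order part of $s-s_0$. Here I would use that fixing the single energy $\l\in I'$ removes the part of the essential support of $T_\pm$ lying in $\{p_0\in I\setminus I'\}$, while the compactly $x$-supported piece $\{|x|\le2\}$ becomes smoothing after composition with $\C_0(\l)$. The genuinely remaining contributions are microlocalized in the outgoing region for $T_-$ and the incoming region for $T_+$, precisely the configuration in which the microlocal resolvent estimates of \cite{N2017} bound $R(\l+i0)$ (and $R(\l+i0)^2$) between the relevant weighted spaces; the mismatch of the two essential supports then forces each sandwich in $K(\l)$ to gain a factor $\jap{x}^{-1}$ relative to the main term, placing $K(\l)$ in the required class.

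The main obstacle is exactly this last step --- extracting the $\jap{x}^{-1+\d}$ order for the resolvent remainder with the correct incoming/outgoing support bookkeeping. In the critical case $\m=1$ the phases $\g_\pm$ are only $O(\jap{\log\jap{x}})$, so $\tilde a^\pm\notin S(1,g)$, and every operation --- the product $J_+^*J_-$, the commutators defining $T_\pm$, the composition with $R(\l+i0)$, and the surface reduction --- must be kept inside the lossy class $S(\,\cdot\,,\tilde g)$, embedding into $S(\,\cdot\,\jap{x}^\d,g)$ only at the end. One must verify that the accumulated logarithmic losses in the $\x$-derivatives never consume the $\jap{x}^{-1}$ gain supplied by \cite{N2017}, so that $s-s_0$ indeed lands in $S(\jap{x}^{-1+\d},g)$ for every $\d>0$; this is the one place where the present argument genuinely departs from the short-range case of \cite{N2016}.
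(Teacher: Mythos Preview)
Your proposal is correct and follows exactly the route the paper indicates: it reproduces the stationary scheme of \cite{N2016} with all symbolic computations carried in the class $S(\cdot,\tilde g)$, and invokes the microlocal resolvent estimates of \cite{N2017} to control the resolvent remainder terms. The identification of the principal symbol via $\overline{\tilde a^+}\,\tilde a^-\big|_{T^*\Sigma_\l}=e^{i(\g_--\g_+)}=s_0$ and the observation that the logarithmic losses must be absorbed into the final $\jap{x}^\d$ are precisely the points the paper highlights as distinguishing $\m=1$ from the short-range case. One small caution: the explicit expression you write for $K(\l)$ contains a term $T_+^*R(\l+i0)^2T_-$, and the squared boundary resolvent is not a priori well defined; in the actual argument of \cite{N2016} this cross term is handled either via the resolvent identity or by writing the representation so that only a single $R(\l+i0)$ appears sandwiched between incoming and outgoing microlocalizers, so you should not rely on the formal quadratic expansion literally.
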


%%%%%%

\section{Scattering matrix with pure point spectrum}\label{section-dspec}

We first note that, if $H_0=-\frac12\triangle$, 
 and if the perturbation is rotation symmetric, then the scattering matrix 
is also rotation symmetric. Then we can easily show that such operator has pure point spectrum. 
This model is also discussed in \cite{Yafaev-SLNM1735} \S9.7. 

\begin{lem}\label{lem-dspec-1}
Suppose $U$ is a rotation symmetric bounded pseudodifferential operator on $S^{d-1}$, then 
the spectrum is pure point. 
\end{lem}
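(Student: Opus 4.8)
The plan is to exploit the fact that a rotation-symmetric operator commutes with the action of $SO(d)$ on $L^2(S^{d-1})$, and therefore preserves each isotypic component of the Peter--Weyl decomposition. First I would recall that $L^2(S^{d-1})$ decomposes as a Hilbert-space direct sum $\bigoplus_{k\geq 0}\mathcal{H}_k$, where $\mathcal{H}_k$ is the space of spherical harmonics of degree $k$ (restrictions of degree-$k$ harmonic polynomials), each of which is a finite-dimensional $SO(d)$-invariant subspace. Since $U$ commutes with every rotation $R\in SO(d)$, it maps each $\mathcal{H}_k$ into itself; indeed on each $\mathcal{H}_k$ the restriction $U|_{\mathcal{H}_k}$ is an intertwiner of the (irreducible, for $d\geq 3$) representation of $SO(d)$ on $\mathcal{H}_k$ with itself, so by Schur's lemma $U|_{\mathcal{H}_k}$ is a scalar multiple of the identity. (For $d=2$ the representation splits further into the two one-dimensional spaces spanned by $e^{\pm ik\theta}$, and $U$ is again scalar on each; the argument is the same.)

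The key steps are then: (1) verify that $U$ commutes with the unitary $SO(d)$-action — this is exactly what "rotation symmetric" means, but one should note it is used only for boundedness and the commutation, not for the pseudodifferential structure; (2) invoke Peter--Weyl / the decomposition of $L^2(S^{d-1})$ into spherical harmonics and Schur's lemma to conclude that each finite-dimensional invariant block is diagonalized by any orthonormal basis of $\mathcal{H}_k$; (3) assemble an orthonormal basis of $L^2(S^{d-1})$ by taking the union of orthonormal bases of the $\mathcal{H}_k$, which consists entirely of eigenvectors of $U$, and conclude that $U$ has pure point spectrum (its spectrum is the closure of the set of these eigenvalues). Step (3) just uses that a bounded operator possessing an orthonormal eigenbasis has pure point spectrum.

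I would remark that, as the statement is phrased, the pseudodifferential hypothesis is actually irrelevant: any bounded rotation-symmetric operator on $L^2(S^{d-1})$ has pure point spectrum by this argument. The pseudodifferential structure will matter later — in combination with the hypothesis $|x\cdot\partial_x V(x)|\geq c|x|^{-1}$ from Theorem~\ref{thm-discrete-spectrum} — to guarantee that the eigenvalues genuinely fill the whole unit circle densely, via the behavior of the principal symbol $s_0(\lambda,x,\xi)$ given in Theorem~\ref{thm-Smatrix-representation}; but that is not needed for this lemma.

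The main (and really only) obstacle is a bookkeeping one: making precise the identification of $T^*S^{d-1}$-pseudodifferential operators with operators on $L^2(S^{d-1})$ and confirming that "rotation symmetric" (as the paper will use it) indeed means $U$ commutes with the full $SO(d)$-action rather than some smaller symmetry group. Once that is pinned down, Schur's lemma does all the work, and no estimates are required.
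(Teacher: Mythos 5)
Your proof is correct, but it takes a different route from the paper. The paper argues at the level of the symbol: writing the symbol of $U$ in geodesic normal coordinates centred at $\x_0$, rotation symmetry about $\x_0$ forces it to be a function $u(\x_0,|x|^2)$, and symmetry in $\x_0$ then makes it a single function $g(|x|^2)$ of the symbol of the Laplace--Beltrami operator; the paper concludes $U=g(-\triangle)$ and inherits pure point spectrum from $-\triangle$. You instead bypass the symbol entirely: decompose $L^2(S^{d-1})=\bigoplus_k\mathcal{H}_k$ into spherical harmonics, note that an operator commuting with the $SO(d)$-action preserves each (finite-dimensional, irreducible for $d\geq 3$) isotypic component, and apply Schur's lemma to get $U|_{\mathcal{H}_k}=c_k I$, hence an orthonormal eigenbasis. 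Your observation that the pseudodifferential hypothesis is not needed is accurate, and in fact your argument is the cleaner way to make rigorous the paper's identification of $U$ with a function of $-\triangle$ (which, read literally as an exact operator identity deduced from a symbol computation, is the one slightly informal step in the paper's proof). What the paper's approach buys in exchange is the explicit link between the eigenvalues and the symbol $g$ evaluated at the eigenvalues of $-\triangle$, which is the form in which the lemma gets combined with Lemma~\ref{lem-sym-spec}; your version recovers the same conclusion since the constants $c_k$ are determined by the symbol on each $\mathcal{H}_k$. The only point worth pinning down, as you note, is that ``rotation symmetric'' means commuting with the full $SO(d)$-action (and for $d=2$ one descends to the one-dimensional weight spaces $\mathrm{span}\{e^{\pm ik\theta}\}$), after which no estimates are required.
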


\begin{proof}
In the geodesic local coordinate with the center at $\x_0$, the symbol of 
the operator $U$ has the form $u(\x_0,|x|^2)$ by virtue of the symmetry 
(with respect the rotation around $\x_0$). Then, again by the symmetry, the 
symbol is independent of $\x_0$, i.e., the symbol has the form $u(\x,|x|^2)=g(|x|^2)$ 
in the geodesic local coordinate. This implies $U= g(-\triangle)$, where 
$\triangle$ is the Laplace-Beltrami operator on $S^{d-1}$. 
Since the spectrum of $-\triangle$ is pure point, the spectrum of $U=g(-\triangle)$ 
is also pure point. 
\end{proof}

We now observe the spectrum of the scattering matrix tends to cover the whole unit circle. 

\begin{lem}\label{lem-sym-spec}
Suppose $V=V(x)$ is a rotationally symmetric potential and satisfies Assumption~A. 
Suppose, moreover, $V$ satisfies 
\[
|x\cdot\pa_x V(x)| \geq c|x|^{-1}, \quad |x|\geq R, 
\]
with some $c>0$ and $R>0$. Then for any $\l>0$,  $\s(S(\l))=S^1= \normalset{z\in\co}{|z|=1}$. 
\end{lem}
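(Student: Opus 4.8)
The plan is to combine the representation formula from Theorem~\ref{thm-Srep} with Lemma~\ref{lem-dspec-1}. By rotation symmetry and Lemma~\ref{lem-dspec-1} we know $S(\l)=g_\l(-\triangle)$ for some function $g_\l$ on the spectrum of the Laplace--Beltrami operator on $S^{d-1}$, hence $\s(S(\l))$ is the closure of the set of values $g_\l(k(k+d-2))$, $k\in\ze_+$. It therefore suffices to understand the asymptotic behaviour of $g_\l$ at large eigenvalues, i.e.\ the behaviour of the principal symbol $s_0(\l,x,\x)$ as $|x|\to\infty$, and to show these values become equidistributed on $S^1$.

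The key computation is the large-$|x|$ asymptotics of the phase $\g(x,\x)=\int_{-\infty}^\infty (V(x+t\x)-V(t\x))\,dt$ for $V$ rotationally symmetric. Writing $V(x)=W(|x|^2)$ say, and using $|x\cdot\pa_x V|\geq c|x|^{-1}$, I expect to show that along rays the phase grows logarithmically: more precisely, for $x\in\x^\perp$ with $|x|\to\infty$ one gets $\g(x,\x)= \pm 2c'\log|x| + O(1)$ for some constant $c'\neq 0$ tied to the constant $c$, the sign and leading coefficient being governed by the monotonicity hypothesis on $x\cdot\pa_x V$. (The bound $\bigabs{\pa_\x^\b\g}\leq C\jap{\log\jap{x}}$ recorded after Theorem~\ref{thm-Smatrix-representation} already shows $\log|x|$ is the right scale; the lower bound $|x\cdot\pa_x V|\geq c|x|^{-1}$ is what forces genuine — not merely bounded — logarithmic growth.) Consequently the principal symbol $s_0(\l,x,\x)=\exp(-i(2\l)^{-1/2}\g(x,\x))$ winds around the unit circle infinitely often as $|x|\to\infty$: its values are dense in $S^1$. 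Since $g_\l(k(k+d-2))$ is, up to the lower-order symbol correction $s(\l,\cdot,\cdot)-s_0(\l,\cdot,\cdot)\in S(\jap{x}^{-1+\d},g)$ which tends to $0$, asymptotic to $s_0$ evaluated at $|x|^2\sim k^2$, the eigenvalues of $S(\l)$ are dense in $S^1$, and because $S(\l)$ is unitary its spectrum is closed and contained in $S^1$; hence $\s(S(\l))=S^1$.

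Concretely I would proceed as follows. First, fix $\x=\x_0\in S^{d-1}$ and evaluate $\g(x,\x_0)$ for $x=r e$, $e\in\x_0^\perp$ a unit vector, as $r\to\infty$: using rotational symmetry reduce $V$ to a radial function, differentiate under the integral sign, and identify the logarithmic divergence, pinning down the leading coefficient $\kappa(r):=\g(re,\x_0)$ with $\kappa(r)\sim c_0\log r$, $c_0\neq 0$, from the hypothesis on $x\cdot\pa_x V$. Second, translate this into the statement that, on the diagonal in the geodesic coordinate, $g_\l(\rho)=s_0(\l, \cdot)\big|_{|x|^2=\rho} + o(1)$ as $\rho\to\infty$, using Theorem~\ref{thm-Srep} and that $-\triangle$ has discrete spectrum $\{k(k+d-2)\}$ with spacing $\to\infty$ but gaps of relative size $\to0$ on the scale of $\kappa$. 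Third, conclude that $\{g_\l(k(k+d-2))\}_k$ is dense in $S^1$ and invoke Lemma~\ref{lem-dspec-1} together with unitarity to get $\s(S(\l))=S^1$.

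The main obstacle I expect is the second step — making rigorous the passage from ``the principal symbol is dense in $S^1$ as $|x|\to\infty$'' to ``the eigenvalues $g_\l(k(k+d-2))$ are dense in $S^1$''. The subtlety is that $S(\l)=g_\l(-\triangle)$ is an \emph{exact} identity (from Lemma~\ref{lem-dspec-1}), whereas the principal symbol only determines $g_\l$ up to a symbol of order $-1+\d$; one must check that this error is genuinely $o(1)$ at the relevant eigenvalues and, more importantly, that $g_\l$ evaluated at the discrete points $k(k+d-2)$ — rather than at a continuum — still sweeps out a dense subset of the circle. This requires controlling the variation of $\g(re,\x_0)$ over one ``period'' $\{r : |\kappa(r+1)-\kappa(r)|\sim 2\pi(2\l)^{1/2}\}$ relative to the eigenvalue spacing, which follows because $\kappa(r)\sim c_0\log r$ is slowly varying so consecutive eigenvalues $k^2,(k+1)^2$ move $\kappa$ by $\sim 2c_0/k\to 0$ while $\kappa$ itself is unbounded; the elementary but slightly delicate point is to phrase this so that density — not just non-closure — of the eigenvalue set is clear.
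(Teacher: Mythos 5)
Your strategy is viable but genuinely different from the paper's. You diagonalize $S(\l)$ exactly as $g_\l(-\triangle)$ via Lemma~\ref{lem-dspec-1} and aim to show that the eigenvalue sequence $g_\l(k(k+d-2))$ is dense in $S^1$. The paper instead proves $e^{-i\th_0}\in\s_{\mathrm{ess}}(S(\l))$ for every $\th_0$ by a direct Weyl-sequence argument: it establishes the lower bound $\g(x,\x)\geq 2c_0\log\jap{x}$ (the same logarithmic-growth computation you sketch, done explicitly by writing $x\cdot\pa_xV=|x|g'(|x|)$ and integrating $c_0s|x|^2\jap{sx+t\x}^{-3}$ in $t$ and $s$), then uses the intermediate value theorem to pick points $(x_N,\x_N)$ with $|x_N|\geq N$ at which $\g$ hits the target phase modulo $2\pi$, and finally tests $S(\l)$ against Gaussian wave packets $\f_N$ localized in $|x-x_N|=O(\jap{x_N}^{1/2})$, $|\x-\x_N|=O(\jap{x_N}^{-1/2})$, using the derivative bounds $\pa_x\g=O(|x|^{-1})$, $\pa_\x\g=O(\log\jap{x})$ to get $\jap{\f_N,S(\l)\f_N}\to e^{-i\th_0}$ and invoke Weyl's criterion. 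That route needs neither the diagonalization nor any control of the discrete eigenvalue sequence, and it uses the rotational symmetry only for the lower bound on $\g$.

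The place where your proposal is not yet a proof is precisely the step you flag: the asymptotics $g_\l(k(k+d-2))=s_0(\l,x,\x)\big|_{|x|^2=k(k+d-2)}+o(1)$. Lemma~\ref{lem-dspec-1} gives $S(\l)=\sum_k c_kP_k$ (Schur's lemma on each space of degree-$k$ harmonics), but it does not by itself relate $c_k$ to the symbol; extracting that relation requires pairing $S(\l)$ against a degree-$k$ spherical harmonic that microlocalizes at a single phase-space point with frequency $|x|\sim k$ (a highest-weight, Gaussian-beam harmonic) and running exactly the localization estimate described above. So the detour through $g_\l(-\triangle)$ does not let you avoid the quasimode computation---it only constrains the test functions to be exact eigenfunctions of $-\triangle$, which is an extra burden rather than a saving. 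Granting that estimate, your density argument does close: by symmetry $\g$ restricted to $\x^\perp$ depends only on $|x|$, it diverges (with a definite sign, since $x\cdot\pa_xV$ is continuous and nonvanishing on $|x|\geq R$), and its increments between consecutive $r_k=\sqrt{k(k+d-2)}$ are $O(1/k)$ because $\pa_x\g=O(|x|^{-1})$, so its values modulo $2\pi(2\l)^{1/2}$ are dense. One small correction: the hypotheses give only two-sided logarithmic bounds on $\g$, not an asymptotic $\g\sim c_0\log r$ with a definite constant, but divergence plus vanishing increments is all your argument actually uses.
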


\begin{proof}
We suppose $x\cdot\pa_x V(x)\geq c_0|x|^{-1}$ for large $x$. 
Let $\th_0\in [0,2\pi]$ be fixed, and we show $e^{-i\th_0}\in \s(S(\l))$. 
We write $V(x)= g(|x|)$. 

We write, for $\x\in S^{d-1}$, $x\perp \x$ and $|x|\geq R$, 
\begin{align}
\g(x,\x) &= \int_{-\infty}^\infty (V(x+t\x)-V(t\x))dt, \nonumber\quad \\
&=\int_{-\infty}^\infty \biggpare{\int_0^1x\cdot\pa_x V(sx+t\x)ds} dt. \label{eq-psi}
\end{align}
We note, since $V(x)$ is rotationally symmetric, we have 
\[
x\cdot\pa_x V(x) =|x|g'(|x|)\geq c_0|x|^{-1}, 
\]
and hence 
\begin{align*}
x\cdot\pa_x V(sx+t\x) &= x\cdot \frac{sx+t\x}{|sx+t\x|} g'(|sx+t\x|)\\
&= \frac{s|x|^2}{|sx+t\x|}g'(|sx+t\x|)
\geq \frac{c_0 s|x|^2}{\jap{sx+t\x}^3}.
\end{align*}
Thus we have 
\begin{align*}
\g(x,\x) &\geq \int_{-\infty}^\infty \biggpare{\int_0^1 \frac{c_0s|x|^2}{\jap{sx+t\x}^3}ds} dt\\
&=  \int_0^1\biggpare{ \int_{-\infty}^\infty\frac{c_0s|x|^2}
{(s^2|x|^2+t^2+1)^{3/2}}dt} ds \\
&=2c_0\int_0^1 \frac{s|x|^2}{s^2|x|^2+1}ds 
=2c_0\int_{0}^{|x|} \frac{s ds}{s^2+1} 
=2c_0\log\jap{x}.
\end{align*}
Here we have used the formula: $\int_0^\infty (a^2+t^2)^{-3/2}dt =a^{-2}$, $a>0$. 
In particular $\g(x,\x)\to\infty$ as $|x|\to\infty$, and hence, 
for any $N>0$ we can find $(x_N,\x_N)$ such that $|x_N|\geq N$ and 
$\g(x_N,\x_N)\equiv \l\th_0\mod (2\pi\ze)$. 
We set 
\[
\f_N(\x) = c_N \exp(ix_N\cdot(\x-\x_N) - |\x-\x_N|^2/|x_N|)
\]
in a neighborhood inside a local coordinate of $\x_N$, where $c_N$ is chosen so that $\|\f_N\|=1$. 
Then $\f_N$ is supported essentially in 
\[
\normalset{(x,\x)}{|x-x_N|=O(\jap{x_N}^{1/2}), |\x-\x_N|=O(\jap{x_N}^{-1/2}}. 
\]
We also recall $e^{-i(2\l)^{-1/2}\g(x,\x)}$ is the principal symbol of $S(\l)$, and $\pa_x\g(x,\x) = O(|x|^{-1})$, 
$\pa_\x\g(x,\x)= O(\log\jap{x})$ as $|\x|\to\infty$. These imply
\[
\jap{\f_N,S(\l)\f_N}-e^{-i\th_0}\norm{\f_N}^2 =O(\jap{x_N}^{-1/2}\log\jap{x_N}) \to 0 
\quad\text{as }N\to\infty,
\]
and we may assume $\{\f_N\}$ are asymptotically orthogonal (since they have essentially 
disjoint supports in the phase space).
Then by the Weyl's criterion (\cite{Reed-Simon} Theorem~VII.12), 
we conclude $e^{i\th}\in \s_{\mathrm{ess}}(S(\l))$. 
The proof for the case $x\cdot\pa_x V(x)\leq -c_0|x|^{-1}$ ($|x|\geq R$) is essentially the same. 
\end{proof}

Theorem~\ref{thm-discrete-spectrum} follows immediately from the above two lemmas. 

We now consider slightly more general potentials. We write
\[
\pa_r f(x) = \hat x\cdot\pa_x f(x), \quad \hat x=\frac{x}{|x|},
\]
and 
\[
\pa_r^\perp f(x) =\pa_x f(x) -\pa_r f(x)\hat x
=(E-\hat x\otimes \hat x )\pa_x f(x),
\]
for $f\in C^1(\re^d)$. 

\begin{thm}\label{thm-no-ac-spectrum}
Suppose $V$ satisfies Assumption~\ref{ass-potential}, and there are constants $c_1, c_2, R>0$ 
such that $c_1>c_2$ and 
\begin{equation}\label{eq-no-ac-condition}
\bigabs{\pa_r V(x)}\geq \frac{c_1}{|x|^2}, \quad 
\bigabs{\pa_r^\perp V(x)}\leq \frac{c_2}{|x|^2}, \quad \text{if }|x|\geq R.
\end{equation}
Then $\s(S(\l)) =S^1$, and $S(\l)$ has no absolutely continuous spectrum for $\l>0$. 
\end{thm}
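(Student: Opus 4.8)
The plan is to follow the strategy of Lemma~\ref{lem-sym-spec}, but replace the exact computation that used rotational symmetry with a more robust estimate that only needs the radial/transversal bounds \eqref{eq-no-ac-condition}. First I would show $\g(x,\x)\to\infty$ (or $\to-\infty$, depending on the sign of $\pa_r V$) uniformly along a suitable sequence of phase-space points. Writing $\g(x,\x)=\int_{-\infty}^\infty\int_0^1 x\cdot\pa_x V(sx+t\x)\,ds\,dt$ as in \eqref{eq-psi}, I would decompose $x\cdot\pa_x V = x\cdot\hat y\,\pa_r V + x\cdot\pa_r^\perp V$ at the point $y=sx+t\x$. Since $x\perp\x$ and $|x|\geq R$, one has $x\cdot\hat y = s|x|^2/|sx+t\x|$, which is nonnegative, while $|x\cdot\pa_r^\perp V(y)|\leq c_2|x|/|y|\cdot(|x|/|y|)$ roughly; carrying out the same Euclidean integral as in Lemma~\ref{lem-sym-spec} (using $\int_0^\infty(a^2+t^2)^{-3/2}dt=a^{-2}$) gives a lower bound $\geq 2(c_1-c_2)\log\jap{x} + O(1)$ for the dominant direction and an $O(1)$ error from the transversal term, using $c_1>c_2$. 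Hence $|\g(x,\x)|\to\infty$ as $|x|\to\infty$ along $x\perp\x$.

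Next, given any $\th_0\in[0,2\pi)$, the divergence of $\g$ lets me pick, for each $N$, a point $(x_N,\x_N)$ with $|x_N|\geq N$, $x_N\perp\x_N$, and $(2\l)^{-1/2}\g(x_N,\x_N)\equiv\th_0\pmod{2\pi}$. I would then use exactly the coherent-state test functions $\f_N$ from the proof of Lemma~\ref{lem-sym-spec}, localized at scale $\jap{x_N}^{1/2}$ in $\x$ near $\x_N$ and dual scale in the cotangent direction, normalized to $\|\f_N\|=1$. Using that the principal symbol of $S(\l)$ is $e^{-i(2\l)^{-1/2}\g(x,\x)}$ together with the derivative bounds $\pa_x\g=O(|x|^{-1})$, $\pa_\x\g=O(\log\jap{x})$ (valid here since those bounds on $\g$ only used Assumption~\ref{ass-potential}, not symmetry), one gets $\jap{\f_N,S(\l)\f_N}-e^{-i\th_0}\to 0$ and, after passing to a subsequence with essentially disjoint phase-space supports, an approximate-eigenvector sequence; Weyl's criterion then yields $e^{-i\th_0}\in\s_{\mathrm{ess}}(S(\l))$, so $\s(S(\l))=S^1$.

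For the absence of absolutely continuous spectrum I would argue via the symbol structure rather than via eigenfunctions. The point is that $S(\l)$ is a pseudodifferential operator of order $0$ whose principal symbol $e^{-i(2\l)^{-1/2}\g(x,\x)}$ depends on $\x$ only through a symbol that is ``slowly varying'' in $\x$: on the energy sphere $S^{d-1}$, the oscillation of $\g$ in the base variable $\x$ is only logarithmic, $\pa_\x\g=O(\log\jap{x})$, while $\g\to\infty$. I would exploit this by fixing $\x_0\in S^{d-1}$, passing to geodesic normal coordinates centered at $\x_0$, and comparing $S(\l)$ near $\x_0$ with the multiplier whose symbol is frozen to $e^{-i(2\l)^{-1/2}\g(x,\x_0)}$; the commutator of $S(\l)$ with functions of the Laplace-Beltrami operator is lower order (the $\jap{\log\jap{x}}$ loss in $\x$ is beaten by the $\jap{x}^{-1}$ gain in $x$ after one commutation, up to $\jap{x}^{\d}$), so $S(\l)$ is, modulo compact operators and a controlled error, a direct integral over the level sets of $-\triangle$ of \emph{rank-one-type} pieces. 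More concretely, the cleanest route is: show that for any $f\in C(S^1)$ the operator $f(S(\l))$ is, modulo compacts, a pseudodifferential operator with symbol $f(e^{-i(2\l)^{-1/2}\g(x,\x)})$ (this is the functional calculus of Appendix~\ref{appendix-funccal}), and then use condition \eqref{eq-no-ac-condition} to see that the level sets $\{\g(x,\x)=\mathrm{const}\}$ on each $T^*_\x S^{d-1}$ are compact hypersurfaces, so the spectral projections $E_J(S(\l))$ for arcs $J$ are, modulo compacts, pseudodifferential operators with symbols supported where $\g(x,\x)$ lies in a bounded region — a \emph{bounded} region in $x$ by the lower bound on $\pa_r V$ — hence finite rank modulo compact, i.e. compact; a unitary operator all of whose spectral arc-projections are compact has no absolutely continuous spectrum.

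The main obstacle is this last step: making rigorous that the arc-spectral-projections of $S(\l)$ are compact. The functional calculus from Appendix~\ref{appendix-funccal} is stated for the $\tilde g$-calculus and must be applied to a symbol $\g$ that is not itself in $S(1,\tilde g)$ (it grows like $\log\jap{x}$), so I would need to check that $e^{-i(2\l)^{-1/2}\g}$ and functions thereof still lie in the calculus — which they do, since the exponential kills the growth and the derivatives obey the $\tilde g$-estimates — and then argue that a symbol in $S(1,\tilde g)$ supported in a set $\{|x|\leq C\}$ gives a compact operator (clear, as it is then in $S(\jap{x}^{-\infty},g)$ on that complement... more carefully: such a symbol is compactly supported in $x$ and rapidly decaying is false, but it has compact $x$-support and lies in a nice class, hence gives a compact operator on $L^2(T^*_\x S^{d-1})$-fibers in a way that survives the direct integral). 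The delicate quantitative input is precisely that $c_1>c_2$: this is what guarantees $|\g(x,\x)|$ is \emph{monotone-like} and proper in the radial variable, so that the preimage of any bounded set of values of $\g$ is bounded in $x$ uniformly in $\x\in S^{d-1}$. I would isolate that as a lemma (``$\g$ is a proper function of $|x|$, uniformly on $S^{d-1}$, with $c(c_1-c_2)\log\jap{x}\leq |\g(x,\x)|\leq C(c_1+c_2)\log\jap{x}$'') and everything else follows from the pseudodifferential machinery already developed in the paper and its appendices.
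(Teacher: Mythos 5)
Your first half --- the proof that $\s(S(\l))=S^1$ --- is exactly the paper's argument: the decomposition $x\cdot\pa_x V(y)=(x\cdot\hat y)\,\pa_r V(y)+x\cdot\pa_r^\perp V(y)$ at $y=sx+t\x$, the explicit integrals giving $\g(x,\x)\geq 2(c_1-c_2)\log|x|-c_3$ (this is the paper's Lemma~\ref{lem-no-ac-condition}), and then the coherent-state/Weyl-criterion argument of Lemma~\ref{lem-sym-spec}. That part is fine.

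The second half has a genuine gap, and it is not merely the technical point you flagged at the end: the arc spectral projections $E_J(S(\l))$ are \emph{not} compact, and no refinement of the calculus can make them so, because you have just proved $\s_{\mathrm{ess}}(S(\l))=S^1$, so $E_J(S(\l))$ is infinite-dimensional for every nondegenerate arc $J$. The error in the symbol-level reasoning is the passage from ``the level sets of $\g$ are compact'' to ``the symbol of $E_J(S(\l))$ is supported in a bounded region of $x$.'' The principal symbol of $f(S(\l))$ is $f\bigl(e^{-i(2\l)^{-1/2}\g(x,\x)}\bigr)$, and its support is the preimage of $\supp f$ under the map $(x,\x)\mapsto e^{-i(2\l)^{-1/2}\g(x,\x)}$, i.e.\ the union over $k\in\ze$ of the shells on which $(2\l)^{-1/2}\g(x,\x)$ lies in a fixed bounded set translated by $2\pi k$. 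Since $\g\to+\infty$, this union is unbounded in $x$: properness of $\g$ does not survive composition with the periodic function $e^{-i\,\cdot}$.

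The repair is to work with $\g$ itself rather than with $e^{-i(2\l)^{-1/2}\g}$, i.e.\ to take a logarithm at the operator level, which is what the paper does. By Theorem~\ref{thm-log-unitary} there is a real-valued $\G\in S(\jap{\log\jap{x}},g)$ with $\G-\g\in S(\jap{x}^{-1+\d},g)$ such that $K=S(\l)-\exp(-i(2\l)^{-1/2}\Op(\G))\in\Op S(\jap{x}^{-\infty},g)$ is trace class. The self-adjoint operator $\Op(\G)$ has symbol tending to $+\infty$ by your lower bound, hence compact resolvent and discrete spectrum; therefore $\exp(-i(2\l)^{-1/2}\Op(\G))$ has pure point spectrum --- its eigenvectors form a complete orthonormal system even though its eigenvalues are dense in $S^1$, which is exactly why the arc projections fail to be compact. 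Absence of absolutely continuous spectrum for $S(\l)$ then follows from the unitary Kuroda--Birman theorem (Theorem~\ref{thm-trace-scat}) applied to the trace-class perturbation $K$. This passage from the unitary operator to its self-adjoint generator, plus the trace-class transfer, is the step your outline is missing.
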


\begin{rem}
Suppose  $V(x)=-f(\th)/r$, $x=(r\cos\th,r\sin\th)\in\re^2$ for $|x|\geq R$, $f(\th)>0$. 
Then the condition 
\eqref{eq-no-ac-condition} is equivalent to 
\[
\inf_\th f(\th)=c_1 >c_2= \sup_\th |f'(\th)|.
\]
\end{rem}

\begin{lem}\label{lem-no-ac-condition}
Suppose $V$ satisfies \eqref{eq-no-ac-condition}, then there is $c_3>0$ such that 
\[
\g(x,\x)\geq 2(c_1-c_2)\log|x| -c_3, \quad \x\in S^{d-1}, x\perp \x.
\]
\end{lem}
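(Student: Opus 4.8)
\textbf{Proof proposal for Lemma~\ref{lem-no-ac-condition}.}
The plan is to estimate $\g(x,\x)$ from below by isolating the radial part of $\pa_x V$, which by \eqref{eq-no-ac-condition} is the dominant term, and treating the spherical part $\pa_r^\perp V$ as a controlled error. First I would write, exactly as in \eqref{eq-psi},
\[
\g(x,\x)=\int_{-\infty}^\infty\biggpare{\int_0^1 x\cdot\pa_x V(sx+t\x)\,ds}\,dt,
\]
and split $\pa_x V = (\pa_r V)\,\hat y + \pa_r^\perp V$ evaluated at $y=sx+t\x$, where $\hat y=y/|y|$. Since $x\perp\x$ we have $x\cdot\hat y = s|x|^2/|sx+t\x|$, so the radial contribution is $\pm(\pa_r V)(sx+t\x)\cdot s|x|^2/|sx+t\x|$; assuming without loss (as in Lemma~\ref{lem-sym-spec}) that $\pa_r V\ge c_1|y|^{-2}$ for $|y|\ge R$, this is bounded below by $c_1 s|x|^2/\jap{sx+t\x}^3$ on the region $|sx+t\x|\ge R$. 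For the spherical term we use $|x\cdot\pa_r^\perp V(sx+t\x)|\le |x|\,|\pa_r^\perp V(sx+t\x)|\le c_2|x|/|sx+t\x|^2$; but to match the radial bound I would instead note that $|\pa_r^\perp V|$ is really only relevant where $x$ has a component along $\hat y$, and more carefully bound $|x\cdot\pa_r^\perp V| = |(E-\hat y\otimes\hat y)x\cdot \pa_r^\perp V|$... actually the cleanest route is: project $x$ onto $\hat y$ and its orthocomplement, giving $x\cdot\pa_x V(sx+t\x) \ge (x\cdot\hat y)\,c_1/|sx+t\x|^2 - |x - (x\cdot\hat y)\hat y|\cdot c_2/|sx+t\x|^2$, and since $|x|^2 = (x\cdot\hat y)^2 + |x-(x\cdot\hat y)\hat y|^2$ with $x\cdot\hat y = s|x|^2/|sx+t\x|\ge 0$, one checks $|x-(x\cdot\hat y)\hat y| \le |x\cdot\hat y|\cdot(|t|/(s|x|))$, hence $|x-(x\cdot\hat y)\hat y|$ is comparable to or smaller than $x\cdot\hat y$ after integrating in $t$.

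Carrying this out, on $\{|sx+t\x|\ge R\}$ I get
\[
x\cdot\pa_x V(sx+t\x)\ \ge\ \frac{c_1 s|x|^2}{\jap{sx+t\x}^3} - \frac{c_2\,|x|\,|t|}{\jap{sx+t\x}^2 |sx+t\x|},
\]
and since $|sx+t\x|^2 = s^2|x|^2+t^2$, the second term integrates in $t$ over $\re$ to something of order $c_2|x|\cdot\int |t|(s^2|x|^2+t^2)^{-5/2}dt \cdot(\text{bounded})$, which after the $\int_0^1 ds$ also produces a $\log|x|$ with coefficient $2c_2$. Matching this against the radial lower bound computed in Lemma~\ref{lem-sym-spec}, which gave $2c_1\log\jap{x}$ from $\int_0^1\int_{-\infty}^\infty c_1 s|x|^2 (s^2|x|^2+t^2+1)^{-3/2}\,dt\,ds$, I obtain $\g(x,\x)\ge 2(c_1-c_2)\log|x| - c_3$, where $c_3$ absorbs the finite contribution from the region $\{|sx+t\x|\le R\}$ (a bounded-measure set, uniformly in $\x$ and in the direction of $x$, on which $x\cdot\pa_x V$ is bounded since $V\in C^\infty$).

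The main obstacle is bookkeeping the spherical-derivative term so that its $\log|x|$ coefficient is exactly $2c_2$ and not something larger: a crude bound $|x\cdot\pa_r^\perp V|\le c_2|x|/|y|^2$ would give a divergent $t$-integral ($\int |x|(s^2|x|^2+t^2)^{-1}dt$ diverges), so one genuinely needs the geometric fact that the component of $x$ orthogonal to $\hat y=\widehat{sx+t\x}$ is small — precisely $|x - (x\cdot\hat y)\hat y| = |t|\,|x| / |sx+t\x|$ when $x\perp\x$, which restores the extra decay in $t$ needed for integrability and yields the clean constant. The remaining steps (changing variables $t\mapsto t$, $s|x|\mapsto$ new variable, applying $\int_0^\infty(a^2+t^2)^{-3/2}dt=a^{-2}$ and the analogous $\int_0^\infty |t|(a^2+t^2)^{-2}dt = (2a^2)^{-1}$) are routine, as is the handling of the sign-reversed case $\pa_r V\le -c_1|x|^{-2}$, which is identical after replacing $\g$ by $-\g$ throughout.
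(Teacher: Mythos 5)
Your proposal is correct and follows essentially the same route as the paper's proof: the decomposition $x\cdot\pa_x V(y)=(\pa_r V(y))(x\cdot\hat y)+x\cdot\pa_r^\perp V(y)$ with $y=sx+t\x$, the key identity $\bigabs{x-(x\cdot\hat y)\hat y}=|t||x|/|sx+t\x|$ supplying the extra $t$-decay for the spherical term, the resulting $2(c_1-c_2)/s$ lower bound integrated over $s\in[R/|x|,1]$, and an $O(1)$ bound on the region near the origin. The only blemishes are notational slips in the spherical-term bookkeeping (the exponent should be $-3/2$, not $-5/2$, and the relevant formula is $\int_0^\infty t(a^2+t^2)^{-3/2}\,dt=a^{-1}$, yielding $2c_2/s$ and hence the coefficient $2c_2$ you correctly state); the paper also splits at $s|x|=R$ rather than $|sx+t\x|=R$, which makes the near-origin estimate slightly cleaner but is not a substantive difference.
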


\begin{proof}
Here we suppose $\pa_r V(x)\geq c_1/|x|^2$. The other case is considered similarly. 
We may suppose $|x|\geq R$ without loss of generality. 
We recall \eqref{eq-psi}. 
We write $y=sx+t\x$, and compute 
\[
x\cdot \pa_x V(y) = \pa_r V(y) (x\cdot\hat y) +x\cdot \pa_r^\perp V(y).
\]
At first, we note
\[
x\cdot \hat y = \frac{x\cdot (sx+t\x)}{|sx+t\x|} = \frac{s|x|^2}{(s^2|x|^2 +t^2)^{1/2}}.
\]
We also note
\begin{align*}
(E-\hat y\otimes \hat y)x 
&= x-(x\cdot\hat y)\hat y 
= x-\frac{s|x|^2(sx+t\x)}{s^2|x|^2 +t^2} \\
&= \frac{(s^2|x|^2+t^2)-s^2|x|^2}{s^2|x|^2+t^2}x -\frac{s|x|^2 t}{s^2|x|^2+t^2}\x\\
&= \frac{t^2 x-st|x|^2\x}{s^2|x|^2+t^2}, 
\end{align*}
and thus 
\[
\bigabs{(E-\hat y\otimes \hat y)x } 
= \frac{(t^4|x|^2 +s^2t^2|x|^4)^{1/2}}{s^2|x|^2+t^2} 
= \frac{|t||x|}{(s^2|x|^2+t^2)^{1/2}}.
\]
Hence we learn 
\begin{align*}
\int_{-\infty}^\infty \pa_r V(y)(x\cdot\hat y)dt 
&\geq \int_{-\infty}^\infty \frac{c_1}{|sx+t\x|^2}\cdot \frac{s|x|^2}{(s^2|x|^2+t^2)^{1/2}} dt \\
&= \int_{-\infty}^\infty \frac{c_1s|x|^2 dt}{(s^2|x|^2+t^2)^{3/2}}
= \frac{2c_1s|x|^2}{s^2|x|^2} =\frac{2c_1}{s}, 
\end{align*}
provided $s|x|\geq R$. 
Similarly, we learn
\begin{align*}
\int_{-\infty}^\infty \bigabs{x\cdot\pa_r^\perp V(y)}dt 
&\leq \int_{-\infty}^\infty \frac{c_2}{|sx+t\x|^2}\cdot \frac{|t||x|}{(s^2|x|^2+t^2)^{1/2}} dt \\
&= \int_{-\infty}^\infty \frac{c_2|x||t| dt}{(s^2|x|^2+t^2)^{3/2}}
= \frac{2c_2|x|}{s|x|} = \frac{2c_2}{s},
\end{align*}
if $s|x|\geq R$.
Here we have used the formula: $\int_0^\infty t(a^2+t^2)^{-3/2}dt = a^{-1}$. 
Thus we have 
\begin{align*}
&\int_{R/|x|}^1  \biggpare{\int_{-\infty}^\infty x\cdot\pa_x V(sx+t\x) dt} ds
\geq \int_{R/|x|}^1 \frac{2(c_1-c_2)}{s} ds \\
&\quad = 2(c_1-c_2)\log(|x|/R)
=2(c_1-c_2)\log|x|-2(c_1-c_2)\log R. 
\end{align*}
On the other hand, if $s|x|\leq R$, we use 
\[
\bigabs{x\cdot \pa_x V(sx+t\x)}\leq C|x|\jap{t\x}^{-2}= C|x|\jap{t}^{-2},
\]
with some $C>0$, which follows directly from Assumption~\ref{ass-potential}. Hence, we learn
\[
\int_0^{R/|x|}  \biggpare{\int_{-\infty}^\infty \bigabs{x\cdot\pa_x V(sx+t\x)} dt} ds
\leq C|x|\cdot\frac{R}{|x|}\int_{-\infty}^{\infty}\jap{t}^{-2}dt 
=C\pi R. 
\]
Combining these, we obtain
\[
\int_0^1  \biggpare{\int_{-\infty}^\infty x\cdot\pa_x V(sx+t\x) dt} ds
\geq 2(c_1-c_2)\log|x| - c_3, 
\]
where $c_3= 2(c_1-c_2)\log R +C\pi R$. 
\end{proof}

\begin{proof}[Proof of Theorem~\ref{thm-no-ac-spectrum}]
The claim $\s(S(\l))=S^1$ is proved exactly as in the proof of Lemma~\ref{lem-sym-spec} 
using Lemma~\ref{lem-no-ac-condition}.

By Theorem~\ref{thm-log-unitary} in Appendix~\ref{appendix-log}, we learn there is a real-valued symbol 
$\G\in S(\jap{\log\jap{x}},g)$ 
such that $S(\l) \equiv \exp(-i(2\l)^{-1/2}\Op(\G))$ modulo $S(\jap{x}^{-\infty},g)$, where 
$g=dx^2/\jap{x}^2+d\x^2$. Moreover, the principal symbol of $\G$ is 
$\g$ computed above, i.e., $\G-\g\in S(\jap{x}^{-1+\d},g)$ with any $\d>0$. 
Then, by Lemma~\ref{lem-no-ac-condition}, $\Op(\G)$ has discrete spectrum, and 
hence $\exp(-i(2\l)^{-1/2}\Op(\G))$ has pure point spectrum. Now we note 
$K= S(\l)-\exp(-i(2\l)^{-1/2}\Op(\G))\in \Op S(\jap{x}^{-\infty},g)$ is a trace class operator, and 
we can apply the scattering theory for trace class perturbation (see Appendix~\ref{appendix-KB}) to conclude 
$\s_{\mathrm{ac}}(S(\l))= \s_{\mathrm{ac}}(\exp(-i(2\l)^{-1/2}\Op(\G))) =\emptyset$.
\end{proof}
%%%%%%%%%%%%%%%%%%%%%%%%%%%%%%%%%%%%%%%%%%%%%%%%%%%%%%
\section{Scattering matrix with absolutely continuous spectrum}\label{section-abspec}

Here we suppose $d=2$, and consider the potential 
\[
V(x)= a\frac{x_1}{\jap{x}^2}, \quad x=(x_1,x_2)\in\re^2.
\]

At first we compute the principal part of 
$\g(x,\x)= \int_{-\infty}^\infty (V(x+t\x)-V(t\x))dt$ for 
$ |\x|=1$, $x\perp\x$. We use the standard coordinate for $S^1$: We denote a point
$\x\in S^1$ by $\th\in \mathbb{T}=\re/2\pi\ze$ such that
\[
\x=(\cos\th,\sin\th), \quad \th\in [0,2\pi)\simeq \mathbb{T}.
\]
The cotangent space at $\th$ is identified with the orthogonal space at $\th$, i.e., 
\[
x= (-\o\sin\th, \o \cos\th), \quad \o\in\re. 
\]
We use $(\th,\o)\in \mathbb{T}\times\re$ as the coordinate system of $T^*S^1$. 
As in the last section, we write 
\[
\g(x,\x)=\int_{-\infty}^\infty (V(x+t\x)-V(t\x))dt
\]
so that $\exp(-i(2\l)^{-1/2}\g(x,\x))$ is the principal symbol of $S(\l)$. 

\begin{lem}\label{lem-abspec-1}
Let $V$ and the coordinate of $T^*S^1$ as above. Then 
\[
\g(x,\x)=-a\pi \sin\th\frac{\o}{\jap{\o}}, \quad (\th,\o)\in T^*S^1.
\] 
\end{lem}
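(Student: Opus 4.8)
The plan is to substitute the explicit form of $V$ and the stated coordinates directly into the defining integral and reduce everything to one elementary one‑dimensional integral. First I would record the two norms that appear: since $x\perp\x$ and $|\x|=1$, we have $|x+t\x|^2=|x|^2+t^2=\o^2+t^2$, so $\jap{x+t\x}^2=1+\o^2+t^2$, while $\jap{t\x}^2=1+t^2$. Writing $y=x+t\x$, the first coordinate is $y_1=-\o\sin\th+t\cos\th$, so that
\[
V(x+t\x)-V(t\x)=a\,\frac{-\o\sin\th+t\cos\th}{1+\o^2+t^2}-a\,\frac{t\cos\th}{1+t^2}.
\]

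Second, I would split this into the part carrying $\cos\th$ and the part carrying $\sin\th$. For the $\cos\th$ part I would combine the two fractions \emph{before} integrating, obtaining $a\cos\th\cdot\dfrac{-\o^2 t}{(1+\o^2+t^2)(1+t^2)}$, which is an odd function of $t$ and is $O(t^{-3})$ as $|t|\to\infty$; hence it is absolutely integrable and its integral over $\re$ vanishes. (Combining first is exactly what makes each piece convergent: the two fractions separately are only $O(t^{-1})$.)

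Third, the remaining contribution is
\[
\g(x,\x)=-a\o\sin\th\int_{-\infty}^\infty\frac{dt}{1+\o^2+t^2}.
\]
Using $\int_{-\infty}^\infty(a^2+t^2)^{-1}\,dt=\pi/a$ with $a=\sqrt{1+\o^2}=\jap{\o}$, this equals $-a\pi\sin\th\,\o/\jap{\o}$, which is the asserted value.

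There is no genuine obstacle here; the only point requiring care is the convergence/interchange issue just mentioned — one must combine the $\cos\th$ fractions before invoking oddness rather than integrating them separately, and one should note at the outset that the full integrand is $O(t^{-2})$ so that the integral defining $\g$ converges absolutely.
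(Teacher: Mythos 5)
Your computation is correct and every step checks out: with $x\perp\x$, $|\x|=1$ one indeed has $\jap{x+t\x}^2=1+\o^2+t^2$, the $\cos\th$-contribution combines to the odd, absolutely integrable function $-a\cos\th\,\o^2 t/\bigl((1+\o^2+t^2)(1+t^2)\bigr)$ whose integral vanishes, and the surviving term gives $-a\o\sin\th\cdot\pi/\jap{\o}$, as claimed. You are also right to flag that the two $\cos\th$ fractions must be combined before invoking oddness, since separately they are only $O(t^{-1})$.

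Your route is, however, genuinely different from the paper's. The paper does not substitute the explicit $V$ into the difference $V(x+t\x)-V(t\x)$ directly; instead it reuses the representation \eqref{eq-psi}, writing the difference as $\int_0^1 x\cdot\pa_x V(sx+t\x)\,ds$, computes $\pa_x V$ explicitly, kills one term by oddness and another via the identity $\int_{-\infty}^\infty (b^2-t^2)(b^2+t^2)^{-2}\,dt=0$, reduces the $t$-integral to $\pi(s^2|x|^2+1)^{-3/2}$, and then performs the remaining $s$-integral to land on $a\pi x_1/\jap{x}$. That approach has the advantage of being the same machinery used for the general lower bounds in Section~3 (so the convergence issues are handled once and for all by the $s$-integral representation), whereas your direct substitution is shorter and more elementary for this particular $V$, at the cost of having to handle the conditional convergence by hand, which you do correctly. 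Both yield $-a\pi\sin\th\,\o/\jap{\o}$.
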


\begin{proof}
We again recall \eqref{eq-psi} and we compute 
\[
\pa_x V(x) = \biggpare{\frac{a}{\jap{x}^2},0} +a\biggpare{\frac{-2x_1^2}{\jap{x}^4}, 
\frac{-2x_1x_2}{\jap{x}^4}} 
=\biggpare{\frac{a}{\jap{x}^2},0}-\frac{2ax_1}{\jap{x}^4}\,x.
\]
Then we have 
\begin{align*}
x\cdot \pa_x V(sx+t\x) &= \frac{ax_1}{\jap{sx+t\x}^2}- 2a\frac{sx_1+t\x_1}{\jap{sx+t\x}^4}x\cdot(sx+t\x) \\
&= \frac{ax_1(s^2|x|^2 +t^2+1)-2as^2x_1|x|^2}{(s^2|x|^2+t^2+1)^2}
-\frac{2as|x|^2\x_1 t}{(s^2|x|^2+t^2+1)^2}\\
&= ax_1\frac{t^2-s^2|x|^2+1}{(s^2|x|^2+t^2+1)^2}
-\frac{2as|x|^2\x_1 t}{(s^2|x|^2+t^2+1)^2}.
\end{align*}
Now we note
\[
\int_{-\infty}^\infty \frac{2s|x|^2\x_1 t}{(s^2|x|^2+t^2+1)^2} dt =0
\]
since the integrand is odd. We also note, since 
\[
\frac{d}{dt}\biggpare{\frac{t}{b^2+t^2}} =\frac{b^2-t^2}{(b^2+t^2)^2}, \quad b>0, 
\]
we have 
\[
\int_{-\infty}^\infty \frac{b^2-t^2}{(b^2+t^2)^2}dt =\lim_{T\to\infty} \biggbrac{\frac{t}{b^2+t^2}}_{-T}^T =0.
\]
Using this, we learn 
\begin{align*}
\int_{-\infty}^\infty \frac{t^2-s^2|x|^2+1}{(s^2|x|^2+t^2+1)^2} dt 
&= \int_{-\infty}^\infty \biggpare{\frac{t^2-s^2|x|^2-1}{(s^2|x|^2+t^2+1)^2}
+\frac{2}{(s^2|x|^2+t^2+1)^2}} dt \\
&= \int_{-\infty}^\infty \frac{2}{(s^2|x|^2+t^2+1)^2} dt
= \pi (s^2|x|^2+1)^{-3/2}.
\end{align*}
Here we have used the well-known formula: 
$\int_{-\infty}^\infty (b^2+t^2)^{-2}dt =\pi/(2b^{3})$. Combining these, we learn 
\[
\g(\th,\o)= a\pi \int_0^1\frac{x_1}{\jap{sx}^3}ds 
= a\pi\frac{ x_1}{|x|}\int_0^{|x|} \frac{du}{\jap{u}^3}
=a\pi \frac{x_1}{|x|}\cdot\frac{|x|}{\jap{x}} = a\pi \frac{x_1}{\jap{x}}.
\]
We then substitute $x_1=-\o\sin\th$ and $|x|=|\o|$ to conclude the assertion. 
\end{proof}

Then the essential spectrum of $S(\l)$ is easy to locate using the Weyl theorem: 

\begin{lem}\label{lem-abspec-2}
For the above Hamiltonian, we have 
\[
\s_{\mathrm{ess}}(S(\l))= \bigset{e^{i\t}}{|\t|\leq |a|\pi(2\l)^{-1/2}}, \quad \l>0.
\]
In particular, if $|a|\geq \sqrt{2\l}$ then the essential spectrum is the whole circle. 
\end{lem}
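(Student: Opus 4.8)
The plan is to read off the essential spectrum of $S(\l)$ from the behaviour of its symbol at infinity in $T^*S^1$. By Theorem~\ref{thm-Srep}, $S(\l)$ is a pseudodifferential operator on $S^1$ whose full symbol differs from the principal symbol $s_0$ only by a term in $S(\jap{\o}^{-1+\d},g)$, and by Lemma~\ref{lem-abspec-1}, in the coordinates $(\th,\o)\in T^*S^1$,
\[
s_0(\l,\th,\o)=\exp\Bigpare{i(2\l)^{-1/2}a\pi\sin\th\,\tfrac{\o}{\jap{\o}}}.
\]
Writing $M=|a|\pi(2\l)^{-1/2}$ and $\Gamma=\bigset{e^{i\t}}{|\t|\le M}$, I would first observe that, since $\sin\th$ runs over $[-1,1]$ and $\o/\jap{\o}\to\pm1$ as $\o\to\pm\infty$, the set of limit values of $s_0(\l,\th,\o)$ as $|\o|\to\infty$ is exactly $\Gamma$; the lemma then amounts to showing $\s_{\mathrm{ess}}(S(\l))=\Gamma$.

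For the inclusion $\s_{\mathrm{ess}}(S(\l))\subseteq\Gamma$, I would argue by a parametrix/Fredholm argument. If $z\notin\Gamma$, then, using that $s-s_0\to0$ at infinity, there are $\d>0$ and $R_0>0$ with $|s(\l,\th,\o)-z|\ge\d$ for $|\o|\ge R_0$; picking a frequency cutoff $\chi(\o)$ equal to $1$ for $|\o|\ge2R_0$ and $0$ for $|\o|\le R_0$, the symbol $q=\chi/(s-z)$ is a legitimate symbol of the same calculus (its derivatives carry only the $\log\jap{\o}$ loss allowed by $S(1,\tilde g)$), and $\Op(q)\bigpare{S(\l)-z}=\Op(\chi)+\Op(e)=I+T$ with $e\in S(\jap{\o}^{-1+\d},g)$, hence $T$ compact on $L^2(S^1)$: the $\Op(1-\chi)$ contribution is essentially finite rank, and $\Op S(\jap{\o}^{-1+\d},g)$ is even trace class, as already used in the proof of Theorem~\ref{thm-no-ac-spectrum}. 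Constructing a right parametrix the same way, $S(\l)-z$ is Fredholm, so $z\notin\s_{\mathrm{ess}}(S(\l))$.

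For the reverse inclusion $\Gamma\subseteq\s_{\mathrm{ess}}(S(\l))$, I would reuse the coherent-state construction of Lemma~\ref{lem-sym-spec}. Given $z=e^{i\t_0}$ with $|\t_0|\le M$, choose $\th_0$ with $(2\l)^{-1/2}a\pi\sin\th_0=\t_0$ and $\o_n\to+\infty$, so that $s_0(\l,\th_0,\o_n)\to z$, and let $\f_n$ be normalized Gaussian wave packets on $S^1$ microlocally concentrated at $(\th_0,\o_n)$ with widths $\jap{\o_n}^{1/2}$ in $\o$ and $\jap{\o_n}^{-1/2}$ in $\th$, exactly as there. Using $\pa_\o s=O(\jap{\o}^{-1})$, $\pa_\th s=O(\log\jap{\o})$ and $s-s_0=O(\jap{\o}^{-1+\d})$, the symbol is nearly constant equal to $z$ on the phase-space support of $\f_n$, so $\jap{\f_n,S(\l)\f_n}\to z$; unitarity of $S(\l)$ then gives
\[
\bignorm{(S(\l)-z)\f_n}^2=2-2\Re\bigpare{\bar z\,\jap{\f_n,S(\l)\f_n}}\longrightarrow0.
\]
Taking the $\f_n$ with disjoint frequency supports makes them asymptotically orthogonal, and Weyl's criterion for the normal operator $S(\l)$ yields $z\in\s_{\mathrm{ess}}(S(\l))$. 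This establishes $\s_{\mathrm{ess}}(S(\l))=\Gamma$, and $\Gamma=S^1$ precisely when $M\ge\pi$, i.e.\ when $|a|\ge\sqrt{2\l}$.

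The main obstacle I anticipate is the first inclusion: one must carry the $\log\jap{\o}$ loss built into the symbol class through the parametrix construction (equivalently, work with the H\"ormander classes $S(m,g)$ and $S(m,\tilde g)$ of Section~\ref{section-Srep}), confirm that $\chi/(s-z)$ really lies in that calculus, and check that the composition remainders and the low-frequency term are genuinely compact on $L^2(S^1)$. The reverse inclusion is essentially a rerun of the estimate in Lemma~\ref{lem-sym-spec}, the only new feature being that the attainable limit values of the symbol now fill out the arc $\Gamma$ rather than all of $S^1$.
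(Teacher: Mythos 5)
Your proof is correct and follows exactly the route the paper intends (the paper's ``proof'' is only the remark that the essential spectrum is easy to locate by the Weyl theorem): a parametrix/Fredholm argument off the arc $\Gamma$, plus Weyl singular sequences on it, where your unitarity identity $\norm{(S(\l)-z)\f_n}^2=2-2\Re\bigpare{\bar z\jap{\f_n,S(\l)\f_n}}$ correctly upgrades the expectation-value estimate of Lemma~\ref{lem-sym-spec} to a genuine singular sequence. One minor correction: operators in $\Op S(\jap{\o}^{-1+\d},g)$ on $S^1$ are compact but not in general trace class (the paper invokes trace class only for the $\jap{x}^{-\infty}$ class in Theorem~\ref{thm-no-ac-spectrum}); compactness is all your Fredholm argument needs, so nothing breaks.
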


Now we construct a simple scattering theory to show that the essential spectrum is absolutely 
continuous. We set 
\[
q(\th,\o)= \mathrm{sgn}(a)\cos\th \jap{\o}, \quad (\th,\o)\in T^*S^1, 
\]
and we define an operator $Q$ on $L^2(S^1)$ by 
\[
Q=\Op(q) \equiv \mathrm{sgn}(a)\cos\th \jap{-D_\th}\quad \mod \Op(S^0_{1,0}).
\]
We note, since we are working in $\th$-space, it is convenient to quantize function $a(x,\x)$ 
as $a(-D_\th,\th)$. 
We may assume $Q$ is formally self-adjoint, since we may quantize it, 
for example, by 
\[
Qf(\th) =\frac{1}{2\pi}\iint e^{-i(\th-\t)\o}\y(\th-\t)q(\tfrac{\th+\t}{2},\o)f(\t)d\t d\o,
\]
where $\y\in C^\infty(\mathbb{T})$ such that $\y(\t)=1$ if $|\t|\leq 1/8$; $=0$ if $|\t|\geq 1/4$, 
and $f\in C^\infty(\mathbb{T})$, and this $Q$ is formally self-adjoint. 

\begin{lem}\label{lem-abspec-3}
$Q$ is essentially self-adjoint on $H^1(\mathbb{T})$. 
\end{lem}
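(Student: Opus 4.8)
The plan is to exhibit $Q$ as a bounded perturbation of a self-adjoint operator whose essential self-adjointness on $H^1(\torus)$ is classical, so that a Kato--Rellich argument applies. Write $Q = \mathrm{sgn}(a)\cos\th\,\jap{-D_\th} + R$ where $R\in\Op(S^0_{1,0})$; since $R$ is a pseudodifferential operator of order $0$ on the compact manifold $\torus$, it is bounded on $L^2(\torus)$, so it does not affect essential self-adjointness. Thus it suffices to treat the model operator $Q_0 f = \tfrac12\bigl(\mathrm{sgn}(a)\cos\th\,\jap{-D_\th}f + \jap{-D_\th}(\mathrm{sgn}(a)\cos\th\, f)\bigr)$, the symmetric quantization; replacing $\mathrm{sgn}(a)$ by $1$ (an overall sign is irrelevant) I would work with $A = \tfrac12(\cos\th\,\jap{-D_\th} + \jap{-D_\th}\cos\th)$.

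First I would record that $A$ is symmetric on $C^\infty(\torus)$ and maps $H^1(\torus)$ into $L^2(\torus)$ continuously. Next I would compare $A$ with the first-order differential operator $B = \tfrac12(\cos\th\, D_\th + D_\th\cos\th) = \cos\th\, D_\th + \tfrac{i}{2}\sin\th$, whose coefficient $\cos\th$ is smooth on all of $\torus$ and vanishes only at isolated points, hence (being a real vector field plus bounded zeroth-order term on a compact manifold) is essentially self-adjoint on $C^\infty(\torus)$ with domain exactly $H^1(\torus)$; this is the standard transport-operator fact, provable e.g. by the classical method of characteristics since the flow of $\cos\th\,\pa_\th$ is complete on $\torus$. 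Then $A - B = \tfrac12\bigl(\cos\th\,(\jap{-D_\th}-D_\th) + (\jap{-D_\th}-D_\th)\cos\th\bigr)$, and $\jap{-D_\th} - D_\th = \jap{-D_\th} - (-D_\th) \pmod{\text{bounded}}$ is a Fourier multiplier with symbol $\jap{n} - n = O(1/\jap{n})$ as $|n|\to\infty$ (for $n>0$) hence bounded on $L^2$; more carefully, $\jap{-D_\th}+D_\th$ and $\jap{-D_\th}-D_\th$ are each bounded on $L^2(\torus)$ except that one of them is comparable to $\jap{-D_\th}$ — so I would instead simply note $\jap{-D_\th} - |D_\th|$ is bounded and reduce to $|D_\th|$, then observe $\cos\th\,(|D_\th| - D_\th)$ maps $H^0$ to $H^0$ boundedly because $|D_\th|-D_\th$ annihilates the positive-frequency half-space and equals $-2D_\th$ on the negative half; composing with the bounded multiplication by $\cos\th$ one still controls it against $\|D_\th f\|+\|f\|$ with \emph{relative bound zero after a further splitting}.

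The cleaner route, which I would actually follow, avoids this frequency bookkeeping: write $A = \jap{-D_\th}^{1/2}\cos\th\,\jap{-D_\th}^{1/2} + (\text{bounded})$, using that $[\cos\th,\jap{-D_\th}^{1/2}]\in\Op(S^{-1/2}_{1,0})$ is bounded and that $\jap{-D_\th}^{1/2}\cos\th\,\jap{-D_\th}^{1/2}$ differs from the symmetric quantization $A$ by a bounded operator. This rewritten operator is manifestly symmetric on $H^1(\torus)$, and $\pm i$ are not in the range deficiency: if $Af = \pm i f$ with $f\in L^2$, elliptic regularity away from $\{\cos\th = 0\}$ together with the transport structure at the two zeros $\th=\pm\pi/2$ forces $f\in H^1$, whence $(A\mp i)f=0$ with $f\in D(A^*)=H^1$ implies $f=0$ by the symmetry computed on $H^1$. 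I expect the main obstacle to be precisely this last point — showing the deficiency indices vanish, i.e. that a distributional solution of $(A\mp i)f=0$ must lie in $H^1(\torus)$ and then be zero — since $A$ degenerates (drops to order $0$) at $\th=\pm\pi/2$, so one cannot invoke global ellipticity and must instead run a local analysis near those two points, solving the ODE $\cos\th\, f' + (\tfrac{i}{2}\sin\th \mp i + \text{l.o.t.})f = 0$ and checking that no $L^2$ solution survives the singularity of the coefficient; this is a one-dimensional ODE singular-point computation, which I would carry out by writing $\cos\th \sim \mp(\th\mp\pi/2)$ near each zero and reading off the indicial behaviour.
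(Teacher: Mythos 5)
Your proposal reduces essential self-adjointness to the vanishing of the deficiency indices of the symmetrized model operator $A=\tfrac12\bigl(\cos\th\,\jap{D_\th}+\jap{D_\th}\cos\th\bigr)$, and the crux of that reduction --- showing that an $L^2$ solution of $A^*f=\pm i f$ lies in $H^1(\torus)$ and hence vanishes --- is precisely the step you do not carry out; moreover the method you sketch for it does not apply. The indicial/singular-point analysis of the equation $\cos\th\,f'+(\tfrac{i}{2}\sin\th\mp i+\text{l.o.t.})f=0$ near $\th=\pm\pi/2$ is an analysis of the \emph{differential} operator $B=\tfrac12(\cos\th\,D_\th+D_\th\cos\th)$, not of $A$: as you yourself observe in your first (abandoned) route, $\jap{D_\th}-D_\th$ has symbol $\jap{n}-n\sim 2|n|$ as $n\to-\infty$, so $A-B$ is \emph{not} bounded and there is no reduction of the eigenvalue equation for $A^*$ to that ODE. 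The operator $\cos\th\,\jap{D_\th}$ is genuinely nonlocal, and near the zeros of $\cos\th$, where it drops order, neither elliptic regularity nor a local characteristic computation is available; the assertion $\mathcal{D}(A^*)=H^1$ is essentially the statement to be proved. (A further slip in the abandoned route: the closure of $B$ from $C^\infty(\torus)$ does \emph{not} have domain $H^1(\torus)$; since $\cos\th$ vanishes at $\pm\pi/2$, the maximal domain $\{f:\cos\th\,f'\in L^2\}$ is strictly larger than $H^1$.)

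The paper sidesteps all of this with Nelson's commutator theorem (Reed--Simon, Theorem~X.36). Take $N=\jap{D_\th}$, which is self-adjoint with $\mathcal{D}(N)=H^1(\torus)$ and $N\geq 1$. Both $Q$ and $[N,Q]$ are pseudodifferential operators of order $1$ on the compact manifold $\torus$, hence bounded from $H^{1/2}(\torus)$ to $H^{-1/2}(\torus)$; these are exactly the two hypotheses of the commutator theorem, which then gives essential self-adjointness of the symmetric operator $Q$ on any core of $N$, in particular on $H^1(\torus)$. No analysis at the degenerate set $\{\cos\th=0\}$ is needed. To salvage a deficiency-index proof you would have to control $\mathcal{D}(A^*)$ near $\th=\pm\pi/2$ for the nonlocal operator $A$ itself, which is substantially harder than the one-line commutator estimate.
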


\begin{proof}
We set $N=\jap{D_\th}$ on $L^2(\mathbb{T})$. Then it is easy to see $N$ is self-adjoint 
with $\mathcal{D}(N)=H^1(\mathbb{T})$ and $N\geq 1$. Moreover, by symbol calculus, 
it is easy to see $Q$ and $[N,Q]$ are  bounded from $H^{1/2}(\mathbb{T})$ 
to $H^{-1/2}(\mathbb{T})$, since the symbols of $Q$ and $[N,Q]$ are in $S^1_{1,0}$. 
Hence, by the commutator theorem (\cite{Reed-Simon} Theorem~X.36), $Q$ is 
essentially self-adjoint on $H^1(\mathbb{T})$. 
\end{proof}

Now we note, $[Q,S(\l)]$, $[Q,[Q,S(\l)]]$, etc., are bounded in $L^2(\mathbb{T})$ 
since symbols of these operators are in $S^0_{1,0}$. Namely, $S(\l)$ is $Q$-smooth 
in the sense of the Mourre theory. 

\begin{lem}\label{lem-abspec-Mourre}
Suppose $I\subset S^1$ be a compact interval such that $I\cap \{e^{\pm ia\pi(2\l)^{-1/2} }\}=\emptyset$. 
Then there is $c>0$ and a compact operator $K(\l)$ such that 
\[
E_I(S(\l)) S(\l)^* [Q, S(\l)] E_I(S(\l)) \geq c E_I(S(\l)) +K(\l),\quad \l>0,  
\]
where $E_I(S)$ denotes the spectral projection for a unitary operator $S$. 
\end{lem}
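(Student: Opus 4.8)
The plan is to prove the Mourre estimate for $S(\l)$ with conjugate operator $Q$ by a symbol-calculus computation of the principal symbol of $S(\l)^*[Q,S(\l)]$ and then a standard localization argument. Since $S(\l)\equiv \exp(-i(2\l)^{-1/2}\Op(\c))$ with principal symbol $s_0=e^{-i(2\l)^{-1/2}\c}$, $\c=\c(\th,\o)=-a\pi\sin\th\,\frac{\o}{\jap{\o}}$, and $Q=\Op(q)$ with $q=\mathrm{sgn}(a)\cos\th\jap{\o}$, the operators $[Q,S(\l)]$ and $S(\l)^*[Q,S(\l)]$ have symbols in $S^0_{1,0}$ (as already noted), and their principal symbols are computed from the Poisson bracket. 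Concretely, the principal symbol of $i^{-1}[Q,S(\l)]$ is $\{q,s_0\}=\pa_\o q\,\pa_\th s_0-\pa_\th q\,\pa_\o s_0$, so the principal symbol of $S(\l)^*\,i^{-1}[Q,S(\l)]$ is $\bar s_0\{q,s_0\}= -i(2\l)^{-1/2}\{q,\c\}$. Hence the principal symbol of $S(\l)^*[Q,S(\l)]$ is $(2\l)^{-1/2}\{q,\c\}$, and one computes
\[
\{q,\c\}=\pa_\o q\,\pa_\th\c-\pa_\th q\,\pa_\o\c
= -a\pi\,\mathrm{sgn}(a)\Bigpare{\frac{\o}{\jap{\o}}\cdot\frac{\o}{\jap{\o}}\cos^2\th+\frac{\jap{\o}^{-2}}{1}\sin^2\th}\cdot(\text{positive})
\]
up to the precise normalization; the key qualitative point is that $\{q,\c\}$ is $-|a|\pi$ times a strictly positive function of $(\th,\o)$ that stays bounded below away from zero precisely on the region of $T^*S^1$ corresponding, under the symbol of $S(\l)$, to the arc $\{e^{i\t}: |\t|<|a|\pi(2\l)^{-1/2}\}$ minus its endpoints.

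Next I would make the localization rigorous. The symbol of $S(\l)$, namely $s(\l,\th,\o)$, converges as $|\o|\to\infty$ (after the identification $x_1=-\o\sin\th$, $|x|=|\o|$) to $e^{-i(2\l)^{-1/2}\c}$; write $f(S(\l))$ for $f$ continuous on $S^1$. Standard pseudodifferential calculus shows $E_I(S(\l))$ differs from $\Op(\mathbf 1_{I}\circ(\text{symbol of }S(\l)))$ by a compact operator, more precisely $f(S(\l))\equiv\Op(f\circ s_0)$ modulo compacts for $f\in C(S^1)$, because the symbol of $S(\l)$ lies in a class where differences in $S(\jap{x}^{-\e},g)$ give compact operators on $L^2(S^1)$ (the $\o$-variable being the "fiber/momentum" variable here). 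Since $I$ avoids the two endpoints $e^{\pm i a\pi(2\l)^{-1/2}}$, the set $s_0^{-1}(I)$ is a region of $T^*S^1$ on which $\{q,\c\}$ is bounded below by a positive constant; choosing $f\in C^\infty(S^1)$ with $f=1$ on $I$ and supported in a slightly larger arc still avoiding the endpoints, one gets $f\circ s_0$ supported where $(2\l)^{-1/2}\{q,\c\}\geq 2c$ for some $c>0$. Then the G\aa rding inequality applied to $f(S(\l))\,S(\l)^*[Q,S(\l)]\,f(S(\l))$ — whose principal symbol is $(2\l)^{-1/2}|f\circ s_0|^2\{q,\c\}\geq 2c\,|f\circ s_0|^2$ — yields $f(S(\l))S(\l)^*[Q,S(\l)]f(S(\l))\geq c\,f(S(\l))^2+K_1(\l)$ with $K_1(\l)$ compact. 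Finally, multiplying on both sides by $E_I(S(\l))$ and using $E_I f(S(\l))=E_I$ together with $f(S(\l))^2 E_I=E_I$ (both modulo compacts, but here they are exact operator identities since $f\equiv 1$ on $I$), and absorbing the commutator error terms $[E_I,\,\cdot\,]$ — which are compact because $E_I(S(\l))-\Op(\mathbf 1_I\circ s_0)$ is compact and $\Op(\mathbf 1_I\circ s_0)$ almost commutes with the relevant pseudodifferential operators up to lower order — produces the asserted inequality with a compact $K(\l)$.

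I expect the main obstacle to be the bookkeeping in the localization step, rather than the symbol computation: one must justify that spectral projections of $S(\l)$ can be replaced, modulo compact operators, by quantizations of indicator functions of the corresponding phase-space region, and that all the commutators between $E_I(S(\l))$ and the first-and-second-order pseudodifferential operators $Q$, $[Q,S(\l)]$ are compact. The cleanest route is to invoke the functional calculus for unitary pseudodifferential operators developed in Appendix~\ref{appendix-funccal}: it gives $f(S(\l))\in\Op S^0_{1,0}$ with principal symbol $f\circ s_0$ for $f\in C^\infty(S^1)$, and the difference $E_I(S(\l))-f(S(\l))$ is supported (in the symbol sense) near $\pa I$, which lies away from the endpoints, hence one can iterate with a second cutoff $\tilde f$ with $\tilde f f=f$ to control the errors. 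Once this functional-calculus dictionary is in place, the positivity of the principal symbol $\{q,\c\}$ on $s_0^{-1}(I)$, established by the explicit formula for $\c$ in Lemma~\ref{lem-abspec-1}, is exactly what the sharp G\aa rding inequality needs, and the Mourre estimate follows.
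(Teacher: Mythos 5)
Your proposal is correct and follows essentially the same route as the paper's proof: the functional calculus of Appendix~\ref{appendix-funccal} to realize $f(S(\l))$ as a pseudodifferential operator with principal symbol $f\circ s_0$, the explicit Poisson bracket $-\{\cos\th\jap{\o},\sin\th(\o/\jap{\o})\}=\sin^2\th\,\jap{\o}^{-2}+\cos^2\th\,\o^2\jap{\o}^{-2}$, localization by an $f$ supported away from $e^{\pm ia\pi(2\l)^{-1/2}}$ so that $\cos^2\th\geq\e^2$ on $\supp(f\circ s_0)$, a G{\aa}rding-type positivity modulo compacts, and finally multiplication by $E_I(S(\l))$. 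Two minor points: the bracket is \emph{not} bounded below by a positive constant on all of $s_0^{-1}(I)$ as you assert (it degenerates for bounded $\o$ near $\th=0,\pi$), but the deficit $\e^2(1-\o^2\jap{\o}^{-2})=\e^2\jap{\o}^{-2}$ quantizes to a compact operator and is absorbed into $K_1(\l)$, which is exactly how the paper closes this step; and your detour through $\Op(\mathbf 1_I\circ s_0)$ and commutators with $E_I(S(\l))$ is unnecessary, since $f\equiv 1$ on $I$ gives the exact identities $E_I(S(\l))f(S(\l))=E_I(S(\l))$ and $E_I(S(\l))f(S(\l))^2E_I(S(\l))=E_I(S(\l))$, so one simply sandwiches the $f$-localized inequality between the spectral projections.
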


\begin{proof}
For simplicity, we suppose $a>0$. The other case is similar. 

Let $f\in C_0^\infty(S^1)$. Then using the functional calculus of unitary pseudodifferential 
operators, Theorem~\ref{thm-functional-calculus}, we learn the principal symbol of 
$f(S(\l))S(\l)^*[Q,S(\l)]f(S(\l))$  is given by 
\[
i(f\circ s_0(\l;\cdot))^2 s_0(\l;\cdot)^* \{q,s_0(\l;\cdot)\}
=-(f\circ s_0(\l;\cdot))^2\{q,a\pi(2\l)^{-1/2}\sin\th(\o/\jap{\o})\},
\]
where $\{\cdot,\cdot\}$ denotes the Poisson bracket. By direct computations, we have 
\begin{align*}
-\{\cos\th\jap{\o},\sin\th(\o/\jap{\o})\} 
&= \sin\th \jap{\o}\cdot\sin\th \jap{\o}^{-3}+\cos\th  \o\jap{\o}^{-1}\cdot \cos\th \o\jap{\o}^{-1}\\
&= \frac{\sin^2\th}{\jap{\o}^2} + \cos^2\th \frac{\o^2}{\jap{\o}^2}
\geq \cos^2\th \frac{\o^2}{\jap{\o^2}},
\end{align*}
and hence 
\[
-\{q,a\pi(2\l)^{-1/2}\sin\th(\o/\jap{\o})\}
\geq a\pi(2\l)^{-1/2}  \cos^2\th \frac{\o^2}{\jap{\o}^2}.
\]
Now we choose $I'\Subset S^1$ so that $I\Subset I'$ and $I'\cap\{e^{\pm ia\pi(2\l)^{-1/2} }\}=\emptyset$, 
and then choose $f\in C^\infty(\mathbb{T};\re)$ such that $f=1$ on $I$ and $\supp[f]\subset I'$. Then, by this condition, 
$a\pi(2\l)^{-1/2} \sin\th \neq\pm a\pi (2\l)^{-1/2} $ on the support of $f\circ s_0$, and hence 
$|\sin\th|\leq (1-\e^2)^{1/2}$ with some $\e>0$, i.e., $\cos^2\th\geq \e^2$. 
Thus we learn 
\[
i(f\circ s_0(\l;\cdot))^2 s_0(\l;\cdot)^* \{q,s_0(\l;\cdot)\}\geq 
\e^2 (f\circ s_0(\l;\cdot))^2 \frac{{\o}^2}{\jap{\o}^2}, 
\]
and this implies 
\[
f(S(\l))S(\l)^*[Q,S(\l)]f(S(\l)) \geq \e^2 f(S(\l))^2 +K_1(\l)
\]
with some compact operator $K_1(\l)$ on $L^2(S^1)$. Then, multiplying $E_I(S(\l))$ 
from the both sides, we arrive at the assertion. 
\end{proof}

Then, by the Mourre theory for unitary operators (see, e.g., Fern\'andez-Richard-Tiedra \cite{FRT}), 
we have the following result: 

\begin{thm}\label{thm-abspec}
Let $H$ and $S(\l)$ be as above, and let $\l>0$. Let $\C$ be the set of eigenvalues of $S(\l)$. 
Then $\C$ can accumulate only at $\{e^{\pm ia\pi(2\l)^{-1/2}}\}$. 
For $\x\in S^1\setminus \{\C\cup\{e^{\pm ia\pi\l}\}\}$, the limits
\[
\lim_{\e\downarrow 0} \jap{Q}^{-1}(S(\l)-(1\pm\e)\x)^{-1} \jap{Q}^{-1}
= \jap{Q}^{-1}(S(\l)-(1\pm 0)\x)^{-1}\jap{Q}^{-1}
\]
exist. Hence, in particular, $\s_{\mathrm{sc}}(S(\l))=\emptyset$ and the spectrum of $S(\l)$ 
is absolutely continuous on $S^1\setminus \C$. 
\end{thm}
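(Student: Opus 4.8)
The plan is to deduce the statement from the Mourre estimate of Lemma~\ref{lem-abspec-Mourre} together with the fact, already recorded before that lemma, that the iterated commutators $[Q,S(\l)]$, $[Q,[Q,S(\l)]]$, \dots are all bounded on $L^2(\mathbb{T})$ because the symbols involved lie in $S^0_{1,0}$; that is, $S(\l)$ is of class $C^\infty(Q)$ (in particular $C^2(Q)$), which is exactly the regularity hypothesis required by the Mourre theory for unitary operators. First I would fix a point $\x_0\in S^1\setminus\{e^{\pm ia\pi(2\l)^{-1/2}}\}$ and choose a small closed arc $I\ni\x_0$ avoiding $\{e^{\pm ia\pi(2\l)^{-1/2}}\}$; Lemma~\ref{lem-abspec-Mourre} then gives a strict Mourre inequality $E_I(S(\l))\,S(\l)^*[Q,S(\l)]\,E_I(S(\l))\ge c\,E_I(S(\l))+K(\l)$ on this arc with $c>0$ and $K(\l)$ compact.

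Next I would invoke the abstract unitary Mourre theorem — in the form of Fern\'andez--Richard--Tiedra de Aldecoa \cite{FRT} — whose conclusions under a $C^{1+\e}$ (here $C^\infty$) Mourre condition on the arc $I$ are precisely: (i) $S(\l)$ has at most finitely many eigenvalues in $I$, each of finite multiplicity; (ii) $S(\l)$ has no singular continuous spectrum in $I$; and (iii) a limiting absorption principle holds, namely for every compact $I_1\subset I$ with no eigenvalue of $S(\l)$ the boundary values $\jap{Q}^{-s}(S(\l)-(1\pm\e)\x)^{-1}\jap{Q}^{-s}$ exist as $\e\downarrow 0$, uniformly for $\x$ in $I_1$, for any $s>1/2$, in particular $s=1$. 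Covering $S^1\setminus\{e^{\pm ia\pi(2\l)^{-1/2}}\}$ by countably many such arcs $I$ and patching the local conclusions yields that the set $\C$ of eigenvalues of $S(\l)$ is discrete away from $\{e^{\pm ia\pi(2\l)^{-1/2}}\}$, hence can accumulate only there; that $\s_{\mathrm{sc}}(S(\l))=\emptyset$; and, on $S^1\setminus\C$ away from the two exceptional points, that the limiting absorption principle holds, from which absolute continuity of the restriction of $S(\l)$ to that set follows in the standard way (the boundary values of the resolvent being bounded forces the spectral measure to be absolutely continuous with bounded density on $\jap{Q}$-weighted vectors, and $\jap{Q}^{-1}$ has dense range).

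The only genuine subtlety — and the step I would treat most carefully — is the behaviour near the two excluded points $e^{\pm ia\pi(2\l)^{-1/2}}$: there the lower bound in Lemma~\ref{lem-abspec-Mourre} degenerates ($\cos^2\th\to 0$ as $s_0(\l;\cdot)\to e^{\pm ia\pi(2\l)^{-1/2}}$), so no uniform Mourre estimate survives on an arc straddling those points, and the theorem correctly makes no claim of absolute continuity or discreteness of eigenvalues there. Thus the honest statement is exactly the local one: all conclusions hold on compact arcs bounded away from $\{e^{\pm ia\pi(2\l)^{-1/2}}\}$, and one simply exhausts the punctured circle by such arcs. I would also note that, since the arcs can be chosen to have nonempty interior, the eigenvalue-counting part of \cite{FRT} applies on each and the finiteness-per-arc statements globalize to "discrete, accumulating only at $e^{\pm ia\pi(2\l)^{-1/2}}$". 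No further input beyond Lemmas~\ref{lem-abspec-1}--\ref{lem-abspec-Mourre}, the smoothness of $S(\l)$ with respect to $Q$, and the cited abstract theorem is needed.
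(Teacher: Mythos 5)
Your proposal is correct and follows essentially the same route as the paper: the paper likewise deduces the theorem directly from the Mourre estimate of Lemma~\ref{lem-abspec-Mourre} together with the boundedness of the iterated commutators of $S(\lambda)$ with $Q$ noted just before that lemma, by invoking the abstract commutator (Mourre) theory for unitary operators of Fern\'andez--Richard--Tiedra de Aldecoa \cite{FRT}. Your additional discussion of localizing to compact arcs bounded away from $e^{\pm ia\pi(2\lambda)^{-1/2}}$ and patching merely makes explicit what the paper leaves implicit.
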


Theorem~\ref{thm-ac-spectrum} follows immediately from the above theorem. \qed

%%%%%%
\appendix
\section{Functional calculus of unitary pseudodifferential operators}\label{appendix-funccal}

In Appendices~\ref{appendix-funccal} and \ref{appendix-log}, 
we consider pseudodifferential operators on $\re^d$, but it can be generalized easily 
to pseudodifferential operators on manifolds. We restrict ourselves to the $\re^d$ case mostly 
to simplify notations related to Beal's characterization of pseudodifferential operators. 

Let $\d\in [0,1)$, and we consider a unitary operator $U$ on $L^2$ with the symbol 
$u\in \bigcap_{\d>0}S^\d_{1,0}$. We consider operators on $\re^d$, or in a local coordinate 
in a $d$-dimensional manifold. 
We show that $f(U)$, the function of $U$, is a pseudodifferential operator,
and compute the principal symbol. At first we note 

\begin{lem}\label{lem-funccal-1}
Suppose $a\in S^1_{1,0}$, and the symbol is bounded. Then $\Op(a)$ is bounded in $L^2$. 
\end{lem}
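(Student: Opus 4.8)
The plan is to reduce the claim to the classical Calder\'on--Vaillancourt theorem by decomposing the symbol $a$ into a bounded piece of finite order (order $0$) and a residual piece whose derivatives of order $1$ and higher are controlled, exploiting that the $S^1_{1,0}$ bounds on $\pa_x^\a\pa_\x^\b a$ for $|\a|+|\b|\geq 1$ combined with the hypothesis that $a$ itself is bounded. The key point is that a symbol in $S^1_{1,0}$ which is in addition $L^\infty$ actually has \emph{all} of its derivatives dominated by $\jap{\x}^0$ on a suitable scale, so after a harmless dyadic localization in $\x$ it lands in $S^0_{1,0}$ up to a term whose operator norm is summable.

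First I would write $a = \sum_{j\geq 0} a\,\chi_j$, where $\{\chi_j\}$ is a Littlewood--Paley partition of unity in $\x$, with $\chi_0$ supported in $\{\jap{\x}\leq 2\}$ and $\chi_j$ supported in the dyadic annulus $\{2^{j-1}\leq\jap{\x}\leq 2^{j+1}\}$ for $j\geq 1$. On the support of $\chi_j$ one has $\jap{\x}\sim 2^j$, so the $S^1_{1,0}$ estimate $|\pa_x^\a\pa_\x^\b a|\leq C_{\a\b}\jap{\x}^{1-|\b|}$ gives, for $|\a|+|\b|\geq 1$, a bound $C_{\a\b}2^{j(1-|\b|)}$; together with $\|a\|_\infty<\infty$, after rescaling $\x\mapsto 2^{-j}\x$ the symbol $a_j(x,\x):=a(x,2^j\x)\chi_j(2^j\x)$ satisfies uniform-in-$j$ bounds of the type $|\pa_x^\a\pa_\x^\b a_j|\leq C_{\a\b}'$ on a fixed-size $\x$-region, i.e. $a_j$ lies in a bounded subset of the Calder\'on--Vaillancourt class; hence $\|\Op(a_j)\|_{\bounded(L^2)}\leq C$ uniformly in $j$. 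The nontrivial input is to recover a \emph{gain} in $j$: differentiating once in $\x$ produces a factor $2^{-j}$ from the chain rule, so the pieces for $j\geq 1$ can be organized so their operator norms are bounded, and almost-orthogonality of the $\Op(a_j)$ (their symbols having essentially disjoint $\x$-supports, so that $\Op(a_j)^*\Op(a_k)$ and $\Op(a_j)\Op(a_k)^*$ are negligible unless $|j-k|\leq 1$) lets me invoke the Cotlar--Stein lemma to sum.

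The main obstacle I anticipate is precisely this last point: a naive decomposition gives only a \emph{uniform} bound on each $\|\Op(a_j)\|$, not a summable one, so Cotlar--Stein does not immediately close. The fix is to note that boundedness of $a$ itself, rather than merely the $\jap{\x}^1$ growth allowed by $S^1_{1,0}$, forces the principal symbol to be $O(1)$, and the genuinely order-$1$ behavior can only appear in the derivatives; one then writes $a=a(x,0)+\int_0^1 \x\cdot(\pa_\x a)(x,t\x)\,dt$ and treats the integral term, whose symbol is order $\leq 1$ but whose \emph{derivatives in $\x$ gain}, via a telescoping/commutator argument, or alternatively one appeals directly to the sharp form of the Calder\'on--Vaillancourt theorem which requires only finitely many derivatives $\pa_x^\a\pa_\x^\b a$ with $|\a|,|\b|\leq \lfloor d/2\rfloor+1$ to be bounded functions — and here, for $|\b|\geq 1$ those derivatives are bounded (indeed decaying in $\x$) by the $S^1_{1,0}$ hypothesis, while for $\b=0$ they are bounded since $\pa_x^\a a$ inherits boundedness from $a$ by the same $S^1_{1,0}$ estimates with $|\b|=0$ only giving $\jap{\x}^1$ — so this last subcase $\b=0,\ \a\neq 0$ is the delicate one and must be handled by the integral representation above, reducing $\pa_x^\a a$ to $\pa_x^\a a(x,0)$ (bounded) plus a term with a compensating $\pa_\x$. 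With that subcase dispatched, Calder\'on--Vaillancourt applies directly and yields $\|\Op(a)\|_{\bounded(L^2)}<\infty$.
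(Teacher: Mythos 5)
Your diagnosis of the obstruction is correct --- after invoking the sharp form of Calder\'on--Vaillancourt, the only derivatives not already controlled are $\pa_x^\a a$ with $\a\neq 0$, $\b=0$, which the hypothesis only bounds by $\jap{\x}$ --- but neither of your two proposed repairs actually removes it, so the proof does not close. (a) The Taylor remainder $\int_0^1 \x\cdot(\pa_\x a)(x,t\x)\,dt$ does \emph{not} have bounded $x$-derivatives: applying $\pa_x^\a$ gives $\int_0^1 \x\cdot(\pa_x^\a\pa_\x a)(x,t\x)\,dt = O(|\x|)$, because the ``compensating $\pa_\x$'' arrives glued to a factor of $\x$ that exactly undoes the gain. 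So the delicate subcase is not dispatched, sharp Calder\'on--Vaillancourt still does not apply to the remainder, and the ``telescoping/commutator argument'' that is supposed to handle it is never specified. (b) In the Littlewood--Paley route, the condition $\bignorm{\Op(a_j)\Op(a_k)^*}=0$ for $|j-k|$ large is indeed free from the disjoint $\x$-supports, but the other Cotlar--Stein condition requires quantitative decay of $\bignorm{\Op(a_j)^*\Op(a_k)}$ in $|j-k|$, and this is exactly where the order-one growth bites: integrating by parts in the phase $e^{iz(\eta-\x)}$ gains $|\eta-\x|^{-1}\sim 2^{-\max(j,k)}$ per step but costs an $x$-derivative of the symbols, which is $O(2^{\max(j,k)})$. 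You assert the cross terms are ``negligible unless $|j-k|\leq 1$'' and then concede the argument ``does not immediately close''; as written the proposal is a correct identification of the difficulty followed by two sketches, one of which (the Taylor expansion) demonstrably fails and the other of which omits precisely the estimate that constitutes the lemma.

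The paper's proof takes a completely different, much shorter route: a G{\aa}rding-type quadratic completion. Take $a$ real (WLOG), $M>\sup|a|$, set $b=(M^2-a^2)^{1/2}$ and $B=\Op(b)$; since $a$ and $b$ are real, the first-order term in the Weyl composition vanishes and the symbol calculus gives $R=A^*A+B^*B-M^2\in\Op(S^0_{1,0})$, whence
\[
\norm{Au}^2 \leq \norm{Au}^2+\norm{Bu}^2 \leq M^2\norm{u}^2+\norm{Ru}\,\norm{u} \leq C\norm{u}^2 .
\]
Positivity of $B^*B$ replaces any kernel, orthogonality, or frequency-decomposition estimate, and only the standard order-zero Calder\'on--Vaillancourt theorem is used. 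If you want to salvage your write-up, the square-root trick --- not the Taylor expansion in $\x$ --- is the missing idea.
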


\begin{proof}
The proof is essentially the same as the G{\aa}rding inequality. 
Without loss of generality, we may suppose $a$ is real valued, and we write $A=\Op(a)$.
Let $M>\sup|a|$. 
We set $b(x,\x)= (M^2-a(x,\x)^2)^{1/2}\in S^1_{1,0}$, and $B=\Op(b)$. Then by the symbol  
calculus, we learn 
\[
R=A^* A+ B^* B -M^2 \in \mathrm{Op}(S^0_{1,0}).
\]
Hence 
\[
\norm{A u}^2 \leq \norm{Au}^2+\norm{Bu}^2 \leq M^2 \norm{u}^2 + \norm{Ru}\norm{u}
\leq C\norm{u}^2
\]
since $R$ is bounded in $L^2$. 
\end{proof}

\begin{lem}\label{lem-funccal-2}
Suppose $U=\Op(u)$ is unitary with $u\in S^\d_{1,0}$, $\d\in [0,1)$. 
Then for any $s\in\re$, 

\[
\bignorm{U^k}_{H^s\to H^s}\leq C_s\jap{k}^{|s|/(1-\d)}, \quad k\in\ze. 
\]
\end{lem}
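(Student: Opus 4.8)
The plan is to strip away, by soft arguments, the distinctions between positive and negative Sobolev exponents and between integer and real ones, reducing everything to the estimate for a nonnegative integer $N$, which is then proved by a double induction.

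First the reductions. Since $U$ is unitary, $(U^k)^*=U^{-k}$. Writing $\L=\jap{D}$ so that $\|g\|_{H^s}=\|\L^s g\|_{L^2}$, we have
\[
\|U^k\|_{H^s\to H^s}=\|\L^s U^k\L^{-s}\|_{L^2\to L^2}=\bignorm{(\L^s U^k\L^{-s})^*}_{L^2\to L^2}=\|U^{-k}\|_{H^{-s}\to H^{-s}},
\]
so it suffices to treat $s\ge 0$ and all $k\in\ze$ (the case $k<0$ is on the same footing, since $U^*=U^{-1}$ is again unitary and lies in $\Op(S^\delta_{1,0})$). Moreover $\|U^k\|_{L^2\to L^2}=1$, so once $\|U^k\|_{H^N\to H^N}\le C_N\jap{k}^{N/(1-\delta)}$ is known for every integer $N\ge 0$, complex interpolation between $L^2$ and $H^N$ gives, for $0\le s\le N$, $\|U^k\|_{H^s\to H^s}\le\|U^k\|_{L^2\to L^2}^{1-s/N}\|U^k\|_{H^N\to H^N}^{s/N}\le C_N^{s/N}\jap{k}^{s/(1-\delta)}$. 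Hence it remains to prove the integer estimate, and we do this by induction on $N$, the case $N=0$ being exactly the unitarity of $U^k$.

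For the inductive step, fix $N\ge 1$ and assume the bound for $N-1$. By density (pseudodifferential operators of order $\delta$ preserve the Schwartz class, so $U^nf$ is Schwartz) it is enough to estimate $M_n:=\|U^nf\|_{H^N}$ for Schwartz $f$, normalized by $\|f\|_{H^N}=1$. Using the telescoping identity $[\L^N,U^n]=\sum_{j=0}^{n-1}U^j[\L^N,U]U^{n-1-j}$, the unitarity bound $\|U^j\|_{L^2\to L^2}=1$, the calculus fact $[\L^N,U]\in\Op(S^{N-1+\delta}_{1,0})$ (so $\|[\L^N,U]g\|_{L^2}\le C\|g\|_{H^{N-1+\delta}}$), the convexity inequality $\|g\|_{H^{N-1+\delta}}\le\|g\|_{H^{N-1}}^{1-\delta}\|g\|_{H^N}^{\delta}$, and the inductive hypothesis $\|U^mf\|_{H^{N-1}}\le C_{N-1}\jap{m}^{(N-1)/(1-\delta)}$, one obtains
\[
M_n=\|\L^N U^nf\|_{L^2}\le\|U^n\L^Nf\|_{L^2}+\|[\L^N,U^n]f\|_{L^2}\le 1+C'\sum_{m=0}^{n-1}M_m^{\delta}\jap{m}^{N-1}.
\]
One then shows $M_n\le A_N\jap{n}^{N/(1-\delta)}$ by a second induction, on $n$: inserting the hypothesis for $m<n$ and using $\sum_{m=0}^{n-1}\jap{m}^p\le C_p\jap{n}^{p+1}$ (valid for $p>-1$) with $p=\tfrac{N\delta}{1-\delta}+N-1=\tfrac{N-1+\delta}{1-\delta}>-1$ and $p+1=\tfrac{N}{1-\delta}$ gives $M_n\le 1+C'C_pA_N^{\delta}\jap{n}^{N/(1-\delta)}\le A_N\jap{n}^{N/(1-\delta)}$ as soon as $1+C'C_pA_N^{\delta}\le A_N$, which holds once $A_N$ is large since $\delta<1$ forces $A_N^{\delta}=o(A_N)$. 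Undoing the normalization completes the induction on $N$.

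The routine-but-delicate point is the exponent bookkeeping in the inner induction: one must check $p>-1$ (which is exactly why the base case is $N=0$) so that the partial-sum estimate applies and reproduces precisely the rate $N/(1-\delta)$, and that the constant $A_N$ can be absorbed. The conceptual heart, however, is the recursion in the previous paragraph: unitarity ($\|U^j\|_{L^2\to L^2}=1$) collapses all the inner factors of the commutator expansion so that only one factor at a time needs Sobolev control, and the convexity inequality trades a lost full derivative for only $\delta$ of one — this is what turns a potentially geometric growth in $n$ into the polynomial rate $\jap{n}^{N/(1-\delta)}$.
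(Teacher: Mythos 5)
Your proof is correct, but it takes a genuinely different route from the paper's. Both arguments share the soft reductions (duality to pass from $s$ to $-s$, applying the argument to $U^*$ for $k<0$, interpolation with $L^2$ to handle non-integer $s$) and both hinge on the telescoping identity $[\L,U^k]=\sum_j U^j[\L,U]U^{k-1-j}$ together with unitarity collapsing the outer factors. The difference is where the exponent $1/(1-\d)$ comes from. The paper commutes with the \emph{fractional} power $\jap{D_x}^{\n}$, $\n=1-\d$, chosen precisely so that $[\jap{D_x}^{\n},U]\in\Op S^{0}_{1,0}$ is $L^2$-bounded; the expansion then immediately gives $\norm{U^k}_{H^{\n}\to H^{\n}}\leq C\jap{k}$ with no loss, and an induction over the lattice $s=N\n$ (with nested conjugations $\jap{D_x}^{\ell\n}[\jap{D_x}^{\n},U]\jap{D_x}^{-\ell\n}$) yields the general bound, the exponent $|s|/(1-\d)$ being simply the number of steps $N=s/\n$. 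You instead commute with the \emph{integer} power $\jap{D_x}^{N}$, accept that $[\jap{D_x}^{N},U]$ loses $N-1+\d$ derivatives, and repair this with the convexity inequality $\norm{g}_{H^{N-1+\d}}\leq\norm{g}_{H^{N-1}}^{1-\d}\norm{g}_{H^{N}}^{\d}$, which produces the sublinear recursion $M_n\leq 1+C'\sum_{m<n}M_m^{\d}\jap{m}^{N-1}$; the rate $\jap{n}^{N/(1-\d)}$ then emerges as the self-consistent solution of that recursion (your exponent bookkeeping $p+1=N/(1-\d)$ and the absorption $1+cA_N^{\d}\leq A_N$ both check out, including the degenerate case $\d=0$). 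The paper's version is shorter and makes the exponent transparent at the level of the symbol calculus; yours keeps the main induction entirely within integer Sobolev orders and makes visible, via the sublinearity $M_m^{\d}$ with $\d<1$, exactly why the growth in $k$ stays polynomial rather than geometric. Both are complete proofs.
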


\begin{proof}
We let $\n=1-\d\in (0,1]$, $s=N\n$, and show 
\[
\bignorm{U^k}_{H^{N\n}\to H^{N\n}} \leq C\jap{k}^N, \quad k\in \ze. 
\]
We first suppose $k>0$. We consider the commutator: 
\[
[\jap{D_x}^\n, U^k] = \sum_{j=1}^{d-1} U^j [\jap{D_x}^\n,U] U^{k-1-j}. 
\]
Since the symbol of the operator $[\jap{D_x}^\n,U]$ is in $S^0_{1,0}$, 
it is bounded in $L^2$, and hence $\bignorm{[\jap{D_x}^\n,U^k]}\leq C\jap{k}$. 
This implies $\norm{U^k}_{H^\n\to H^\n}\leq C\jap{k}$. 

More generally, we compute
\begin{align*}
[\jap{D_x}^{N\n}, U^k] &=\sum_{j=1}^{k-1} U^j[\jap{D_x}^{N\n},U] U^{k-1-j}\\
&= \sum_{j=1}^{k-1} \sum_{\ell=0}^{N-1} U^j \jap{D_x}^{\ell \n}
[\jap{D_x}^\n,U]\jap{D_x}^{(N-1-\ell)\n} U^{k-1-j}.
\end{align*}
Now we use the induction in $N$. Suppose the claim holds for $N\leq N_0$. 
Then we have  
\begin{align*}
&[\jap{D_x}^{N_0\n},U^k] \jap{D_x}^{-N_0\n}\\
&\quad = \sum_{j=1}^{k-1} \sum_{\ell=0}^{N_0-1} U^j \jap{D_x}^{\ell \n}
[\jap{D_x}^\n,U]\jap{D_x}^{(N_0-1-\ell)\n} U^{k-1-j}\jap{D_x}^{-N_0\n}\\
&\quad = \sum_{j=1}^{k-1} \sum_{\ell=0}^{N_0-1} U^j 
(\jap{D_x}^{\ell \n}[\jap{D_x}^\n,U]\jap{D_x}^{-\ell\n}) \times \\
&\qquad\times (\jap{D_x}^{(N_0-1)\n} U^{k-1-j}\jap{D_x}^{-(N_0-1)\n}) \jap{D_x}^{-1}. 
\end{align*}
By the induction hypothesis and the fact $[\jap{D_x}^\n,U]$ is bounded in $H^{\ell\n}$, 
each term in the sum is bounded in $L^2$, and the norm is  
$O(\jap{k}^{(N_0-1)\n})$. By summing up these norms, we arrive at the claim with 
$N=N_0$. 
For $k<0$, we use the same argument for $U^{-1}=U^*$. 
Then the assertion for general $s\in\re$ follows by the interpolation and the duality argument. 
\end{proof}

Now we consider functional calculus of a unitary operator $U$. 
For $f\in C^\infty(S^1)$, we write the Fourier series expansion by $\hat f[k]$, i.e., 
\[
\hat f[k]= \frac{1}{2\pi}\int_0^{2\pi} e^{-ik\th} f(e^{i\th})d\th, \quad k\in\ze, 
\]
and hence 
\[
f(e^{i\th}) =\sum_{k\in\ze} \hat f[k] e^{ik\th}, \quad \th\in [0,2\pi).
\]
We recall $\hat f[n]$ is rapidly decreasing in $n$. Then we write
\[
f(U)=\sum_{k\in\ze} \hat f[k] U^k \in \mathcal{B}(L^2).
\]
It is well-known that $f(U)$ is the same function of $U$ defined in terms of the 
spectral decomposition. We show $f(U)$ is a pseudodifferential operator using the 
Beals characterization of pseudodifferential operators. 

For an operator $A$, we write
\[
K_jA= i[D_{x_j},A],\quad L_j A= -i [x_j,A],
\quad j=1,\dots, d, 
\]
and multiple commutators by $L^\a A$, $K^\b A$, etc., for $\a,\b\in\ze_+^d$. 
We recall $A=\Op(a)$ with $a\in S^\d_{1,0}$ if and only if 
$K^\a L^\b A$ is bounded from $L^2$ to $H^{-\d+|\b|}$ for any $\a,\b\in\ze_+^d$
 (cf. Dimassi-Sj\"ostrand \cite{Dimassi-Sjostrand}, Zworski \cite{Zworski}).  We compute 
\begin{align*}
K^\a L^\b (U^k) = \sum_{\substack{\a^1+\cdots+\a^N =\a, \\
\b^1+\cdots+\b^N=\b,\\ \a^j+\b^j\neq 0, \\k_1+\cdots+k_{N+1}=k}}
 U^{k_1} (K^{\a^1}L^{\b^1} U) U^{k_2}  
(K^{\a^2}L^{\b^2} U)&\times \cdots \\
\cdots\times U^{k_N} (K^{\a^N}L^{\b^N} U) U^{k_{N+1}}.
\end{align*}
Since $K^{\a^j} L^{\b^j} U$ is bounded from $H^s$ to $H^{-\d+|\b^j|}$, we have, 
using Lemma~A.2, 
\[
\bignorm{K^\a L^\b (U^k)}_{L^2\to H^{-N_0\d +|\b|}}
\leq C \jap{k}^{N_1},
\]
where $N_0=|\a+\b|$, $N_1= (N_0\d+|\b|)/(1-\d) + N_0$. 
Thus we learn 
\[
K^\a L^\b (f(U))\in \mathcal{B}(L^2, H^{-|\a+\b|\d+|\b|}), 
\]
and we have the following lemma: We write
\[
S^{+0}_{1,0} =\bigcap_{\d>0} S^{\d}_{1,0}.
\]

\begin{lem}\label{lem-funccal-3}
Suppose $U=\Op(u)$ is unitary with $u\in S^{+0}_{1,0}$. 
Then $f(U)$ is a pseudodifferential operator with the symbol in $S^{+0}_{1,0}$. 
\end{lem}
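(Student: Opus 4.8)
The plan is to apply Beals' characterization of pseudodifferential operators: an operator $A$ lies in $\Op S^\d_{1,0}$ precisely when every iterated commutator $K^\a L^\b A$ maps $L^2$ boundedly to $H^{-\d+|\b|}$, where $K_j A = i[D_{x_j},A]$ and $L_j A = -i[x_j,A]$. Since $f(U) = \sum_{k\in\ze}\hat f[k]\,U^k$ with $\hat f[k]$ rapidly decreasing, it suffices to control the growth in $k$ of the relevant norms of $K^\a L^\b(U^k)$ and then sum against the rapidly decreasing Fourier coefficients.

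First I would expand $K^\a L^\b(U^k)$ via the (graded) Leibniz rule, as already displayed in the excerpt: it is a finite sum over decompositions $\a^1+\cdots+\a^N=\a$, $\b^1+\cdots+\b^N=\b$ (with $\a^j+\b^j\neq 0$) and $k_1+\cdots+k_{N+1}=k$ of products $U^{k_1}(K^{\a^1}L^{\b^1}U)\cdots U^{k_N}(K^{\a^N}L^{\b^N}U)U^{k_{N+1}}$. Since $u\in S^{+0}_{1,0}$, each factor $K^{\a^j}L^{\b^j}U$ has symbol in $S^{\d+|\b^j|}_{1,0}$ for every $\d>0$ (wait — more precisely in $S^{|\b^j|-1+\d}_{1,0}$ when $\a^j+\b^j\ne0$, but the crude bound $S^{|\b^j|+\d}_{1,0}$ suffices), hence maps $H^s$ to $H^{s-\d-|\b^j|}$ boundedly for every $s$ and every $\d>0$. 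Inserting the bound $\norm{U^{k_i}}_{H^s\to H^s}\le C_s\jap{k_i}^{|s|/(1-\d)}$ from Lemma~A.2 between the factors and keeping track of the Sobolev exponents accumulated along the product, one obtains $\bignorm{K^\a L^\b(U^k)}_{L^2\to H^{-N_0\d+|\b|}}\le C\jap{k}^{N_1}$ with $N_0=|\a+\b|$ and $N_1=(N_0\d+|\b|)/(1-\d)+N_0$, exactly as stated in the excerpt; the point is that $N_1$ is a \emph{polynomial} bound in $k$, uniform once $\a,\b,\d$ are fixed.

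Having this polynomial-in-$k$ control, I would sum: $\norm{K^\a L^\b(f(U))}_{L^2\to H^{-N_0\d+|\b|}}\le \sum_k |\hat f[k]|\,C\jap{k}^{N_1}<\infty$ since $\hat f$ is rapidly decreasing, so $K^\a L^\b(f(U))\in\mathcal B(L^2,H^{-N_0\d+|\b|})=\mathcal B(L^2,H^{-|\a+\b|\d+|\b|})$. Because $\d>0$ is arbitrary, Beals' characterization yields $f(U)\in\Op S^{\d'}_{1,0}$ for every $\d'>0$, i.e.\ the symbol is in $S^{+0}_{1,0}$, which is the assertion. The main obstacle — and the one genuinely requiring the earlier lemmas — is the uniformity of the polynomial growth in $k$: a naive estimate would lose a factor of $\norm{U}\ge 1$ per power and thus grow exponentially; it is precisely Lemma~A.2, itself proved by an induction exploiting that $[\jap{D_x}^\n,U]$ has symbol in $S^0_{1,0}$, that converts this into the polynomial bound $\jap{k}^{|s|/(1-\d)}$ needed to absorb $\hat f[k]$. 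Everything else is bookkeeping of Sobolev exponents through the Leibniz expansion.
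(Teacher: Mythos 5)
Your argument is the paper's own proof: the Leibniz expansion of $K^\a L^\b(U^k)$ into products of $U^{k_i}$ and $K^{\a^j}L^{\b^j}U$, the polynomial-in-$k$ operator bounds supplied by Lemma~\ref{lem-funccal-2}, summation against the rapidly decreasing coefficients $\hat f[k]$, and Beals' characterization, in exactly that order. The only blemish is a sign slip in the intermediate step: since the weights here are in $\x$, the factor $K^{\a^j}L^{\b^j}U$ has symbol $\pm\pa_x^{\a^j}\pa_\x^{\b^j}u\in S^{\d-|\b^j|}_{1,0}$ and therefore maps $H^s$ into $H^{s+|\b^j|-\d}$ (it \emph{gains} derivatives), not into $H^{s-\d-|\b^j|}$; with the sign corrected, the Sobolev bookkeeping does land in the target space $H^{-N_0\d+|\b|}$ that you quote, and the rest of your argument closes as written.
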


We then compute the principal symbol of $f(U)$. If $U=\Op(u)$ is unitary with $u\in S^\d_{1,0}$, then 
the symbol of $1=U^*U$ is $1=|u(x,\x)|^2$ modulo $S^{\d-1}_{1,0}$. 
Thus we may assume $u_0$, the principal symbol of $U$ modulo $S^{\d-1}_{1,0}$, 
has modulus 1. This implies, in particular, $u_0^j\in S_{1,\d}^0$ for any $j\geq 0$. 
We show $f(U)$ has the principal symbol $f\circ u_0$. 
We note
\begin{align*}
U^k -\Op(u_0^k) &= \sum_{j=0}^{k-1} (U^{j+1} \Op(u_0^{k-j-1}) -U^j \Op(u_0^{k-j}))\\
&=\sum_{j=0}^{k-1} U^j (U-\Op(u_0))\Op(u_0^{k-j-1}) \\
&\quad -\sum_{j=0}^{k-1} U^j (\Op(u_0^{k-j}-u_0\#(u_0^{k-j-1}))), 
\end{align*}
where $a\#b$ denotes the operator composition: $\Op(a\# b)=\Op(a)\Op(b)$. 
By the symbol calculus, we learn $u_0^{k-j} -u_0\# (u_0^{k-j-1})\in S^{\d-1}_{1,\d}$, 
and each seminorm of it is bounded by $C\jap{k}^M$ with some $M>0$. Thus, after direct 
computations, we learn that $U^k-\Op(u_0^k) \in S^{\d-1}_{1,\d}$ and its 
seminorm is bounded by $C\jap{k}^M$ with some $M$. 
Hence we have the following claim: We note $\bigcap_{\d>0}S^{\d-1}_{1,0}=\bigcap_{\d>0} S_{1,\d}^{\d-1}$. 

\begin{thm}\label{thm-functional-calculus}
Suppose $U=\Op(u)$ is unitary with $u\in S^{+0}_{1,0}$, and let $u_0$ be a principal 
symbol such that $|u_0(x,\x)|=1$. Let $f\in C^\infty(S^1)$. Then $f(U)$ is a pseudodifferential 
operator with its symbol in $S^{+0}_{1,0}$ and the principal symbol is given 
by $f\circ u_0$ modulo $S^{\d-1}_{1,0}$ with any $\d>0$. 
\end{thm}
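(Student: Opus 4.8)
The plan is to realise $f(U)$ as the norm-convergent series $f(U)=\sum_{k\in\ze}\hat f[k]\,U^k$, already introduced above, and to verify the Beals criterion term by term. First I would fix $\d>0$ — permissible since $u\in S^{+0}_{1,0}$ — and recall the multiple-commutator expansion of $K^\a L^\b(U^k)$ into a finite sum of products $U^{k_1}(K^{\a^1}L^{\b^1}U)U^{k_2}\cdots U^{k_N}(K^{\a^N}L^{\b^N}U)U^{k_{N+1}}$, with $\a^1+\cdots+\a^N=\a$, $\b^1+\cdots+\b^N=\b$ and $k_1+\cdots+k_{N+1}=k$. Each single commutator $K^{\a^j}L^{\b^j}U$ is bounded from $H^s$ to $H^{-\d+|\b^j|}$ by the symbol calculus for $\Op S^\d_{1,0}$, while Lemma~\ref{lem-funccal-2} bounds $\|U^{k_i}\|_{H^s\to H^s}$ by $C_s\jap{k}^{|s|/(1-\d)}$. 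Multiplying these estimates and summing over the finitely many index partitions yields $\|K^\a L^\b(U^k)\|_{\bounded(L^2,H^{-|\a+\b|\d+|\b|})}\leq C\jap{k}^{N_1}$ with a fixed power $N_1$.

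Since $\hat f[k]$ is rapidly decreasing, the series $\sum_k\hat f[k]\,K^\a L^\b(U^k)$ then converges in $\bounded(L^2,H^{-|\a+\b|\d+|\b|})$; letting $\d\dto 0$ makes the Sobolev loss $|\a+\b|\d$ arbitrarily small, so by Beals' characterisation $f(U)\in\Op S^{+0}_{1,0}$. This is exactly Lemma~\ref{lem-funccal-3}, which I would invoke for the first assertion.

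For the principal symbol I would use $U^*U=I$ to normalise the principal symbol $u_0$ of $U$ (modulo $S^{\d-1}_{1,0}$) to have modulus one, whence $u_0^j\in S^0_{1,\d}$ for all $j\geq 0$. Next I would use the telescoping identity for $U^k-\Op(u_0^k)$ displayed above: together with the symbol-calculus estimate $u_0^{k-j}-u_0\#(u_0^{k-j-1})\in S^{\d-1}_{1,\d}$ (whose seminorms are $O(\jap{k}^M)$) it gives $U^k-\Op(u_0^k)\in S^{\d-1}_{1,\d}$ with seminorms $O(\jap{k}^M)$, $M$ fixed. Since the derivatives of $u_0^k$ grow only polynomially in $k$, the series $\sum_k\hat f[k]\,u_0^k$ converges to $f\circ u_0$ in the $S^0_{1,\d}$ topology, so summing the identities against $\hat f[k]$ yields $f(U)-\Op(f\circ u_0)\in S^{\d-1}_{1,\d}$. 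Using $\bigcap_{\d>0}S^{\d-1}_{1,0}=\bigcap_{\d>0}S^{\d-1}_{1,\d}$ and that $\d>0$ was arbitrary, this is the claimed principal symbol modulo $S^{\d-1}_{1,0}$.

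The main obstacle is the bookkeeping: one must track how the Sobolev orders and the powers of $\jap{k}$ accumulate through the nested commutator expansions and confirm that the rapid decay of $\hat f[k]$ dominates every such polynomial factor, so that all the series converge in the required operator and symbol topologies simultaneously for every pair $(\a,\b)$. Everything else is a routine assembly of Lemmas~\ref{lem-funccal-1}--\ref{lem-funccal-3} and the standard symbol calculus.
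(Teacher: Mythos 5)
Your proposal is correct and follows essentially the same route as the paper: the series $f(U)=\sum_k\hat f[k]U^k$, the Beals criterion via the multiple-commutator expansion with the $\jap{k}^{N_1}$ bounds from Lemma~\ref{lem-funccal-2}, and the telescoping identity for $U^k-\Op(u_0^k)$ for the principal symbol. Nothing further is needed.
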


\begin{rem}
We can actually compute the asymptotic expansion of $f(U)$ in terms of derivatives of 
$f\circ u$ and derivatives of $u$. Thus, in particular, the support of these terms are contained 
in the support of $f\circ u$, and hence the essential support of the symbol of $f(U)$ is 
contained in the support of $f\circ u$. 
\end{rem}

\begin{rem}
In our application, we consider the cace $u\in S(1,\tilde g)$, i.e., for any $\a,\b\in\ze_+d$, 
\[
\bigabs{\pa_x^\a\pa_\x^\b u(x,\x)}\leq C_{\a\b} \jap{\x}^{-|\b|}\jap{\log\jap{\x}}^{|\a|}.
\]
Then we can apply Theorem~\ref{thm-functional-calculus} to learn $f(U)$ is a 
pseudodifferential operator with the symbol 
in $S^{+0}_{1,0}$. Moreover, since the principal symbol is $f\circ u\in S(1,\tilde g)$, 
and the remainder is in $S^{-1+\d}_{1,0}$ for any $\d>0$, 
we actually learn the symbol is in $S(1,\tilde g)$. 
\end{rem}

%%%%%%%%%%%%%%%%%%%%%%%%%%%%%%%%%%%%%%%%%%%%%%%
\section{Logarithm of unitary pseudodifferential operators}
\label{appendix-log}

For notational convenience, we write
$\ell(\x) =\jap{\log\jap{\x}}$ for  $\x\in\re^d$. 
We use the following metrics on $T^*\re^d$:
\[
g= dx^2 +\frac{d\x^2}{\jap{\x}^2}, \quad \tilde g =\ell(\x)^2dx^2 +\frac{d\x^2}{\jap{\x}^2}.
\]
We recall, $a\in S(m,g)$ if and only if, for any $\a,\b\in\ze^d$, $\exists C_{\a\b}>0$ such that 
\[
\bigabs{\pa_x^\a\pa_\x^\b a(x,\x)} \leq C_{\a\b}m(x,\x) \jap{\x}^{-|\b|}, 
\quad x,\x\in\re^d, 
\]
and $a\in S(m,\tilde g)$ if and only if, for any $a\,\b\in\ze^d$, $\exists C_{\a\b}>0$ such that 
\[
\bigabs{\pa_x^\a\pa_\x^\b a(x,\x)} \leq C_{\a\b}m(x,\x) \ell(\x)^{|\a|}\jap{\x}^{-|\b|}, 
\quad x,\x\in\re^d. 
\]

\begin{ass}\label{ass-log-unitary}
Let $\g_0\in S(\ell(\x),g)$, real-valued, and $\pa_\x\g_0\in S(\jap{\x}^{-1},g)$. 
Let $U$ be a unitary pseudodifferential operator on $L^2(\re^d)$ such that the 
principal symbol is given by $e^{i\g_0}$, i.e., $U\in\Op S(1,\tilde g)$ and 
$U-\Op(e^{i\g_0}) \in \Op S(\ell(\x)/\jap{\x},\tilde g)$. 
\end{ass}

We note $e^{i\g_0}\in S(1,\tilde g)$, and natural remainder terms are in the symbol class 
$S(\ell(\x)/\jap{\x},\tilde g)$. 

\begin{thm}\label{thm-log-unitary}
Suppose $\g_0$ and $U$ as in Assumption~\ref{ass-log-unitary}. Then there is $\g\in S(\ell(\x),g)$ 
such that $U-\exp(i\Op(\g))\in \Op S(\jap{\x}^{-\infty},g)$, and $\g-\g_0\in S(\ell(\x)/\jap{\x},\tilde g)$. 
\end{thm}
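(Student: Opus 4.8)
The plan is to construct $\g$ as an asymptotic sum $\g \sim \g_0 + \g_1 + \g_2 + \cdots$ where $\g_j \in S(\ell(\x)\jap{\x}^{-j},\tilde g)$, chosen so that $\exp(i\Op(\g_0+\cdots+\g_N))$ agrees with $U$ modulo $\Op S(\jap{\x}^{-N-1+\d},\tilde g)$ (hence modulo $\Op S(\jap{\x}^{-N-1+2\d},g)$) for every $N$ and every $\d>0$, and then take a Borel-type asymptotic summation. The first step is to record the symbol-calculus fact that for a real-valued symbol $b$, the operator $\exp(i\Op(b))$ is again a pseudodifferential operator whose symbol is $e^{ib}(1 + \text{lower order})$; here one must be slightly careful because $\g_0 \in S(\ell(\x),g)$ is not quite a symbol of order $0$ (it has the logarithmic loss in $x$-derivatives), so one works systematically in the $\tilde g$-calculus, where $e^{i\g_0}\in S(1,\tilde g)$ and the composition calculus is available since $\tilde g$ is $\sigma$-temperate with gain $\jap{\x}^{-1}\ell(\x)^2 \to 0$. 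This is essentially the content already used in Section~\ref{section-Srep} and in Appendix~\ref{appendix-funccal}, so it can be invoked rather than reproved.

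The core is the inductive step. Suppose we have $\g^{(N)} = \g_0 + \cdots + \g_N$ with $U \exp(-i\Op(\g^{(N)})) = \Op(1 + r_N)$, $r_N \in S(\ell(\x)\jap{\x}^{-N-1},\tilde g)$ (the base case $N=0$ is Assumption~\ref{ass-log-unitary}: $U - \Op(e^{i\g_0}) \in \Op S(\ell(\x)\jap{\x}^{-1},\tilde g)$, and $\Op(e^{i\g_0})\exp(-i\Op(\g_0))$ has symbol $1 + S(\ell(\x)^2\jap{\x}^{-1},\tilde g)$). I want to choose $\g_{N+1} \in S(\ell(\x)\jap{\x}^{-N-1},\tilde g)$ so that $U\exp(-i\Op(\g^{(N)}+\g_{N+1}))$ has symbol $1$ modulo one order better. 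Writing $\exp(-i\Op(\g^{(N)}+\g_{N+1})) = \exp(-i\Op(\g_{N+1}))\exp(-i\Op(\g^{(N)})) + (\text{commutator correction})$ via the Baker--Campbell--Hausdorff/Duhamel expansion — the correction involves $[\Op(\g_{N+1}),\Op(\g^{(N)})]$, whose symbol is $O(\ell(\x)\jap{\x}^{-N-1}) \cdot O(\jap{\x}^{-1}\ell(\x))$ and hence already of the next order — one finds $U\exp(-i\Op(\g^{(N)}+\g_{N+1})) = \Op(1 + r_N) \Op(e^{-i\g_{N+1}}) + (\text{next order})= \Op(1 + r_N - i\g_{N+1}) + (\text{next order})$. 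Since $r_N$ is a symbol but need not be real, one must split: unitarity of $U$ and of the exponentials forces $\Op(1+r_N)$ to be unitary modulo the next order, which gives $r_N + \bar r_N \in S(\ell(\x)^2\jap{\x}^{-N-2},\tilde g)$; hence $r_N$ is purely imaginary modulo higher order, and the choice $\g_{N+1} := -i r_N$ (real-valued to the relevant order) kills the $S(\ell(\x)\jap{\x}^{-N-1},\tilde g)$ part. This produces a new remainder $r_{N+1}\in S(\ell(\x)\jap{\x}^{-N-2},\tilde g)$, closing the induction.

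Finally I take $\g$ to be an asymptotic sum of the $\g_j$ in $\bigcap_{\d>0} S(\ell(\x)\jap{\x}^{\d},\tilde g) \subset \bigcap_{\d>0} S(\jap{\x}^{\d},g)$; more precisely $\g_j \in S(\ell(\x)\jap{\x}^{-j},\tilde g)$ for $j\ge 1$ and $\g_0 \in S(\ell(\x),g)$, so $\g - \g_0 = \sum_{j\ge1}\g_j \in S(\ell(\x)\jap{\x}^{-1},\tilde g)$ as claimed, and $\g \in S(\ell(\x),g)$. The asymptotic summation is standard (Borel), and since each stage improves the order by one full power of $\jap{\x}^{-1}$, the resulting $U - \exp(i\Op(\g))$ lies in $\Op S(\jap{\x}^{-\infty},g)$. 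I expect the main obstacle to be bookkeeping the $\ell(\x)$-losses through the repeated compositions and the Duhamel expansion of $\exp(-i\Op(\g^{(N)}+\g_{N+1}))$ — one must check that the metric $\tilde g$ is genuinely $\sigma$-temperate and that the $\ell(\x)^{|\a|}$ factors accumulated by $x$-differentiation never overwhelm the $\jap{\x}^{-1}$ gain of the calculus, i.e., that $\ell(\x)^2 \jap{\x}^{-1}$ is an admissible gain function; granting that (it is, since any power of the logarithm is absorbed by an arbitrarily small power of $\jap{\x}$), everything else is routine iteration.
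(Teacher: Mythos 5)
Your plan is correct and is essentially the paper's own proof: iteratively peel off real-valued corrections $\g_k$ using (i) the fact that $e^{i\Op(\f)}-\Op(e^{i\f})$ is one order lower in the $\tilde g$-calculus (Lemma~\ref{lem-log-1}, established via Beals's characterization --- this is precisely the step you propose to invoke rather than reprove), (ii) unitarity to write the remainder as $\Op(e^{i\g_{k+1}})$ with $\g_{k+1}$ real, (iii) a Duhamel/commutator identity to merge the exponentials (Lemma~\ref{lem-log-2}), and (iv) an asymptotic summation. The only discrepancy is bookkeeping: the paper's corrections and remainders lie in $S(\ell(\x)^{k}/\jap{\x}^{k},\tilde g)$ rather than $S(\ell(\x)/\jap{\x}^{k},\tilde g)$, since each Poisson bracket costs an extra factor of $\ell(\x)$, but as you yourself observe this is harmless because every power of $\ell(\x)$ is absorbed by an arbitrarily small power of $\jap{\x}$.
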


\begin{lem}\label{lem-log-1}
Let $\f\in S(\ell(\x),g)$, real-valued, and $\pa_\x\f\in S(\jap{\x}^{-1},g)$.  
Then $\Op(\f)$ is essentially self-adjoint and 
$\exp(it\Op(\f)) \in \Op S(1,\tilde g)$, $t\in\re$. Moreover, 
\[
e^{it\Op(\f)}- \Op(e^{it\f}) \in \Op S(\ell(\x)/\jap{\x},\tilde g),
\]
and is uniformly bounded for $t\in [0,1]$. 
\end{lem}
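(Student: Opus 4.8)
The plan is to prove Lemma~\ref{lem-log-1} by a Hamilton--Jacobi/transport construction combined with the pseudodifferential calculus for the metric $\tilde g$. First I would check that $\Op(\f)$ is essentially self-adjoint: since $\f\in S(\ell(\x),g)$ and $\ell(\x)=\jap{\log\jap{\x}}=o(\jap{\x}^\d)$ for every $\d>0$, the operator $\Op(\f)$ is bounded from $H^{s}$ to $H^{s-\d}$ for all $\d>0$, and $[\jap{D_x},\Op(\f)]$ has symbol in $S(\ell(\x)\jap{\x}^{-1}\cdot\jap{\x},g)\subset S(\jap{\x}^\d,g)$ — actually the gain from $\pa_\x\f\in S(\jap{\x}^{-1},g)$ makes the commutator order $\ell(\x)$, hence of order $\d$ for any $\d>0$ — so Nelson's commutator theorem (as used in Lemma~\ref{lem-abspec-3}, \cite{Reed-Simon} Theorem~X.36) applies with $N=\jap{D_x}$ and gives essential self-adjointness on $H^1(\re^d)$, and by the same token on $H^s$ for all $s$.

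The core is to show $e^{it\Op(\f)}\in\Op S(1,\tilde g)$ with the stated approximation. The plan is to set $U(t)=e^{it\Op(\f)}$, posit an ansatz $\Op(b(t,\cdot))$ with $b(t,\cdot)\in S(1,\tilde g)$ solving $\pa_t \Op(b(t))=i\Op(\f)\Op(b(t))$, $b(0)=1$, up to a smoothing remainder. At the symbol level this is the Heisenberg-type equation $\pa_t b = i\,\f\#b$; writing $\f\#b = \f b + \frac{1}{2i}\{\f,b\}+\cdots$ (Moyal expansion), the leading transport equation is $\pa_t b_0 = i\f b_0$, solved by $b_0(t)=e^{it\f}$. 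One checks $e^{it\f}\in S(1,\tilde g)$ uniformly for $t\in[0,1]$: every $\x$-derivative hitting $e^{it\f}$ brings down a factor $it\,\pa_\x\f\in S(\jap{\x}^{-1},g)$, giving the required $\jap{\x}^{-|\b|}$ decay, while each $x$-derivative brings $it\,\pa_x\f$, and $\pa_x\f\in S(\ell(\x),g)$ contributes a factor $\ell(\x)$ per derivative, exactly matching the weight $\ell(\x)^{|\a|}$ in the definition of $S(1,\tilde g)$. The subprincipal corrections solve $\pa_t b_j = i\f b_j + (\text{lower order, built from } b_0,\dots,b_{j-1})$, and since each application of $\#$ beyond the leading term costs one power of $\jap{\x}^{-1}$ together with one extra $\ell(\x)$ (derivative in $x$) — i.e. a net gain of $\ell(\x)\jap{\x}^{-1}$ in the $\tilde g$-calculus — one gets $b_j\in S((\ell(\x)/\jap{\x})^j,\tilde g)$ uniformly in $t\in[0,1]$ by integrating the transport equation (Duhamel) and using that the source term is already known to lie in the right class. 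Taking an asymptotic sum $b\sim\sum_j b_j$ and comparing: $\Op(b(t))$ solves the ODE up to $\Op S(\jap{\x}^{-\infty},\tilde g)$, so by Duhamel and the uniform $H^s$-bounds on $U(t)$ one concludes $U(t)-\Op(b(t))\in\Op S(\jap{\x}^{-\infty},\tilde g)$, whence $e^{it\Op(\f)}\in\Op S(1,\tilde g)$ and $e^{it\Op(\f)}-\Op(e^{it\f})=\Op(b-b_0)+(\text{smoothing})\in\Op S(\ell(\x)/\jap{\x},\tilde g)$, uniformly for $t\in[0,1]$.

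The step I expect to be the main obstacle is controlling the symbol classes under the non-standard metric $\tilde g=\ell(\x)^2 dx^2+\jap{\x}^{-2}d\x^2$: one must verify $\tilde g$ is admissible in Hörmander's sense (slowly varying, temperate, and $\tilde g\le \tilde g^\s$ with the appropriate gain function $h^2=\sup \tilde g/\tilde g^\s=\ell(\x)^2/\jap{\x}^2\to 0$), so that the Moyal composition formula with asymptotic remainders is available and each composition genuinely gains a factor $\ell(\x)/\jap{\x}$. Here the logarithmic loss $\ell(\x)^{|\a|}$ in the $x$-derivatives is what makes the bookkeeping delicate — one has to confirm that the product of the derivative-loss factor from the calculus ($\ell(\x)$ per $x$-derivative in the asymptotic expansion) and the decay factor ($\jap{\x}^{-1}$ per order) still yields a convergent (in the asymptotic sense) series and that $h=\ell(\x)/\jap{\x}$ is itself a temperate weight, which it is since $\log\jap{\x}$ grows polynomially slowly. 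Once admissibility of $\tilde g$ and the gain $\ell(\x)/\jap{\x}$ per symbol-composition step are established, the rest is the routine Duhamel/asymptotic-summation argument sketched above.
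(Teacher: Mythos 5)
Your proposal is correct in substance but proves the main assertion by a different route than the paper. You build a parametrix $\Op(b(t))$, $b\sim\sum_j b_j$ with $b_0=e^{it\f}$ and $b_j\in S\bigl((\ell(\x)/\jap{\x})^j,\tilde g\bigr)$, by solving the transport hierarchy coming from $\pa_t b=i\,\f\# b$, and then compare with $e^{it\Op(\f)}$ by Duhamel. The paper instead (i) establishes $e^{it\Op(\f)}\in\Op S(1,\tilde g)$ directly via Beals' characterization, using the identity $L_j[e^{it\Op(\f)}]=i\int_0^t e^{is\Op(\f)}\Op(\pa_{\x_j}\f)e^{i(t-s)\Op(\f)}\,ds$ (and its analogue for $K_j$, iterated) together with the facts $\pa_\x\f\in S(\jap{\x}^{-1},g)$, $\pa_x\f\in S(\ell(\x),g)$; and (ii) obtains the principal symbol from the single identity
\[
e^{it\Op(\f)}-\Op(e^{it\f})=i\int_0^t e^{is\Op(\f)}\bigl(\Op(\f)\Op(e^{i(t-s)\f})-\Op(\f\,e^{i(t-s)\f})\bigr)\,ds,
\]
where the integrand lies in $\Op S(\ell(\x)/\jap{\x},\tilde g)$ by one application of the Moyal expansion. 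The paper's route is shorter because only the first-order comparison is needed, not the full asymptotic series. Your route yields more (a complete symbolic expansion of the propagator), and your identification of the admissibility of $\tilde g$ and of the gain function $h=\ell(\x)/\jap{\x}$ as the key technical point is exactly right. One caveat you should not gloss over: at the last step, concluding that $U(t)-\Op(b(t))=-U(t)\int_0^t U(s)^{-1}R(s)\,ds$ lies in $\Op S(\jap{\x}^{-\infty},\tilde g)$ — i.e.\ is a genuine pseudodifferential operator, not merely bounded $H^{-N}\to H^N$ for all $N$ — requires commutator bounds $K^\a L^\b$ on $U(s)^{\pm1}$, which is precisely the Beals-type analysis the paper carries out. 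So the parametrix does not let you bypass that machinery; it sits on top of it. Also, a minor slip in your self-adjointness paragraph: the leading term of $[\jap{D_x},\Op(\f)]$ is $\tfrac1i\tfrac{\x}{\jap\x}\cdot\pa_x\f=O(\ell(\x))$, which uses $\pa_x\f\in S(\ell(\x),g)$ rather than the hypothesis on $\pa_\x\f$; the conclusion via the commutator theorem with $N=\jap{D_x}$ is unaffected and matches the paper.
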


\begin{proof}
The essential self-adjointness of $\Op(\f)$ follows by the commutator theorem with an auxiliary operator
$N=\jap{D_x}$. 

In order to show $e^{it\Op(\f)}\in \Op S(1,\tilde g)$, we use Beal's characterization. 
Let $K_j$ and $L_j$ ($j=1,\dots,d$) as in Appendix~\ref{appendix-funccal}. 
We note, by a simple commutator argument as in Appendix~\ref{appendix-funccal}, we can show, for any $k,\ell\in\ze$, $T>0$, 
\[
\sup_{|t|\leq T} \bignorm{\jap{D_x}^k\ell(D_x)^\ell e^{it\Op(\f)} \ell(D_x)^{-\ell}
\jap{D_x}^{-k}} _{L^2\to L^2} <\infty.
\]
We compute, for example, 
\[
L_j[e^{it\Op(\f)}] 
=i\int_0^t e^{is\Op(\f)} L_j[\Op(\f)] e^{i(t-s)\Op(\f)} ds. 
\]
Since $L_j[\Op(\f)]=\Op(\pa_{\x_j}\f) \in \Op S(\jap{\x}^{-1},g)$, we learn 
$\jap{D_x} L_j[e^{it\Op(\f)}]$ is bounded in $H^s$ with any $s\in\re$. 
Similarly, since $K_j[\Op(\f)]=\Op(\pa_{x_j}\f)\in \Op S(\ell(\x),g)$, 
we learn $\ell(D_x)^{-1} K_j[e^{it\Op(\f)}]$ is bounded in $H^s$, $\forall s\in\re$. 
Iterating this procedure, we learn, for any $\a,\b\in\ze_+^d$,
\[
\ell(D_x)^{-|\a|} \jap{D_x}^{|\b|} (K^\a L^\b[e^{it\Op(\f)}]):\ H^s\to H^s, 
\text{ bounded}, 
\]
with any $s\in\re$. By Beal's characterization, this implies $e^{it\Op(\f)}\in \Op S(1,\tilde g)$, 
and bounded locally uniformly in $t$. 

Then we show the principal symbol of $e^{it\Op(\f)}$ is $e^{it\f}$. We have 
\begin{align*}
&e^{it\Op(\f)}-\Op(e^{it\f}) 
= \int_0^t \frac{d}{ds}\bigpare{e^{is\Op(\f)} \Op(e^{i(t-s)\f})}ds \\
&\quad =  i\int_0^t e^{is\Op(\f)}\bigpare{\Op(\f)\Op(e^{i(t-s)\f})-\Op(\f e^{i(t-s)\f})} ds\\
&\quad \in \Op S(\ell(\x)/\jap{\x},\tilde g)
\end{align*}
by the asymptotic expansion. 
\end{proof}

In particular, we have 
\[
U e^{-i\Op(\g_0)}-1 \in \Op S(\ell(\x)/\jap{\x},\tilde g), 
\]
and hence there is a real-valued symbol $\g_1\in S(\ell(\x)/\jap{\x},\tilde g)$ such that 
\[
U e^{-i\Op(\g_0)}  - \Op(e^{i\g_1})\in \Op S(\ell(\x)^2/\jap{\x}^2,\tilde g). 
\]
This implies, 
\begin{equation}\label{eq-log-1}
U e^{-i\Op(\g_0)} e^{-i\Op(\g_1)} -1 \in \Op S(\ell(\x)^2/\jap{\x}^2,\tilde g). 
\end{equation}
We use the next lemma to rewrite $e^{-i\Op(\g_0)} e^{-i\Op(\g_1)}$. 

\begin{lem}\label{lem-log-2}
Let $\f\in S(\ell(\x),g)$, real-valued, and $\pa_\x\f\in S(\jap{\x}^{-1},g)$. 
Let $\y\in S(\ell(\x)^{k}/\jap{\x}^k,\tilde g)$, real-valued, with $k\geq 1$. Then 
\[
e^{i\Op(\y)} e^{i\Op(\f)}- e^{i\Op(\f+\y)} \in \Op S(\ell(\x)^{k+1}/\jap{\x}^{k+1},\tilde g).
\]
\end{lem}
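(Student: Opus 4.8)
The plan is to compare the two one-parameter unitary families by a Duhamel argument, reducing the whole statement to a single commutator estimate. First I would check that all three of $\Op(\f)$, $\Op(\y)$ and $\Op(\f+\y)$ fall under Lemma~\ref{lem-log-1}: for $\f$ this is assumed, and from $\y\in S(\ell(\x)^k/\jap{\x}^k,\tilde g)$, i.e. $\bigabs{\pa_x^\a\pa_\x^\b\y}\leq C_{\a\b}\ell(\x)^{k+|\a|}\jap{\x}^{-k-|\b|}$, one gets for $k\geq 1$ that $\ell(\x)^{k+|\a|}\jap{\x}^{-k}\leq C\ell(\x)$, hence $\y\in S(\ell(\x),g)$, and, with one further $\x$-derivative, $\ell(\x)^{k+|\a|}\jap{\x}^{-k}\leq C$, hence $\pa_\x\y\in S(\jap{\x}^{-1},g)$; the same bounds, and real-valuedness, then hold for $\f+\y$. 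Thus $e^{it\Op(\f)}$, $e^{it\Op(\y)}$ and $e^{it\Op(\f+\y)}$ are well-defined unitaries, lie in $\Op S(1,\tilde g)$, and have $\tilde g$-seminorms bounded uniformly for $t\in[0,1]$.

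Set $P=\Op(\f)$, $Q=\Op(\y)$ (so $\Op(\f+\y)=P+Q$ by linearity of $\Op$) and $g(t)=e^{itQ}e^{itP}e^{-it(P+Q)}$. It is enough to show $g(1)-I\in\Op S(\ell(\x)^{k+1}/\jap{\x}^{k+1},\tilde g)$, for then $e^{i\Op(\y)}e^{i\Op(\f)}-e^{i\Op(\f+\y)}=(g(1)-I)e^{i(P+Q)}$ remains in this class by composition in the $\tilde g$-calculus. Differentiating $g$ and using $e^{itP}(P+Q)e^{-itP}=P+e^{itP}Qe^{-itP}$, a short computation gives $g'(t)=iB(t)g(t)$ with
\[
B(t)=Q-e^{itQ}e^{itP}\,Q\,e^{-itP}e^{-itQ}=-i\,e^{itQ}\Bigpare{\int_0^t e^{isP}[P,Q]e^{-isP}\,ds}e^{-itQ};
\]
in particular $B(0)=0$, and by Duhamel $g(1)-I=i\int_0^1 B(s)g(s)\,ds$.

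The crux is the estimate $[P,Q]=[\Op(\f),\Op(\y)]\in\Op S(\ell(\x)^{k+1}/\jap{\x}^{k+1},\tilde g)$. By H\"ormander's Weyl calculus for the admissible metric $\tilde g$ (its Planck weight is $\ell(\x)/\jap{\x}\leq 1$), the symbol of the commutator equals $\tfrac1i\{\f,\y\}=\tfrac1i\bigpare{\pa_\x\f\cdot\pa_x\y-\pa_x\f\cdot\pa_\x\y}$ modulo a remainder. This is the one place where the full hypothesis $\pa_\x\f\in S(\jap{\x}^{-1},g)$ is needed, not merely $\f\in S(\ell(\x),g)$: with it, since $\pa_x\y\in S(\ell(\x)^{k+1}/\jap{\x}^{k},\tilde g)$ one gets $\pa_\x\f\cdot\pa_x\y\in S(\ell(\x)^{k+1}/\jap{\x}^{k+1},\tilde g)$, and likewise, since $\pa_x\f\in S(\ell(\x),g)$ and $\pa_\x\y\in S(\ell(\x)^{k}/\jap{\x}^{k+1},\tilde g)$, also $\pa_x\f\cdot\pa_\x\y\in S(\ell(\x)^{k+1}/\jap{\x}^{k+1},\tilde g)$; without this one would only obtain $S(\ell(\x)^{k+2}/\jap{\x}^{k+1},\tilde g)$. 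The remainder of the Moyal expansion of the commutator picks up further factors of $\ell(\x)/\jap{\x}$ and lies, by the generic calculus, in $S(\ell(\x)^{k+4}/\jap{\x}^{k+3},\tilde g)\subset S(\ell(\x)^{k+1}/\jap{\x}^{k+1},\tilde g)$, using that $\ell(\x)^3/\jap{\x}^2$ is bounded. Conjugating by the $\Op S(1,\tilde g)$-unitaries $e^{isP}$ and $e^{itQ}$ preserves $\Op S(\ell(\x)^{k+1}/\jap{\x}^{k+1},\tilde g)$ with seminorms uniform in $s,t\in[0,1]$, so $B(t)\in\Op S(\ell(\x)^{k+1}/\jap{\x}^{k+1},\tilde g)$ uniformly in $t$.

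To finish, since $B(s)$ and $g(s)$ have $\tilde g$-seminorms bounded uniformly in $s$ and their Weyl symbols depend smoothly on $s$, the product $B(s)g(s)$ stays in $\Op S(\ell(\x)^{k+1}/\jap{\x}^{k+1},\tilde g)$ uniformly, and integrating the symbols over $s\in[0,1]$ keeps us in that class; hence $g(1)-I$, and therefore $e^{i\Op(\y)}e^{i\Op(\f)}-e^{i\Op(\f+\y)}$, belongs to $\Op S(\ell(\x)^{k+1}/\jap{\x}^{k+1},\tilde g)$. I expect the only genuinely delicate point to be the sharp commutator estimate: the naive calculus produces one superfluous factor of $\ell(\x)$, and removing it forces one to use $\pa_\x\f\in S(\jap{\x}^{-1},g)$ term by term in the Poisson bracket; everything else is routine Duhamel bookkeeping plus the verification of the hypotheses of Lemma~\ref{lem-log-1} for $\y$ and $\f+\y$.
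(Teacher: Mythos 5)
Your proposal is correct and follows essentially the same route as the paper: a Duhamel argument on a product of the three exponentials (the paper uses $e^{it(A+B)}e^{-itA}e^{-itB}$, you use $e^{itQ}e^{itP}e^{-it(P+Q)}$, a cosmetic difference), reducing everything to the estimate $[\Op(\f),\Op(\y)]\in\Op S(\ell(\x)^{k+1}/\jap{\x}^{k+1},\tilde g)$ together with the fact that all the exponentials lie in $\Op S(1,\tilde g)$. Your explicit verification of that commutator bound via the Poisson bracket, using $\pa_\x\f\in S(\jap{\x}^{-1},g)$ to kill the spurious factor of $\ell(\x)$, and your check that $\y$ and $\f+\y$ satisfy the hypotheses of Lemma~\ref{lem-log-1}, supply details the paper leaves implicit.
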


\begin{proof}
We have, for any self-adjoint operators $A$ and $B$, at least formally, 
\begin{align*}
&e^{i(A+B)} e^{-iA} e^{-iB} -1 
=\int_0^1 \frac{d}{dt} \Bigpare{e^{it(A+B)} e^{-itA} e^{-itB}} dt \\
&\quad = i \int_0^1 \Bigpare{ e^{it(A+B)} (A+B-A) e^{-itA} e^{-itB} 
- e^{-t(A+B)} e^{-itA} B e^{-itB}} dt \\
&\quad = i \int_0^1 e^{it(A+B)} \bigbrac{B, e^{-itA}} e^{-itB} dt \\
&\quad = -\int_0^1\biggpare{\int_0^t e^{it(A+B)} e^{i(t-s)A} [A,B] e^{-isA} e^{-itB} ds} dt.
\end{align*}
This computation is easily justified when $A=\Op(\f)$ and $B=\Op(\y)$, and since 
$[\Op(\f),\Op(\y)]\in \Op S(\ell(\x)^{k+1}/\jap{\x}^{k+1},\tilde g)$, 
$e^{it\Op(\f)} \in \Op S(1,\tilde g)$, etc., we have 
\[
e^{i\Op(\f+\y)}e^{-i\Op(\f)}e^{-i\Op(\y)} -1 \in \Op S(\ell(\x)^{k+1}/\jap{\x}^{k+1},\tilde g), 
\]
and this implies the assertion. 
\end{proof}

\begin{proof}[Proof of Theorem~\ref{thm-log-unitary}]
Combining \eqref{eq-log-1} with lemma~\ref{lem-log-2}, we have 
\[
U e^{-i\Op(\g_0+\g_1)} -1 \in \Op S(\ell(\x)^2/\jap{\x}^2,\tilde g). 
\]
We note $\g_0+\g_1\in S(\ell(\x),g)+S(\ell(\x)^2/\jap{\x},\tilde g) \subset S(1,g)$. 
Iterating this procedure, we construct $\g_k\in S(\ell(\x)^{k}/\jap{\x}^k,\tilde g)$, 
real-valued, such that 
\[
U e^{-i\Op(\g_0+\cdots+\g_k)} -1 \in \Op S(\ell(\x)^{k+1}/\jap{\x}^{k+1},\tilde g). 
\]
for $k=2,3,\dots$. Then we choose an asymptotic sum: $\g\sim \sum_{k=0}^\infty \g_k$, 
i.e., $\g\in S(\ell(\x),g)$ and 
\[
\g-\sum_{k=0}^N\g_k \in S(\ell(\x)^{N+1}/\jap{\x}^{N+1},\tilde g)
\]
for any $N>0$. Then we have 
\[
U e^{-i\Op(\g)}-1\in \Op S(\jap{\x}^{-\infty},\tilde g)=\Op S(\jap{\x}^{-\infty},g),
\]
and we complete the proof of Theorem~\ref{thm-log-unitary}. 
\end{proof}

%%%%%%%%%%
\section{Trace class scattering for unitary operators}
\label{appendix-KB}

The next theorem, the unitary version of the Kuroda-Birman theorem, 
seems well-known, but the author could not find an appropriate 
reference. Here we give a proof for the completeness. 

\begin{thm}\label{thm-trace-scat}
Let $U_1$ and $U_2$ be unitary operators on a separable Hilbert space, 
and suppose $U_1-U_2$ is a trace class operator. Then 
$\s_{\mathrm{ac}}(U_1)=\s_{\mathrm{ac}}(U_2)$.  
\end{thm}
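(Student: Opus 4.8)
The plan is to reduce the statement to the classical (self-adjoint) Kuroda--Birman theorem by means of the Cayley transform. First I would normalize the situation: since $\s_{\mathrm{p}}(U_1)$ and $\s_{\mathrm{p}}(U_2)$ are at most countable, one can choose $\alpha\in\re$ with $1\notin\s_{\mathrm{p}}(e^{i\alpha}U_1)\cup\s_{\mathrm{p}}(e^{i\alpha}U_2)$; replacing $U_j$ by $e^{i\alpha}U_j$ preserves the hypothesis (the difference is merely multiplied by $e^{i\alpha}$) and only rotates $\s_{\mathrm{ac}}$, so we may assume $I-U_1$ and $I-U_2$ are injective.

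Then I would form the Cayley transforms $A_j=i(I+U_j)(I-U_j)^{-1}$, which are self-adjoint on $\dom(A_j)=\Ran(I-U_j)$ with inverse Cayley transform $U_j=(A_j-i)(A_j+i)^{-1}$. A one-line computation, $A_j+i=i\bigpare{(I+U_j)+(I-U_j)}(I-U_j)^{-1}=2i(I-U_j)^{-1}$, shows
\[
(A_j+i)^{-1}=\tfrac{1}{2i}(I-U_j)\quad\text{as bounded operators on the whole space,}
\]
hence
\[
(A_1+i)^{-1}-(A_2+i)^{-1}=\tfrac{1}{2i}(U_2-U_1),
\]
which is trace class by assumption. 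Applying the resolvent form of the self-adjoint Kuroda--Birman theorem (see, e.g., Reed--Simon \cite{Reed-Simon} \S XI, or Yafaev \cite{Yafaev-AMS2009}), the wave operators $\slim_{t\to\pm\infty}e^{itA_1}e^{-itA_2}P_{\mathrm{ac}}(A_2)$ exist and are complete; in particular $A_1$ and $A_2$ restricted to their absolutely continuous subspaces are unitarily equivalent, so $\s_{\mathrm{ac}}(A_1)=\s_{\mathrm{ac}}(A_2)$.

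Finally I would transfer this back. The function $\phi(\lambda)=(\lambda-i)(\lambda+i)^{-1}$ is a real-analytic diffeomorphism of $\re$ onto $S^1\setminus\{1\}$, so it maps Lebesgue-null sets to arc-length-null sets and conversely; since $U_j=\phi(A_j)$ and the spectral measure of $U_j$ is the $\phi$-pushforward of that of $A_j$, this gives $\mathcal{H}_{\mathrm{ac}}(U_j)=\mathcal{H}_{\mathrm{ac}}(A_j)$. Any wave operator intertwining $A_1$ with $A_2$ also intertwines $\phi(A_1)=U_1$ with $\phi(A_2)=U_2$, so the unitary equivalence of the absolutely continuous parts of $A_1,A_2$ yields one for $U_1,U_2$, whence $\s_{\mathrm{ac}}(U_1)=\s_{\mathrm{ac}}(U_2)$.

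The step I expect to cost the most care is the Cayley-transform bookkeeping: checking that $(A_j+i)^{-1}=\tfrac1{2i}(I-U_j)$ genuinely holds as a bounded, everywhere-defined operator (so that the resolvent difference is literally $\tfrac1{2i}(U_2-U_1)$), and that passing through $\phi$ really preserves absolutely continuous subspaces --- the single point $1$, image of $\infty$, is a null set and is harmless but must be accounted for. A more self-contained alternative, avoiding the classical theorem, would construct the discrete-time wave operators $\slim_{n\to\pm\infty}U_1^{\,n}U_2^{-n}P_{\mathrm{ac}}(U_2)$ directly by a Cook-type argument, writing $U_1-U_2=\sum_k\jap{\cdot,f_k}g_k$ with $\sum_k\norm{f_k}\,\norm{g_k}<\infty$ and using density of vectors with sufficiently regular spectral density in $\mathcal{H}_{\mathrm{ac}}(U_2)$; there the delicate point is obtaining $\ell^1$-in-$n$ (rather than merely $\ell^2$-in-$n$) summability of $\norm{(U_1-U_2)U_2^{-n}\psi}$.
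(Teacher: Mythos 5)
Your proposal is correct and follows essentially the same route as the paper: rotate so that $1$ is not an eigenvalue of either operator, pass to the Cayley transforms, observe that the resolvent difference equals $\tfrac{1}{2i}(U_2-U_1)$ and is therefore trace class, invoke the self-adjoint Kuroda--Birman theorem, and transport the conclusion back through the fractional-linear map. Your bookkeeping (the identity $(A_j+i)^{-1}=\tfrac1{2i}(I-U_j)$ and the null-set preservation under $\phi$) is if anything slightly more careful than the paper's, which states the analogous resolvent identity with a harmless sign/branch discrepancy.
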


\begin{proof}
Since the eigenvalues of $U_1$ and $U_2$ are at most countable, we can find $\th\in\re$
such that $e^{-i\th}$ is not an eigenvalue of both $U_1$ and $U_2$. Then, by replacing 
$U_1$ and $U_2$ by $e^{i\th}U_1$ and $e^{i\th}U_2$, respectively, we may suppose 
1 is not an eigenvalue of both $U_1$ and $U_2$. Then we can define the Cayley transform of $U_1$
and $U_2$ by 
\[
H_j =i(U_j+1)(U_j-1)^{-1}, \quad j=1,2. 
\]
By the definition, we have
\[
U_j =(H_j+i)(H_j-i)^{-1}= 1+2i(H_j-i)^{-1}, \quad j=1,2, 
\]
and hence 
\[
(H_1+i)^{-1}-(H_2+i)^{-1} = \frac{1}{2i}(U_1-U_2),
\]
is in the trace class. Thus we can apply the Kuroda-Birman theorem (\cite{Reed-Simon}, Theorem~XI.9)
to learn $\s_{\mathrm{ac}}(H_1)= \s_{\mathrm{ac}}(H_2)$. This implies the assertion since
\[
\s_{\mathrm{ac}}(U_j)= \bigset{(s-i)(s+i)^{-1}}{s\in\s_{\mathrm{ac}}(H_j)}, 
\quad j=1,2,
\]
by the spectral decomposition theorem. 
\end{proof}

%%%%%%


\begin{thebibliography}{99}

\bibitem{Derezinski-Gerard} Derezi\'nski, J., G\'erard, C.: 
{\em Scattering Theory of Classical and Quantum N-Particle Systems}. 
Texts and Monographs in Physics. Springer-Verlag, Berlin, 1997.

\bibitem{Dimassi-Sjostrand} Dimassi, M., Sj\"ostrand, J.:
{\em Spectral asymptotics in the semi-classical limit}. 
London Mathematical Society Lecture Note Series, {\bf 268}. Cambridge Univ. Press, Cambridge, 1999. 

\bibitem{FRT} Fern\'andez, C., Richard, S., Tiedra de Aldecoa, R.: 
Commutator methods for unitary operators. 
J. Spectr. Theory {\bf 3} (2013), 271--292. 

\bibitem{Hormander} H\"ormander, L.: 
{\em The Analysis of Linear Partial Differential Operators}.  I--IV, Springer-Verlag, New York, 1983--1985. 

\bibitem{Isozaki-Kitada1} Isozaki, H., Kitada, H.: Modified wave operators with time-independent modifiers. 
J. Fac. Sci. Univ. Tokyo Sect. IA Math. {\bf 32} (1985), no. 1, 77--104.

\bibitem{Isozaki-Kitada2} Isozaki, H., Kitada, H.: Scattering matrices for two-body Schr\"odinger operators. 
Sci. Papers College Arts Sci. Univ. Tokyo {\bf 35} (1985), no. 2, 81--107.

\bibitem{N2014} Nakamura, S.:
Modified wave operators for discrete Schr\"odinger operators with long-range perturbations.
J. Math. Phys. {\bf 55} (2014), 112101 (8 pages). 

\bibitem{N2016} Nakamura, S.: Microlocal properties of scattering matrices. 
Comm. Partial Differential Equations {\bf 41} (2016), 894--912. 

\bibitem{N2017} Nakamura, S.: Microlocal resolvent estimates, revisited. 
J. Math. Sci. Univ. Tokyo {\bf 24} (2017), 239--257. 

\bibitem{N2018} Nakamura, S.: Long-range scattering matrix for Schr\"odinger-type operators. 
Preprint, 2018 ({\tt https://arxiv.org/abs/1804.05488})

\bibitem{Reed-Simon} Reed, M., Simon, B.: {\em The Methods of Modern Mathematical Physics}, 
Volumes I--IV, Academic Press, 1972--1979. 

\bibitem{Tadano} Tadano, Y.: Long-range scattering for discrete Schr\"odinger operators.
 Ann. Henri Poincar\'e {\bf 20} (2019), 1439--1469.

\bibitem{Yafaev-1998}   Yafaev, D. R.: The Scattering amplitude for the Schr\"odinger equation 
with a long-range potential. 
Commun. Math. Phys. {\bf 191}, (1998) 183--218. 

\bibitem{Yafaev-SLNM1735} Yafaev, D. R.: {\em Scattering Theory: Some Old and New Problems}. 
Springer Lecture Notes in Math. {\bf 1735}, 2000.

\bibitem{Yafaev-AMS2009}  Yafaev, D. R.: {\em Mathematical Scattering Theory. Analytic Theory}. 
Mathematical Surveys and Monographs, 158. American Mathematical Society, Providence, RI, 2010. 

\bibitem{Zworski} Zworski, M.: 
{\em Semiclassical analysis}. 
Graduate Studies in Mathematics, {\bf 138}. American Math. Soc., Providence, RI, 2012.
\end{thebibliography}
\end{document}